\numberwithin{equation}{section}
\newtheorem{theorem}{Theorem}[section]
\newtheorem{lemma}[theorem]{Lemma}
\newtheorem{claim}[theorem]{Claim}
\theoremstyle{definition}
\newtheorem{definition}[theorem]{Definition}
\newcommand{\ifshort}[2]{\ifthenelse{\equal{\useshort}{yes}}{#1}{#2}}
\newcommand{\onlyfull}[1]{\ifshort{}{#1}}
\newcommand{\useshort}{no}
\newcommand{\Oh}{\mathcal{O}}
\def\cqedsymbol{\ifmmode$\lrcorner$\else{\unskip\nobreak\hfil
\penalty50\hskip1em\null\nobreak\hfil$\lrcorner$
\parfillskip=0pt\finalhyphendemerits=0\endgraf}\fi} 
\newcommand{\cqed}{\renewcommand{\qed}{\cqedsymbol}}
\newcommand{\stepref}[1]{Claim~\ref{#1}}
\newenvironment{step}[1]{\begin{claim}\label{#1}}{\end{claim}}
\newenvironment{stepproof}{\begin{proof}}{\cqed\end{proof}}
\newcommand{\pmc}{\Omega}
\newcommand{\FF}{\mathcal{F}}
\newcommand{\cG}{\widehat{G}}
\newcommand{\sephit}{\alpha}
\newcommand{\pmchit}{\beta}
\newcommand{\nukhit}{\gamma}
\newcommand{\nukeps}{\eta}
\newcommand{\nukth}{\tau}
\newcommand{\meas}{\mu}
\newcommand{\ISname}{\textsc{Maximum Weight Independent Set}\xspace{}}
\newcommand{\EDname}{\textsc{Maximum Weight Efficient Dominating Set}\xspace{}}
\newcommand{\ISnameUW}{\textsc{Independent Set}\xspace{}}
\newcommand{\EDnameUW}{\textsc{Efficient Dominating Set}\xspace{}}
\title{Independence and Efficient Domination on $P_6$-free Graphs} 
\author{Daniel Lokshtanov\thanks{University of Bergen, Norway, \texttt{daniello@ii.uib.no}.} \and Marcin Pilipczuk\thanks{University of Warwick, UK, \texttt{malcin@mimuw.edu.pl}.} \and Erik Jan van Leeuwen\thanks{Max-Planck Institut f\"{u}r Informatik, Saarbr\"{u}cken, Germany, \texttt{erikjan@mpi-inf.mpg.de}.}}
\date{}
\begin{document}
\maketitle

\begin{abstract}
In the \ISname{} problem, the input is a graph $G$, every vertex has a non-negative integer weight, and the task is to find a set $S$ of pairwise non-adjacent vertices, maximizing the total weight of the vertices in $S$. We give an $n^{\Oh (\log^2 n)}$ time algorithm for this problem on graphs excluding the path $P_6$ on $6$ vertices as an induced subgraph.
Currently, there is no constant $k$ known for which \ISname{} on $P_{k}$-free graphs becomes NP-complete, and our result implies that if such a $k$ exists, then $k > 6$ unless all problems in NP can be decided in (quasi)polynomial time.

Using the combinatorial tools that we develop for the above algorithm, we also give a polynomial-time algorithm for \EDname{} on $P_6$-free graphs. In this problem, the input is a graph $G$, every vertex has an integer weight, and the objective is to find a set $S$ of maximum weight such that every vertex in $G$ has exactly one vertex in $S$ in its closed neighborhood, or to determine that no such set exists. Prior to our work, the class of $P_6$-free graphs was the only class of graphs defined by a single forbidden induced subgraph on which the computational complexity of \EDname{} was unknown.

\end{abstract}

\section{Introduction}\label{sec:intro}
An {\em independent set} in a graph $G$ is a set $S$ of pairwise non-adjacent vertices. In the \ISnameUW{} problem the input is a graph $G$ on $n$ vertices and an integer $t$, and the task is to determine whether $G$ contains an independent set of size at least $t$. \ISnameUW{} is a fundamental and extremely well-studied graph problem. It was one of the very first problems to be shown NP-complete~\cite{GJ79,Karp72}, and a significant amount of research~\cite{Alekseev04,BKKM99,Corneil1981,GLS81,LokshtanovVV14,LozinM06,Minty80,Sbihi1980}\footnote{This list is far from exhaustive, see the Information System on Graph Classes and their Inclusions (ISGCI)~\cite{graphclassesPage}.} has gone into identifying classes of graphs on which the problem becomes polynomial-time solvable.

A complete classification of the complexity status of \ISnameUW{} on all classes of graphs seems out of reach. However,  obtaining such a classification for all classes of graphs defined by excluding a single connected graph $H$ as an induced subgraph (we call such graphs $H$-free) looks like an attainable, yet very challenging, goal.  In particular, Alekseev~\cite{Alekseev82} showed in 1982 that \ISnameUW{} remains NP-complete on $H$-free graphs whenever $H$ is connected, but neither a path nor a subdivision of the claw. Since then, the complexity of \ISnameUW{} on classes of $P_k$-free graphs (we denote by $P_k$ the path on $k$ vertices) has been subject to intense scrutiny, but yielding rather modest progress. For $P_4$-free graphs a polynomial-time algorithm was given by Corneil et al.~\cite{Corneil1981} in 1981, and it took more than 30 years until a polynomial-time algorithm for the problem on $P_5$-free graphs was discovered by Lokshtanov et al.~\cite{LokshtanovVV14} in 2014. In the meanwhile, a substantial amount of work was devoted to \ISnameUW{} on subclasses of $P_5$-free graphs~\cite{BBKRS05,BL03,BM03,GeLo03,Mo97,zverovich04}, and some progress has been reported on subclasses of $P_6$-free graphs~\cite{Karthick15,Mo12,Mosca12B,Mosca13}.

In this paper we push the boundary of knowledge on the complexity of \ISnameUW{} on $P_k$-free graphs a step forward by giving a $n^{O(\log^2 n)}$-time algorithm for \ISnameUW{} on $P_6$-free graphs. Our algorithm also works for the weighted version of the problem. Here every vertex has a non-negative integer weight and we are looking for an independent set that maximizes the sum of the weights of the vertices in it.

\begin{theorem}\label{thm:main}
There is an $n^{\Oh(\log^2 n)}$-time, polynomial-space algorithm for \ISname{} on $P_6$-free graphs.
\end{theorem}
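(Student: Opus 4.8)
The plan is to route the whole \ISname{} algorithm through the machinery of potential maximal cliques (PMCs) and minimal separators. Recall that, by the standard PMC dynamic programming (Bouchitt\'e--Todinca and its refinements), if one is handed a family $\Pi$ of PMCs of $G$ with the properties that (a) $|\Pi| = n^{\Oh(\log^2 n)}$, (b) each $\Omega \in \Pi$ is \emph{simple}, meaning the restriction of a partial solution to $\Omega$ ranges over only $n^{\Oh(\log^2 n)}$ possibilities (for instance $G[\Omega]$ has quasi-polynomially many independent sets), and (c) some maximum-weight independent set of $G$ is captured by a tree decomposition all of whose bags lie in $\Pi$, then the optimum can be computed in $n^{\Oh(\log^2 n)}$ time and polynomial space. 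Since a $P_6$-free graph can have exponentially many minimal separators and PMCs, one cannot just take $\Pi$ to be all of them; the real work is to \emph{build} a quasi-polynomial-size family with properties (a)--(c) within the allotted resources.

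I would construct $\Pi$ by recursive balanced separation over \emph{blocks} --- pairs $(S,C)$ with $S$ a minimal separator and $C$ a full component of $G-S$, the trivial block being all of $V(G)$. The combinatorial heart of the argument, and the only place $P_6$-freeness is invoked, is a lemma of roughly this shape: every block $(S,C)$ contains a PMC $\Omega$ with $S \subseteq \Omega \subseteq S \cup C$ that is simultaneously \emph{balanced} --- each component of $(S \cup C)\setminus\Omega$ carries at most, say, two thirds of the weight of the block --- and of \emph{low description complexity} --- $\Omega$ is pinned down by $\Oh(\log n)$ vertices (say, $\Omega$ is a union of $\Oh(\log n)$ closed neighbourhoods), and is itself simple in the sense of (b). Granting this, the construction is: at each block try all $n^{\Oh(\log n)}$ choices of the $\Oh(\log n)$ vertices describing a candidate balanced $\Omega$; add such $\Omega$, and the separators it induces, to $\Pi$; recurse into the sub-blocks $(N(C'),C')$ over the components $C'$ of the block minus $\Omega$. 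Balance caps the recursion depth at $\Oh(\log n)$, so a branching factor $n^{\Oh(\log n)}$ over $\Oh(\log n)$ levels gives $|\Pi| = n^{\Oh(\log^2 n)}$. For polynomial space one never materialises $\Pi$ or the recursion tree: the PMC dynamic program is run as a recursion (without memoisation) mirroring this construction --- the value of a block is computed directly by iterating over its candidate balanced PMCs $\Omega$, over the few (by simplicity) choices of $I \cap \Omega$, and recursing into the sub-blocks with the induced traces --- so only an $\Oh(\log^2 n)$-deep stack is ever in memory, at the already-budgeted cost of recomputation.

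The main obstacle is precisely the structural lemma above: that a $P_6$-free block always admits a balanced PMC describable by logarithmically many vertices. This is where the structure theory of $P_6$-free graphs must enter --- domination-type results (a connected $P_6$-free graph has a dominating subgraph of controlled form or bounded radius) to argue that the minimal separators and PMCs one cares about are governed by a few vertices, together with an iterative argument that repeatedly peels off over-heavy parts of the graph until a balanced cut is forced while keeping its description short, and finally a ``rounding'' step that turns such a cut into a genuine PMC without spoiling balance or simplicity. A secondary, more routine point is to verify that confining attention to balanced PMCs at every level still leaves a tree decomposition carrying an optimal independent set, so that property (c) survives down the recursion. I expect the tension between balance and description length --- keeping the separator both central and encodable by $\Oh(\log n)$ vertices --- to be the real crux of the argument.
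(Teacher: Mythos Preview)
Your proposal takes a genuinely different route from the paper, and the route has a real gap at exactly the point you flag as the crux.

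The paper does \emph{not} enumerate any family of potential maximal cliques and does not run a Bouchitt\'e--Todinca style dynamic program. It is a pure branching algorithm: pick a vertex $v$, recurse on $G-v$ and on $G-N[v]$. Potential maximal cliques appear only in the \emph{analysis}, via two structural results. Theorem~\ref{thm:hit-pmc} says that for any PMC $\Omega$ in a $P_7$-free graph and any measure on $\Omega$, some vertex sees a constant fraction of the measure. Theorem~\ref{thm:hit-nuke} says the same for any inclusion-minimal \emph{nuke} (a small set whose removal leaves only small components) in a $P_6$-free graph. The algorithm first uses Theorem~\ref{thm:hit-pmc} to show that if no vertex has linear degree then a balanced PMC is itself small, hence a nuke; it then repeatedly branches on a vertex hitting a constant fraction of the current minimal nuke (guaranteed by Theorem~\ref{thm:hit-nuke}), so that after $\Oh(\log n)$ branches the nuke is destroyed and the graph has shrunk by a constant factor. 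The $\Oh(\log n)$ branching depth per constant-factor shrink gives the $n^{\Oh(\log^2 n)}$ bound.

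Your structural lemma --- that every block admits a balanced PMC which \emph{equals} a union of $\Oh(\log n)$ closed neighbourhoods and has only quasi-polynomially many independent sets --- is not established and does not follow from the tools at hand. Theorem~\ref{thm:hit-pmc}, applied iteratively, shows every PMC is \emph{contained in} a union of $\Oh(\log n)$ open neighbourhoods, but containment is not equality: guessing the $\Oh(\log n)$ centres does not pin down $\Omega$, so you cannot enumerate your family $\Pi$ this way. The ``simplicity'' claim is a second independent gap: a set covered by few neighbourhoods can still carry exponentially many independent sets (neighbourhoods are not cliques), and nothing in the $P_6$-free structure theory bounds the number of independent sets inside an arbitrary PMC. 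Finally, your property~(c) --- that restricting to balanced PMCs at every level still captures an optimal tree decomposition --- is not routine when $\Pi$ is a strict subfamily of all PMCs; the standard framework only guarantees this for the full family. The paper's branching approach sidesteps all three issues: it never needs to describe, enumerate, or bound independent sets inside a PMC, because it only ever branches on a single vertex at a time.
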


The algorithm of Theorem~\ref{thm:main} does not completely resolve the complexity status of \ISnameUW{} on $P_6$-free graphs, as it runs in quasipolynomial time rather than polynomial time. However, Theorem~\ref{thm:main} does imply that \ISname{} on $P_6$-free graphs is not NP-complete, unless all problems in NP can be solved in quasipolynomial time. This hints at the existence of a polynomial-time algorithm for the problem also on $P_6$-free graphs.

On the way to developing our algorithm for \ISnameUW{}, we prove several new combinatorial properties of $P_6$- and $P_7$-free graphs. We leverage these new combinatorial insights to develop a polynomial-time algorithm for  the \EDnameUW{} problem on $P_6$-free graphs. We say that a vertex {\em dominates} itself and all of its neighbors. An {\em efficient dominating set} in a graph $G$ is a vertex set $S$ such that every vertex $v$ in the graph is dominated by exactly one vertex in $S$. Not all graphs have an efficient dominating set, and in the \EDnameUW{} problem the input is a graph $G$ and the task is to determine whether $G$ has an efficient dominating set. We remark that the problem also goes by the name {\sc Perfect Code}~\cite{biggs1973perfect}. Observe that we do not ask for the smallest or largest efficient dominating set, only whether there exists one. This is because whenever a graph $G$ has an efficient dominating set, all such sets have the same cardinality~\cite{haynes1998fundamentals}. In the weighted variant, called \EDname{}, every vertex has an integer weight and the task is to find a maximum weight efficient dominating set, if one exists. Since the weights may be negative, there is no real difference between maximizing and minimizing the weight of the solution. Our second main theorem is the following.

\begin{theorem}\label{thm:effdom}
There is an $n^{\Oh (1)}$-time algorithm for \EDname{} on $P_6$-free graphs.
\end{theorem}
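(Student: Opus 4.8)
The plan is to reduce \EDname{} on a $P_6$-free graph to a tightly constrained weighted independent set problem and then exploit the rigidity of efficient dominating sets, together with the structural results on $P_6$- and $P_7$-free graphs developed for Theorem~\ref{thm:main}, to obtain a genuinely polynomial (not merely quasi-polynomial) algorithm.

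First I would reduce to connected graphs: a graph admits an efficient dominating set if and only if each of its connected components does, and the optimum weight is the sum of the per-component optima, so assume $G$ is connected; recall that a connected $P_6$-free graph has diameter at most $4$, since a shortest path is induced. Next I would use the standard characterization: a vertex set $S$ is an efficient dominating set exactly when $S$ is a $2$-packing (the closed neighbourhoods $N_G[s]$, $s\in S$, are pairwise disjoint) and $S$ dominates $G$; in particular $S$ is then independent and $|S\cap N_G[v]|\le 1$ for every vertex $v$. The domination requirement can be folded into the objective: with modified weights $\hat w(v)=K\cdot|N_G[v]|+w(v)$ for $K>\sum_v|w(v)|$, any $2$-packing $S$ satisfies $\hat w(S)=K\cdot|N_G[S]|+w(S)$ with $|N_G[S]|\le n$, and equality holds iff $S$ dominates $G$; hence a maximum-$\hat w$ $2$-packing reveals whether $G$ has an efficient dominating set and, if so, exhibits one of maximum $w$-weight. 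It thus suffices to compute a maximum-weight $2$-packing in a connected $P_6$-free graph in polynomial time.

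For this I would run a recursive decomposition that repeatedly commits one vertex to the solution. Fix a vertex $r$; since $r$ must be dominated, $S$ contains exactly one vertex of $N_G[r]$, so branch over the at most $n$ choices for $S\cap N_G[r]=\{s\}$. Committing $s\in S$ amounts to deleting $N_G[s]$; the surviving neighbours of $N_G[s]$ become forbidden from $S$ but still have to be dominated, so the recursion is on an \emph{annotated} maximum-weight $2$-packing instance on $G-N_G[s]$ (still $P_6$-free, and at worst $P_7$-free once the annotation is encoded) in which some vertices carry the label ``forbidden, must be dominated''. The decisive point — and where the structure theory enters — is that this branching must be turned into an actual decomposition of the graph: using the structural description of $P_6$- and $P_7$-free graphs one chooses $r$, or a small separator-like object around it (equivalently one can organize the recursion around minimal separators or potential maximal cliques, as in the algorithm behind Theorem~\ref{thm:main}), so that deleting the committed closed neighbourhood splits the remainder into pieces of at most a constant fraction of the size, with the annotation interacting with each piece in only a bounded number of ways. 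With a genuine decomposition at each branch and only polynomially many branches per node, a merge-sort-style recursion runs in polynomial time.

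The main obstacle is precisely to keep the recursion polynomial rather than quasi-polynomial. For ordinary maximum weight independent set on $P_6$-free graphs one can afford only an $n^{\Oh(\log n)}$-size recursion (this is what underlies Theorem~\ref{thm:main}), both because guesses along a separator are expensive and because the two sides of a separator stay coupled through the still-undominated separator vertices. Here one must use the exactness of efficient domination — every closed neighbourhood is hit exactly once — to show that the interface across any separator carries only a bounded amount of information, so that a solution vertex near the separator can be pinned down within a polynomial budget and the two sides genuinely decouple. I expect the bulk of the work to be in setting up the right annotated $2$-packing problem, proving the corresponding decomposition lemma for $P_6$-free graphs (and for the auxiliary $P_7$-free graphs that arise after the reductions), and checking that the hereditary hypotheses required by the structure theorems are preserved at every recursive step.
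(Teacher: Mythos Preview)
Your reduction to maximum-weight $2$-packing is standard and correct, but the recursion you sketch does not achieve polynomial time, and the time-analysis sentence ``with a genuine decomposition at each branch and only polynomially many branches per node, a merge-sort-style recursion runs in polynomial time'' is where it breaks. Merge sort is polynomial because it branches into a \emph{constant} number of subproblems whose sizes sum to $n$. Your scheme branches into $|N_G[r]|$ (i.e.\ up to $n$) subcases at every level; even if, in each subcase, the graph cleanly splits into pieces each of size at most $cn$ with $c<1$, the recursion has depth $\Theta(\log n)$ and total size $n^{\Theta(\log n)}$, which is exactly the quasipolynomial bound you say you want to avoid. Nothing in the ``exactness'' of efficient domination changes this arithmetic: bounding the interface information across a separator helps you merge subproblems cheaply, but it does not shrink the $n$-way branch you incur when you guess the dominator of a fixed vertex. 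You would need either a constant-size branch at each level, or only $O(1)$ levels of $n$-way branching in total; you have argued for neither. A second gap is that you never justify the balanced split: there is no reason deleting $N_G[s]$ for an arbitrary $s\in N_G[r]$ disconnects a $P_6$-free graph into pieces of size at most $cn$.

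The paper takes a different route that sidesteps both issues. It fixes once and for all a minimal triangulation $\cG$ of $G$ and runs the textbook dynamic program over the clique tree of $\cG$; the bags are potential maximal cliques of $G$. The whole difficulty is then to show that at each bag $\Omega$ only polynomially many ``states'' are needed (Theorem~\ref{thm:effdom-states}). The first step there is exactly the rigidity you allude to: Theorem~\ref{thm:hit-pmc} implies $|X\cap\Omega|\le 1/\pmchit$ for every efficient dominating set $X$ and every potential maximal clique $\Omega$ (Lemma~\ref{lem:ed:in-pmc}), so one guesses $X\cap\Omega$ in $n^{O(1)}$ ways. But that alone is not enough, because the state must also record, for each $v\in\Omega\setminus X$, \emph{which component} of $G-\Omega$ contains its dominator; the paper spends the rest of Section~\ref{sec:EDalg} on a branching procedure (with $O(\log n)$ binary branches and $O(1)$ polynomial-width branches along any root-to-leaf path) that enumerates a polynomial family of such states. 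This is the step your proposal is missing: the ``bounded interface'' intuition is correct, but turning it into a polynomial state space requires the chain structure of $B$-neighbourhoods (Lemma~\ref{lem:Cuniv} and~\eqref{eq:EDchain}), the linkedness arguments, and the reduction rule of Section~\ref{sec:EDalg}, none of which your outline supplies.
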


Prior to our work, the $P_6$ was the only graph $H$, connected or not, for which the complexity of {\sc Efficient Dominating Set} on $H$-free graphs was unknown~\cite{BrandstadtG14}. Thus our work completes the complexity classification of  \EDnameUW{} (and \EDname{}) on classes of graphs defined by a single forbidden induced subgraph and resolves the main open problem of~\cite{ab1,BrandstadtG14,ab2,BrandstadtMN13}. We remark that an alternative polynomial-time algorithm for \EDname{} has been independently obtained by Brandst\"{a}dt and Mosca~\cite{ab-effdom} using different methods.\footnote{Although~\cite{ab-effdom} appeared on arXiv a few weeks after this paper, the authors of~\cite{ab-effdom} contacted us and shared with us a preliminary version of~\cite{ab-effdom} immediately after we posted our work.}

\paragraph{Methodology.} 
The polynomial-time algorithm for \ISname{} on $P_5$-free graphs of Lokshtanov et al.~\cite{LokshtanovVV14} demonstrated that investigating {\em potential maximal cliques} and {\em minimal separators} (see Section~\ref{sec:prelims} for definitions) yields valuable insights on the structure of $P_5$-free graphs. Our algorithm for $P_6$-free graphs is also based on studying potential maximal cliques and minimal separators. However, this is where the similarity between the two algorithms ends, as essentially all of the arguments used in the algorithm for $P_5$-free graphs quickly break down for $P_6$-free graphs.

At heart our algorithm is very simple: the algorithm picks a node $v$ and proceeds recursively in two branches. In the first $v$ is included in the independent set, and the algorithm needs to solve $G - N(v)$ recursively. In the second $v$ is excluded from the independent set, and the algorithm is called recursively on $G - v$. If in any recursive call the graph becomes disconnected the algorithm solves the connected components independently. The crux of the analysis is to show that one can always cleverly chose the vertex $v$, such that after only a few branches either the size of the graph decreases by at least $.1n$, or the graph breaks into connected components of size at most $.9n$. 

Roughly speaking, the vertex $v$ to branch on is chosen as follows. The algorithm identifies a {\em nuke} in $G$: a relatively small vertex set $S$ such that every connected component of $G - X$ has size at most $.9n$ (for a formal definition of a nuke, see Definition~\ref{def:nuke}). The algorithm then picks a vertex $v$ with a large neighborhood in $S$ to branch on. In order to guarantee the existence of a nuke $S$ and a vertex $v$ with a large neighborhood in $S$ we prove the following theorem about minimal separators in $P_7$-free graphs. 

\begin{theorem}\label{thm:hit-sep}
There exists a positive constant $\sephit > 0$ such that for every $P_7$-free graph $G$, for every minimal separator $S$ in $G$, and for every probability measure $\meas$ on $S$, there exists a vertex $v \in V(G)$ satisfying $\meas(N(v)) \geq \sephit$.
\end{theorem}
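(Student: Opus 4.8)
\emph{Setup.} As $S$ is a minimal separator, $G-S$ has two \emph{full} components $A,B$ with $N(A)=N(B)=S$; thus $G[A]$ and $G[B]$ are connected and each vertex of $S$ has a neighbour in $A$ and a neighbour in $B$. Two features of this picture drive the proof. First, a shortest path is chordless and induced subgraphs of $P_7$-free graphs are $P_7$-free, so $G[A]$ and $G[B]$ have diameter at most $5$. Second --- and this is why $P_7$, not $P_5$, is the relevant threshold --- there are no edges between $A$ and $B$, so a path that runs through $A$, crosses $S$ at a single vertex, and runs through $B$ is chordless as soon as its two legs are chordless and avoid the crossing vertex, and it is an induced $P_7$ once it has seven vertices. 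We may assume every atom of $\meas$ has mass below $\sephit$ (else a neighbour in $A$ of the heavy vertex already works), and it suffices to produce a vertex $v$ with $\meas(N(v))$ bounded below by a universal constant, which we then rename $\sephit$.

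\emph{Step 1: reduce $S$ to boundedly many layers.} Fix any $a^\ast\in A$, $b^\ast\in B$. If $\meas(N(a^\ast))\ge\sephit$ or $\meas(N(b^\ast))\ge\sephit$, we are done; otherwise $\meas(S^\circ)>1-2\sephit$ for $S^\circ:=S\setminus(N(a^\ast)\cup N(b^\ast))$. For $s\in S^\circ$ pick a neighbour $a(s)\in A$ of $s$ at minimum distance from $a^\ast$ in $G[A]$, and a neighbour $b(s)\in B$ of $s$ at minimum distance from $b^\ast$ in $G[B]$; set $k=d_{G[A]}(a^\ast,a(s))\ge 1$ and $m=d_{G[B]}(b^\ast,b(s))\ge 1$. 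Concatenating a shortest $a^\ast$--$a(s)$ path, the edges $a(s)s$ and $sb(s)$, and a shortest $b(s)$--$b^\ast$ path yields a path on $k+m+3$ vertices, which is induced: both legs are chordless, there are no $A$--$B$ edges, and no interior vertex of a leg is adjacent to $s$ since $a(s),b(s)$ were chosen nearest to their poles. Hence $k+m\le 3$, so $(k,m)\in\{(1,1),(1,2),(2,1)\}$, and the heaviest of the three classes into which this partitions $S^\circ$ has measure at least $(1-2\sephit)/3$. Swapping $A$ and $B$ if necessary, we may assume this class $T$ has $k=1$, so every $s\in T$ has a neighbour in $M_A:=N(a^\ast)\cap A$, is non-adjacent to $a^\ast$ and $b^\ast$, and is within distance $3$ of $b^\ast$ in $G[B\cup\{s\}]$.

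\emph{Step 2: find the heavy vertex inside $T$.} Replay the idea one level down, using that $a^\ast$ is complete to $M_A$. If some vertex of $M_A$ (or of $B$, or of $S$) already sees a $\sephit$-fraction of $\meas|_T$, we are done; otherwise fix $m^\ast\in M_A$, let $T_1:=T\setminus N(m^\ast)$ (so $\meas(T_1)>\meas(T)-\sephit$), and for $s\in T_1$ take a neighbour $m(s)\in M_A\setminus\{m^\ast\}$ of $s$ together with a shortest path from $s$ to $b^\ast$ inside $G[B\cup\{s\}]$. Then $m^\ast - a^\ast - m(s) - s - (\text{$B$-leg}) - b^\ast$ --- with $a^\ast$ omitted when $m^\ast\sim m(s)$ --- is an induced path on between $5$ and $7$ vertices, since its only possible chords are $A$--$B$ non-edges or edges from $s$ to $a^\ast,b^\ast,m^\ast$. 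If it has $7$ vertices we have an induced $P_7$; otherwise it extends to one by appending further vertices at the ends that are non-neighbours of the current path --- drawn from $M_B$ or $S$ near $b^\ast$, or from $A$ or $S$ near $m^\ast$. Spelling out exactly when no such extension is available (and when instead $m^\ast\sim m(s)$ for \emph{every} choice of $m(s)$) produces a short list of forced ``local domination'' statements --- for instance $M_B\setminus N(s)\subseteq N(m'(s))$ for $s$'s neighbour $m'(s)\in M_B$, or $m^\ast$ dominating the $M_A$-neighbourhood of all of $T_1$. Recombining these statements with the fact that $M_A$ and $M_B$ each dominate $T$ pins a constant fraction of $\meas|_T$ onto a single vertex, and multiplying out the losses $(1-2\sephit)$, $1/3$ and $\meas(N(m^\ast))$ gives an explicit $\sephit>0$.

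The crux is Step 2. The natural $P_7$ one wants can fail to be chordless for several reasons at once: two vertices used inside $A$ may be adjacent, likewise inside $B$, and --- a minimal separator need not be independent --- two vertices of $S$ on the path may be adjacent, which in turn opens an alternative $P_7$ running through an edge of $G[S]$. Each such failure must be turned into a domination statement and the statements recombined, all while keeping the number of cases and the product of constant factors bounded. Step 1 is by contrast a single clean use of $P_7$-freeness, and one should note that the diameter bound alone is not enough --- a minimal separator in a graph of diameter $5$ (e.g.\ a bipartite-incidence construction) can violate the conclusion --- so the argument has to genuinely route the path through both full components.
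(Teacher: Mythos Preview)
Your Step~1 is correct and clean: the two-pole construction genuinely forces $(k,m)\in\{(1,1),(1,2),(2,1)\}$ via an induced path that crosses $S$ once and runs through both full sides. The difficulty is entirely in Step~2, and there you have written a sketch, not a proof.

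Concretely, you assert that ``spelling out exactly when no such extension is available produces a short list of forced local domination statements'' which then ``recombine'' to pin a constant fraction of $\meas|_T$ onto a single vertex. Neither the list nor the recombination is given, and the sample statements you cite do not do the job. For instance, ``$m^\ast$ dominating the $M_A$-neighbourhood of all of $T_1$'' says $N(T_1)\cap M_A\subseteq N(m^\ast)$ --- a statement about neighbours in $M_A$, not about neighbours in $T$. To turn it into $\meas(N(v)\cap T)$ large for some $v$ you would still need some single vertex of $M_A$ to see a constant fraction of $T$, which is precisely what you assumed fails at the start of Step~2. The extension at the $b^\ast$ end has analogous problems: an extending vertex drawn from $S$ can be adjacent to $m(s)$ or $m^\ast$ (both in $A$), so chordlessness is not automatic and the failure cases multiply. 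You are right that each failed extension yields \emph{some} domination statement, but you have not shown that the conjunction of all these statements, quantified over all $s\in T_1$ and both ends of the path, collapses onto a single heavy vertex. That collapse is the whole content of the theorem, and it is missing; you acknowledge as much in your final paragraph.

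For comparison, the paper avoids this case analysis altogether. Instead of fixing poles $a^\ast,b^\ast$, it samples $x,y_1,y_2\in S$ independently from $\meas$ and shows that with positive probability there is an induced $P_4$ from $x$ into $A$ (possibly ending at $y_1$) and another from $x$ into $B$ (possibly ending at $y_2$); gluing at $x$ yields a $P_7$. The engine is a lemma bounding the probability that a random pair $(x,y)$ fails to be ``lucky'' with respect to $A$, and its proof uses the modular decomposition of $G[A]$: if no $P_4$ starts at $x$, then every component of $G[A\setminus N(x)]$ is a module of $G[A]$, and the quotient structure then forces a single vertex of $A$ dominating all of $A\setminus N(x)$, which is unlikely to be adjacent to the random $y$. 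Randomness over $S$ together with the module structure inside $A$ is what replaces your extension-and-recombination step, and it yields an explicit constant ($\sephit=1/24$) without case splits.
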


The reason that Theorem~\ref{thm:hit-sep} is not already sufficient to yield a quasipolynomial-time algorithm for \ISnameUW{} on $P_7$-free graphs is that, despite the similarity between the definitions of nukes and minimal separators, not all nukes are minimal separators. For $P_6$-free graphs we are able to prove an analogue of Theorem~\ref{thm:hit-sep} for nukes rather than minimal separators, and this is sufficient to give a $n^{\Oh (\log^2 n)}$-time algorithm for \ISname{}. As a first step to lift Theorem~\ref{thm:hit-sep} to work for nukes we generalize it to potential maximal cliques in $P_7$-free graphs.

\begin{theorem}\label{thm:hit-pmc}
There exists a positive constant $\pmchit > 0$ such that for every connected $P_7$-free graph $G$ on at least two vertices,
for every potential maximal clique $\pmc$ in $G$, and every probability measure $\meas$ on $\pmc$,
there exists a vertex $v \in V(G)$ satisfying $\meas(N(v)) \geq \pmchit$.
\end{theorem}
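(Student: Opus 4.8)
The plan is to reduce, via a short sequence of easy observations, to the case where $\pmc$ is covered by a bounded number of minimal separators, and then to invoke Theorem~\ref{thm:hit-sep}. First the easy reductions. If some $v \in \pmc$ has $\meas(\{v\}) \ge \pmchit$ then, since $G$ is connected on at least two vertices, $v$ has a neighbour $w$ with $\meas(N(w)) \ge \meas(\{v\}) \ge \pmchit$; so we may assume $\meas$ has no atom of mass $\ge \pmchit$, and in particular $|\pmc| \ge 2$. If $\pmc$ is a clique, a minimum-mass vertex $v \in \pmc$ has $\meas(\{v\}) \le 1/2$ and $N(v) \supseteq \pmc \setminus \{v\}$, so $\meas(N(v)) \ge 1/2$; so we may assume $\pmc$ is not a clique. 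Let $C_1, \dots, C_p$ be the connected components of $G - \pmc$ and set $S_j := N(C_j)$. We will use the standard facts about potential maximal cliques: each $S_j$ is a proper subset of $\pmc$, each $S_j$ is a minimal separator of $G$, and for every non-edge of $G[\pmc]$ some $S_j$ contains both of its endpoints. The last fact implies that every vertex $v \in \pmc \setminus \bigcup_j S_j$ is adjacent to every other vertex of $\pmc$, so $\meas(N(v)) \ge 1 - \meas(\{v\}) > 1/2$; hence we may also assume $\pmc = \bigcup_{j=1}^{p} S_j$.

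The core of the argument is the claim that there are a universal constant $\kappa$ and a set $J$ of at most $\kappa$ indices with $\meas\bigl(\bigcup_{j \in J} S_j\bigr) \ge 1/2$. Granting it, pick $j \in J$ maximising $\meas(S_j)$, so that $\meas(S_j) \ge 1/(2\kappa)$; since $S_j$ is a minimal separator, applying Theorem~\ref{thm:hit-sep} to $G$, to $S_j$, and to $\meas$ restricted to $S_j$ and rescaled to a probability measure produces a vertex $v$ with $\meas(N(v)) \ge \meas(N(v) \cap S_j) \ge \sephit \cdot \meas(S_j) \ge \sephit/(2\kappa)$. Setting $\pmchit := \min\{1/2,\ \sephit/(2\kappa)\}$ then proves the theorem. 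Morally the claim should hold because $P_7$-freeness forbids $\pmc$ from behaving like the vertex-pair incidence graph of a large clique, which is the prototypical way a potential maximal clique could need many of its separators to cover it.

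To prove the claim, I would take an inclusion-minimal set $J$ with $\meas\bigl(\bigcup_{j \in J} S_j\bigr) \ge 1/2$ (one exists, as $J = \{1, \dots, p\}$ qualifies) and suppose, for contradiction, that $|J|$ exceeds a suitable absolute constant. Minimality yields, for each $j \in J$, a positive-mass vertex $x_j \in S_j$ lying in no $S_{j'}$ with $j' \in J \setminus \{j\}$; fix, for each $j \in J$, a component $C_j$ of $G - \pmc$ with $N(C_j) = S_j$. The $x_j$ are pairwise distinct, and for distinct $j, j' \in J$ no vertex of $C_j$ is adjacent to $x_{j'}$. A Ramsey argument on the large set $\{x_j : j \in J\}$ gives either a large clique among these vertices or three pairwise non-adjacent ones $x_a, x_b, x_c$. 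In the latter case, choose components $D, D'$ of $G - \pmc$ witnessing the non-edges $x_a x_b$ and $x_b x_c$, choose $z_a \in C_a$ and $z_c \in C_c$ adjacent to $x_a$ and $x_c$, and choose shortest (hence induced) paths $P$ from $x_a$ to $x_b$ and $P'$ from $x_b$ to $x_c$ with interiors in $D$ and $D'$ respectively. The facts above give that $C_a, C_b, C_c$ are distinct and distinct from $D$ and $D'$, and a routine check shows that the path formed by prepending $z_a$ to $P$, gluing on $P'$ at $x_b$, and appending $z_c$ is induced except possibly at the pairs ``$x_a$ and the interior of $P'$'' and ``$x_c$ and the interior of $P$'' (and ``interior of $P$ and interior of $P'$'' when $D = D'$). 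Since this path has at least seven vertices, if it is induced then seven consecutive vertices of it induce a $P_7$, contradicting $P_7$-freeness.

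The hard part, I expect, is exactly these exceptions. Each of them means that some component of $G - \pmc$ is adjacent to all three of $x_a, x_b, x_c$; I would try to rule this out by a more careful, pigeonhole-driven choice of the triple $x_a, x_b, x_c$ and of the witnessing components, and fall back on rerouting an induced path through that single component when it cannot be avoided --- but the naive reroute yields only an induced path on five vertices, so this fallback needs extra structural input (e.g.\ more fresh vertices, or the layered structure inside the offending component). The second difficulty is the other Ramsey outcome: a large clique among the fresh vertices $\{x_j\}$ gives no induced $P_7$ directly and must be treated separately, exploiting that the associated components $C_j$ are pairwise distinct and pairwise non-adjacent, so that picking a neighbour $w_j \in C_j$ of $x_j$ for each $j$ produces a large independent set with a very constrained adjacency pattern to the clique $\{x_j\}$. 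Dealing with these two points --- the long-range chords and the clique case --- is the combinatorial crux of the proof.
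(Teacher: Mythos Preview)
Your proposal is incomplete, and the two gaps you identify are genuine obstacles rather than loose ends. The clique case in particular is a real problem: with pairwise adjacent fresh vertices $x_1,\dots,x_k$ and private components $C_j$, the best induced path you can build from $w_j$-$x_j$-$x_{j'}$-$w_{j'}$ has four vertices, and extending it requires non-neighbours of $x_j$ inside $C_j$ on both sides, which only reaches $P_6$ even when such non-neighbours exist. For the long-range chord issue, nothing in your setup prevents a single component $D$ of $G-\pmc$ from having all three of $x_a,x_b,x_c$ in its neighbourhood (the minimality of $J$ only controls membership in $S_j$ for $j\in J$, not for $j\notin J$), so the ``reroute through $D$'' fallback is exactly where you land and, as you note, it gives only a $P_5$. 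I do not see how a more careful pigeonhole choice of the triple repairs either case without a substantially new idea.

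The paper takes a quite different and much shorter route that sidesteps both difficulties. Rather than seeking a bounded \emph{set} of separators covering half the mass, it proves the stronger fact that (unless some vertex already has $\meas(N(v))$ large) a \emph{single} minimal separator $N(C_j)$ has mass at least a constant. The argument is probabilistic: pick four vertices $x_1,x_2,x_3,x_4\in\pmc$ independently at random according to $\meas$; if no atom and no neighbourhood is heavy, they form an independent set with positive probability; for each consecutive pair $x_ix_{i+1}$ pick a component $C$ covering the non-edge, and if no $N(C)$ is heavy then, again with positive probability, the remaining two random vertices avoid $N(C)$. When all these events occur, shortest paths through the three covering components concatenate to an induced subdivision of $P_4$, hence an induced $P_7$. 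The independence of the random choices is exactly what replaces your Ramsey argument and eliminates the need to control long-range chords or handle a clique outcome. Once a single separator of constant mass is found, Theorem~\ref{thm:hit-sep} finishes as you intended.
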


Theorem~\ref{thm:hit-pmc} turns out to be very useful not only in our quasipolynomial-time algorithm for \ISname{}, but for the polynomial-time algorithm for \EDname{} as well. Indeed, an almost immediate consequence of Theorem~\ref{thm:hit-pmc} is that for any $P_7$-free graph $G$, any efficient dominating set $X$ in $G$ and any potential maximal clique $\Omega$ in $G$, $|X \cap \Omega| \leq 1/\pmchit$ (see Lemma~\ref{lem:ed:in-pmc} for a simple proof).

The observation above strongly suggests that one can solve  \EDname{} on $P_7$-free graphs in polynomial time by doing dynamic programming over the tree decomposition of an arbitrarily chosen minimal triangulation of $G$. For $P_7$-free graphs this approach fails, as is expected from the NP-completeness of {\sc Efficient Dominating Set} on $P_7$-free graphs~\cite{BrandstadtMN13,YenL96}. On the other hand, for $P_6$-free graphs, we are able to carry this approach through.

We mention here that this approach follows a completely disjoint direction from the one followed in recent papers~\cite{ab1,ab2} that gave polynomial-time algorithms for subclasses of $P_{6}$-free graphs.
In particular, those papers show that one can reduce to \ISname{} on the square of the graph by proving special properties of the square when the graph is from such a subclass and has an efficient dominating set.

\paragraph{Outline of the paper.} In Section~\ref{sec:prelims} we set up the definitions and necessary notations. In Section~\ref{sec:hitSep} we prove Theorems~\ref{thm:hit-sep} and~\ref{thm:hit-pmc}, while Section~\ref{sec:nukes} contains the generalization of Theorem~\ref{thm:hit-sep} to nukes. We then proceed to the main algorithmic results, Section~\ref{sec:ISalg} contains the quasipolynomial-time algorithm for \ISname{}, while Section~\ref{sec:EDalg} contains the polynomial-time algorithm for \EDname{}, both on $P_6$-free graphs. In Section~\ref{sec:conclusion} we conclude with some open problems and counterexamples to the most natural generalizations of the structural results underlying our algorithms.







\section{Preliminaries}\label{sec:prelims}
For all graph terminology not defined here, we refer to the monograph by Diestel~\cite{diestel}.
For a graph $G$ and sets $A,B \subseteq V(G)$, we denote $N_B(A) := N(A) \cap B$.

Let $G$ be a graph; throughout, we assume that all graphs are finite, simple, and undirected. Given distinct $s,t \in V(G)$, a set $S \subseteq V(G)$ is an \emph{$s$-$t$ separator} if $s$ and $t$ are in distinct components of $G \setminus S$. We say that $S \subseteq V(G)$ is a \emph{minimal} $s$-$t$ separator if no $S' \subsetneq S$ is also an $s$-$t$ separator. Then $S \subseteq V(G)$ is a \emph{(minimal) separator} of $G$ if $S$ is a (minimal) $s$-$t$ separator for some $s,t \in V(G)$. Given a separator $S \subseteq V(G)$, a component $C$ of $G\setminus S$ is said to be \emph{full} if every vertex of $S$ has a neighbor in $C$. It can be shown that $S$ is a minimal separator if and only if $G\setminus S$ has at least two full components.

A set $M \subseteq V(G)$ is a \emph{module} of $G$ if every vertex $v \in V(G) \setminus M$ is either fully adjacent or fully anti-adjacent to $M$; that is, either $vu \in E(G)$ for each $u \in M$ or $vu \not\in E(G)$ for each $u \in M$. 
A module $M$ of $G$ is \emph{trivial} if $M = V(G)$, $M = \emptyset$, or $|M| = 1$. A graph is \emph{prime} if it only has trivial modules. 
A \emph{modular partition} $\mathcal{M}$ of $G$ is a set of disjoint modules of $G$ with union $V(G)$. The \emph{quotient graph} $G / \mathcal{M}$ induced by $\mathcal{M}$ has a vertex for each module of $\mathcal{M}$ and has an edge between two vertices if the corresponding modules are fully adjacent to each other. Observe that, by definition, a non-edge between two vertices in the quotient graph implies that the corresponding modules are fully anti-adjacent to each other. 
A module $M$ of $G$ is \emph{proper} if $M \not= V(G)$. A module $M$ of $G$ is \emph{strong} if for every other module $M'$ of $G$, either $M \subseteq M'$, $M' \subseteq M$ or $M \cap M' = \emptyset$.

\begin{theorem}[{\cite[Theorem~2]{habibpaul}}] \label{thm:modular}
Let $G$ be a connected graph on at least two vertices and let $\mathcal{M}$ denote the set of maximal proper strong modules of $G$. Then $\mathcal{M}$ is a modular partition of $G$, and the quotient graph $G / \mathcal{M}$ is either a clique or a prime graph.
\end{theorem}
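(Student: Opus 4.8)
The plan is to prove the two assertions separately: first that $\mathcal{M}$ is a modular partition, and then that $G/\mathcal{M}$ is a clique or is prime, the latter by a case distinction on whether the complement $\overline{G}$ is connected. Throughout I would rely on three elementary facts about modules: (i)~every singleton $\{v\}$ and the whole vertex set $V(G)$ is a strong module, and two strong modules never overlap, so the strong modules of $G$ form a laminar family; (ii)~any two \emph{disjoint} modules are fully adjacent or fully anti-adjacent, so the quotient $G/\mathcal{M}$ is well defined as soon as $\mathcal{M}$ is a partition into modules; (iii)~a set is a module of $G$ iff it is a module of $\overline{G}$, the same holds for strong modules, and then $G/\mathcal{M}$ is the complement of $\overline{G}/\mathcal{M}$. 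Since $|V(G)|\ge 2$, each singleton $\{v\}$ is a \emph{proper} strong module, so $v$ lies in some maximal proper strong module, and by (i) distinct maximal proper strong modules are disjoint; hence $\mathcal{M}$ partitions $V(G)$ into modules --- that is, $\mathcal{M}$ is a modular partition --- and $|\mathcal{M}|\ge 2$.

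If $\overline{G}$ is disconnected, I would show that $\mathcal{M}$ is precisely the set of connected components of $\overline{G}$. By (iii), $\mathcal{M}$ is also the family of maximal proper strong modules of $\overline{G}$. Each component $C$ of $\overline{G}$ is a module of $\overline{G}$; it is strong because a module of $\overline{G}$ meeting but not containing a component $C'$ would, by connectivity of $C'$, contain a vertex $z\in C'$ adjacent in $\overline{G}$ to that whole module, which is impossible once the module also meets another component --- so every module of $\overline{G}$ is a union of components of $\overline{G}$ or lies inside one, and in particular none overlaps $C$. Moreover $C$ is maximal among proper strong modules, since a module strictly containing $C$ is a union of at least two components of $\overline{G}$, and such a union is either all of $\overline{G}$ or overlaps some other union of components, hence is not a proper strong module. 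Thus $\overline{G}/\mathcal{M}$ is edgeless, and by (iii) $G/\mathcal{M}$ is complete, i.e.\ a clique.

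If $\overline{G}$ is connected, the heart of the matter is the following lemma: whenever $M_1,M_2$ are overlapping proper modules of $G$, one has $M_1\cup M_2\ne V(G)$. To prove it I would invoke the standard overlap facts that $A:=M_1\setminus M_2$, $B:=M_2\setminus M_1$ and $C:=M_1\cap M_2$ are all nonempty modules; since $A$ is disjoint from the module $M_2=B\cup C$ and $B$ is disjoint from the module $M_1=A\cup C$, fact (ii) forces the three pairs among $\{A,B,C\}$ to carry the same join type --- either all three are complete joins or all three are empty. In the empty case $A$ has no edge to $V(G)\setminus A=B\cup C$, so $G$ is disconnected; in the complete case the same reasoning in $\overline{G}$ contradicts connectivity of $\overline{G}$. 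Hence $M_1\cup M_2$ is a proper subset of $V(G)$. Consequently a maximal proper module of $G$ cannot overlap any proper module (it would lie strictly inside the proper module $M_1\cup M_2$), nor $\emptyset$ or $V(G)$, so it is strong; and conversely every proper module is contained in a maximal proper module, hence in a strong one, so a maximal proper strong module is in fact a maximal proper module. Therefore $\mathcal{M}$ is exactly the set of maximal proper modules of $G$. It remains to see that $H:=G/\mathcal{M}$ is prime. If not, $H$ has a nontrivial module $N$, and then $U:=\bigcup_{M\in N}M$ is a module of $G$: for $v\notin U$ we have $v\in M_0$ for some $M_0\in\mathcal{M}$ whose vertex in $H$ lies outside $N$, and since $N$ is a module of $H$, $M_0$ is fully adjacent to every member of $N$ or fully anti-adjacent to every member, so $v$ is uniformly joined to all of $U$. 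But $U$ is a proper subset of $V(G)$ (as $N\ne V(H)$) strictly containing each $M\in N$ (as $|N|\ge 2$), contradicting that such an $M$ is a maximal proper module. Hence $H$ is prime; and since $G$ is connected the two cases above are exhaustive.

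I expect the main obstacle to be the case $\overline{G}$ connected, and within it the overlap lemma: one must carefully verify the standard overlap facts (that the differences and the intersection of two overlapping modules are again modules, and that two disjoint modules are fully adjacent or fully anti-adjacent) and then push the ``all joins or all empty'' dichotomy for $\{A,B,C\}$ through to a disconnection of $G$ or of $\overline{G}$. The remaining bookkeeping --- that $\mathcal{M}$ is a partition and the correspondence between modules of $G/\mathcal{M}$ and unions of members of $\mathcal{M}$ --- follows directly from the definitions.
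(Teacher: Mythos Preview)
The paper does not prove Theorem~\ref{thm:modular} at all; it is quoted without proof from Habib and Paul~\cite{habibpaul} and used as a black box in the proof of Lemma~\ref{lem:lucky}. There is therefore no ``paper's own proof'' to compare against.

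That said, your argument is essentially the standard proof of the modular decomposition theorem and is correct. The partition claim follows cleanly from laminarity of strong modules plus the observation that singletons are proper strong modules. Your case split on the connectivity of $\overline{G}$ is the usual one (often phrased as the ``parallel/series/prime'' trichotomy, where your first case is the series case and the second the prime case; the parallel case is excluded because $G$ is connected). The overlap lemma in the second case is the key step and you handle it correctly: for overlapping proper modules $M_1,M_2$ the three blocks $M_1\setminus M_2$, $M_2\setminus M_1$, $M_1\cap M_2$ are pairwise all-complete or all-empty, so if $M_1\cup M_2=V(G)$ one of $G,\overline{G}$ would disconnect. From this the identification of maximal proper strong modules with maximal proper modules, and the primality of the quotient, follow as you indicate. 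One small point worth making explicit when you write it up: you implicitly use that $M_1\cup M_2$ is itself a module whenever $M_1,M_2$ overlap, which is what lets maximality of $M$ contradict $M\subsetneq M\cup M'$.
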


A graph $G$ is \emph{chordal} if every induced cycle of $G$ has length three. A \emph{triangulation} of a graph $G$ is a set $F \subseteq (V(G) \times V(G)) \setminus E(G)$ such that the graph $G+F := (V(G), E(G) \cup F)$ is chordal. We say that $F$ is a \emph{minimal} triangulation of $G$ if no $F' \subsetneq F$ is a triangulation of $G$. Then a \emph{potential maximal clique} of $G$ is a set $\pmc \subseteq V(G)$ such that $\pmc$ induces a maximal clique in some minimal triangulation of $G$. We need the following properties of potential maximal cliques due to Bouchitt{\'{e}} and Todinca~\cite{todinca}.

\begin{theorem}[\cite{todinca}]\label{thm:pmc-seps}
Let $G$ be a graph. If $\pmc \subseteq V(G)$ is a potential maximal clique of $G$, then
for every connected component $C$ of $G \setminus \pmc$, the set $N_G(C) \subseteq \pmc$ is a minimal separator of $G$.
\end{theorem}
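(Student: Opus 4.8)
The plan is to pass to a minimal triangulation in which $\pmc$ is a maximal clique, read off the desired separator inside that chordal graph, and then transfer the conclusion back to $G$, using minimality of the triangulation to control the fill edges. Concretely: by definition of a potential maximal clique, fix a minimal triangulation $H = G+F$ of $G$ in which $\pmc$ induces a maximal clique, and fix a connected component $C$ of $G \setminus \pmc$. Since $E(G) \subseteq E(H)$ and $C$ is connected and disjoint from $\pmc$, $C$ is contained in a unique connected component $D$ of $H \setminus \pmc$.

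I would then invoke two standard facts. \textbf{(A)} In a chordal graph $H$ that is not complete, for every maximal clique $\pmc$ of $H$ and every component $D$ of $H\setminus\pmc$, the set $S:=N_H(D)$ is a minimal separator of $H$, $H[S]$ is complete, and $S \subsetneq \pmc$; this is routine from the clique-tree representation of chordal graphs. \textbf{(B)} If $H$ is a minimal triangulation of $G$ and $S$ is a minimal separator of $H$, then $S$ is a minimal separator of $G$, the connected components of $H\setminus S$ and of $G\setminus S$ coincide as vertex sets, and $N_H(D)=N_G(D)$ for each such component $D$. (This is classical — e.g.\ via the Parra--Scheffler description of minimal triangulations as the graphs obtained from $G$ by completing a maximal pairwise-parallel family of minimal separators of $G$ into cliques; minimality of the triangulation is essential, as (B) fails for arbitrary triangulations.)

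Putting these together: apply (A) to obtain the minimal separator $S = N_H(D) \subseteq \pmc$ of $H$. Because $N_H(D)=S$, the set $D$ is in fact a connected component of $H\setminus S$ as well (all its $H$-neighbours lie in $S$). Then (B) gives that $S$ is a minimal separator of $G$, that $D$ is a connected component of $G\setminus S$, and that $N_G(D) = N_H(D) = S$. It remains to identify $C$ with $D$: since $D$ is a component of $G\setminus S$ the graph $G[D]$ is connected, and $D \subseteq V(G)\setminus\pmc$ (as $D$ is a component of $H\setminus\pmc$), so $D$ lies inside a single component of $G\setminus\pmc$; as $C\subseteq D$ and $C$ is itself such a component, we get $C=D$, whence $N_G(C) = N_G(D) = S$ is a minimal separator of $G$, as claimed. (When $\pmc = V(G)$ there are no components $C$ and the statement is vacuous, so we may assume $H$ is not complete, as needed for (A).)

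The main obstacle is entirely contained in fact (B) — transferring a separator and, crucially, its component/neighbourhood structure across a minimal triangulation, i.e.\ ruling out that a fill edge of $F$ joins two distinct components of $G\setminus S$. The standard argument exploits minimality through the observation that for each fill edge $uv$ the graph $H-uv$ is not chordal, hence contains an induced cycle of length at least four having $uv$ as a chord in $H$; combining the structure of these cycles with chordality of $H$ pins down where fill edges can lie. I would either carry out this analysis or, more economically, cite the Parra--Scheffler characterization and derive (B) from it directly; fact (A) and the final component-identification step are then routine.
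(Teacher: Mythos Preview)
The paper does not give its own proof of this statement: Theorem~\ref{thm:pmc-seps} is quoted as a preliminary result from Bouchitt\'e and Todinca~\cite{todinca} and used as a black box, so there is no in-paper argument to compare against.

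Assessed on its own, your outline is sound and is essentially the standard route to this fact. Fact~(A) is indeed immediate from the clique-tree structure of chordal graphs, and fact~(B) is precisely the content of the Parra--Scheffler/Kloks--Kratsch--Spinrad theory of minimal triangulations you cite: every minimal separator of a minimal triangulation $H$ of $G$ is already a minimal separator of $G$, and no fill edge straddles two components of $G\setminus S$ for such an $S$, so the component structures agree. Your final identification $C=D$ is correct as written. The only point I would flag is that you correctly identify (B) as the crux but then leave it at the level of ``cite Parra--Scheffler or redo the induced-cycle argument''; if this were to stand as a self-contained proof you would need to actually carry one of those out, but as a proof \emph{proposal} that delegation is appropriate.
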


\begin{theorem}[\cite{todinca}]\label{thm:pmc}
Let $G$ be a graph. A set $\pmc \subseteq V(G)$ is a potential maximal clique of $G$ if and only if
the following two conditions hold:
\begin{enumerate}
\item for every connected component $C$ of $G \setminus \pmc$, we have $N_G(C) \subsetneq \pmc$;
\item for every two distinct vertices $x,y \in \pmc$, either $xy \in E(G)$
or there exists a component $C$ of $G \setminus \pmc$ such that $x,y \in N_G(C)$
(in this case we say that the non-edge $xy$ is \emph{covered} by the component $C$). 
\end{enumerate}
\end{theorem}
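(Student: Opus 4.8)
The plan is to prove the two implications separately, leaning throughout on the standard structure theory of chordal graphs and minimal triangulations: the clique-tree representation of a chordal graph; the fact that in a chordal graph $H$ with maximal clique $K$, the neighbourhood $N_H(C')$ of any component $C'$ of $H \setminus K$ is a minimal separator of $H$ strictly contained in $K$; the characterisation of a minimal triangulation $H = G+F$ as a chordal supergraph of $G$ in which $(G+F) - e$ is non-chordal for every $e \in F$; and the ``block'' picture of a minimal triangulation, in which all fill edges lie inside the sets $C \cup N_G(C)$ for components $C$ of $G \setminus \pmc$, so that $H$ adds no edge between distinct components of $G \setminus \pmc$ and $N_H(C) = N_G(C)$ for each such $C$.

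For the direction ``$\pmc$ is a potential maximal clique $\Rightarrow$ (1) and (2)'', I would fix a minimal triangulation $H = G+F$ in which $\pmc$ is a maximal clique. Condition (1) falls out of the clique-tree picture: for every component $C'$ of $H \setminus \pmc$ the set $N_H(C')$ is a minimal separator of $H$ properly inside $\pmc$, and since passing from $G$ to $H$ only merges components of $G \setminus \pmc$, each component $C$ of $G\setminus\pmc$ sits inside some $C'$ with $N_G(C) \subseteq N_H(C') \subsetneq \pmc$. For condition (2), take distinct $x,y \in \pmc$ with $xy \notin E(G)$; then $xy \in F$, so $H - xy$ is non-chordal and contains an induced cycle $D$. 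This $D$ must run through both $x$ and $y$, and non-consecutively, since otherwise re-inserting $xy$ would not chord $D$, contradicting chordality of $H$. As $\pmc$ is a clique, an internal vertex $z \notin\{x,y\}$ of an $x$--$y$ arc of $D$ that lies in $\pmc$ is adjacent in $H - xy$ to both $x$ and $y$, so these are forced to be its two cycle-neighbours (which collapses that arc to $x\,z\,y$ and forces the other arc to avoid $\pmc$ in its interior). Hence one of the two $x$--$y$ arcs of $D$, say $P = x\,u_1\cdots u_m\,y$ with $m \geq 1$, has all of $u_1,\dots,u_m$ outside $\pmc$. Since $\{u_1,\dots,u_m\}$ is connected in $H$ and avoids $\pmc$, the block picture of $H$ puts it inside a single component $C$ of $G\setminus\pmc$, and $xu_1, yu_m \in E(H)$ then give $x,y \in N_H(C) = N_G(C)$: the non-edge $xy$ is covered by $C$.

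For the converse, assume (1) and (2); I would construct a minimal triangulation of $G$ with $\pmc$ as a maximal clique, by recursion on $|V(G)|$. Let $C_1,\dots,C_p$ be the components of $G \setminus \pmc$ and $S_i := N_G(C_i)$, each $\subsetneq \pmc$ by (1). Complete $\pmc$ into a clique; for each $i$ let $G_i$ be $G[S_i \cup C_i]$ with $S_i$ completed into a clique, let $H_i$ be a minimal triangulation of $G_i$ (a strictly smaller instance, since $\pmc \setminus S_i \neq \emptyset$), and let $H$ be the union of all these edge sets on $V(G)$. I would then check: (a) $H$ is chordal --- an induced cycle of $H$ meets the clique $\pmc$ in at most two vertices, and, using that $H$ has no edge between $C_i$ and $C_j$ nor between $C_i$ and $\pmc\setminus S_i$, a short case analysis confines it to some $S_i \cup C_i$, where it is already an induced cycle of $H_i$; (b) $\pmc$ is a maximal clique of $H$, since no vertex of any $C_i$ can be adjacent to $\pmc\setminus S_i \neq \emptyset$; (c) $H$ is minimal --- a fill edge inside some $S_i\cup C_i$ is certified non-removable by minimality of $H_i$, while for a fill edge $xy$ with $x,y\in\pmc$, condition (2) supplies a component $C_i$ with $x,y \in S_i = N_G(C_i)$, from which I would extract an induced cycle of $H - xy$ running through $C_i$ (and through $\pmc\setminus S_i$ when $x$ and $y$ have no common $H$-neighbour in $C_i$) that witnesses that $xy$ cannot be deleted. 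The base case is $\pmc = V(G)$, where (2) forces $G$ to be a clique, already chordal with $\pmc$ its unique maximal clique.

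The main obstacle will be step (c) of the converse: verifying that every fill edge of the constructed triangulation is genuinely necessary. The delicate point is that a fill edge $xy$ inside $\pmc$ can be forced only by an induced cycle that leaves the block $S_i \cup C_i$ and passes through $\pmc\setminus S_i$, so the argument cannot be purely local. Pinning this down, together with the companion facts that each $N_G(C_i)$ is in fact a minimal separator of $G$ (the analogue of Theorem~\ref{thm:pmc-seps} that one needs before $\pmc$ is known to be a potential maximal clique) and that a minimal triangulation respects the block decomposition along these separators, is exactly what makes this a theorem rather than a routine check; this is the part handled in \cite{todinca} via a systematic analysis of the blocks of minimal separators, and I would follow that route.
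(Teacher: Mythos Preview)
The paper does not prove this theorem at all: it is stated in the preliminaries as a known result and attributed to Bouchitt\'{e} and Todinca~\cite{todinca}, with no proof given. So there is nothing in the present paper to compare your proposal against.

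That said, your sketch is a faithful outline of the original Bouchitt\'{e}--Todinca argument, and you correctly identify where the real work lies. A couple of remarks on the sketch itself. In the forward direction, your treatment of condition~(2) is essentially right, but note that it silently uses the ``block picture'' (no fill edge of a minimal triangulation joins two distinct components of $G\setminus\pmc$, and $N_H(C)=N_G(C)$) before that fact has been established; in~\cite{todinca} this is proved first, and you should flag it as a prerequisite rather than a consequence. In the converse, be careful with step~(c): a fill edge $xy$ with $x,y\in S_i$ is \emph{not} a fill edge of $H_i$ relative to $G_i$, since you already completed $S_i$ to a clique when forming $G_i$; so minimality of $H_i$ says nothing about it, and every such edge must be handled by your second mechanism (the cycle through $C_i$ and possibly $\pmc\setminus S_i$). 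You seem aware of this, but the phrasing ``a fill edge inside some $S_i\cup C_i$ is certified non-removable by minimality of $H_i$'' overstates what that case covers. With those caveats, deferring the remaining details to~\cite{todinca} is entirely appropriate here.
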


\section{Hitting Separators and Potential Maximal Cliques}\label{sec:hitSep}

\subsection{Proof of Theorem~\ref{thm:hit-sep}}
Let $G$ be a graph, let $S$ be a minimal separator of $G$, let $\mu$ be any probability measure on $S$, and let $\sephit < 1$ be some constant chosen later. 
For sake of contradiction, assume that for every $v \in V(G)$ we have $\mu(N(v)) < \sephit$. 
This implies that $\mu(x) < \sephit$ for every $x \in S$, because by the minimality of $S$, $x$ has a neighbour $v$ in some (full) component of $G \setminus S$, and thus $\mu(x) \leq \mu(N(v)) < \sephit$.

Let $A$ and $B$ be two full components of $G \setminus S$.
We say that $x \in S$ is \emph{lucky} (with respect to $A$) if there exists an induced $P_4$ in $G$
with one endpoint in $x$ and the remaining three vertices in $A$.
We say that a pair $(x,y) \in S \times S$ is \emph{lucky} (with respect to $A$) if $x$ is lucky or there exists an induced $P_4$ in $G$ with endpoints $x$ and $y$ and its middle two vertices in $A$.
The following lemma is the crucial step in the argumentation.

\onlyfull{
\begin{figure}[tb]
\begin{center}
\includegraphics{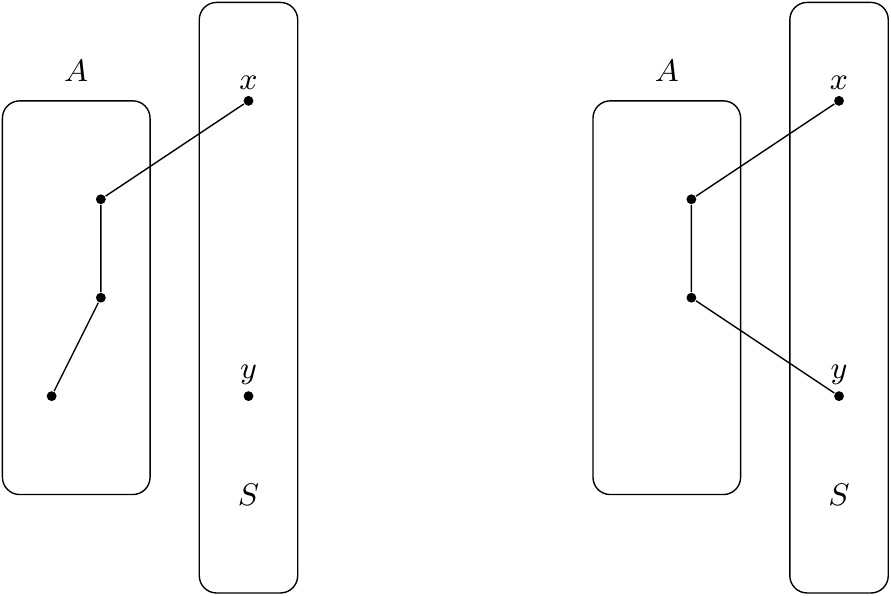}
\caption{Two possibilities for $(x,y)$ being lucky.}
\label{fig:lucky}
\end{center}
\end{figure}
}

\begin{lemma}\label{lem:lucky}
Let $G$ be a graph, and let $S$, $\mu$, $\sephit$, $A$, and $B$ as above.
If we choose two vertices $x,y \in S$ independently at random according to distribution $\mu$,
then the probability that $(x,y)$ is not lucky with respect to $A$ (or $B$) is less than $6\sephit$.
\end{lemma}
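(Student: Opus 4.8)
The plan is to bound, separately, the probability that $x$ is unlucky and $x$ fails to be connected to $y$ via a $P_4$ through $A$, and to show that each such bad event forces $\mu$ to be concentrated on a small set, contradicting the assumption $\mu(N(v)) < \sephit$ for all $v$. Concretely, I would fix the full component $A$ and first analyse, for a fixed $x \in S$, the structure of the set $B_x$ of vertices $y \in S$ such that neither $x$ is lucky nor $(x,y)$ is lucky via a $P_4$ with middle vertices in $A$.

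\textbf{Step 1: A shortest-path/BFS-layering argument inside $A$.} Since $A$ is a full component, every vertex of $S$ has a neighbour in $A$; pick a neighbour $a_x \in A$ of $x$. For any $y \in S$, take a shortest path from $a_x$ to some neighbour $a_y \in A$ of $y$ inside $G[A]$ (which exists since $A$ is connected). Walking along $x, a_x, \dots, a_y, y$ and contracting to an induced subpath, one finds either an induced $P_4$ with one endpoint $x$ and three vertices in $A$ (so $x$ is lucky), or an induced $P_4$ with endpoints $x, y$ and two middle vertices in $A$ (so $(x,y)$ is lucky), unless the path is extremely short — i.e.\ unless $a_x$ or $a_y$ is adjacent to both $x$ and $y$, or $a_x a_y \in E(G)$ with $a_x$ adjacent to $x$ and $a_y$ adjacent to $y$ but also cross-adjacencies collapsing everything. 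The upshot I expect: if $(x,y)$ is unlucky then $x$ and $y$ have a \emph{common neighbour} in $A$ at distance-controlled location, or more precisely $y \in N(N[a_x] \cap A)$ for the specific neighbour structure — so $B_x$ is contained in the union of a bounded number of neighbourhoods $N(v)$. Here is where $P_7$-freeness enters: a longer induced path $x - a_x - \cdots - a_y - y$ together with a symmetric structure coming from the second full component $B$ (or from re-running the argument) would create an induced $P_7$ (or $P_6$), so the path inside $A$ between the relevant neighbourhoods must be short, pinning $B_x \subseteq N(v_1) \cup N(v_2)$ for at most two (or three) vertices $v_i$.

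\textbf{Step 2: Turning structure into a measure bound.} Once $B_x \subseteq N(v_1) \cup \dots \cup N(v_k)$ with $k$ an absolute constant (I anticipate $k=2$ or $k=3$ after optimizing), we get $\mu(B_x) \le \sum_i \mu(N(v_i)) < k\sephit$ for every $x \in S$. Therefore
\[
\Pr_{x,y \sim \mu}[(x,y) \text{ not lucky w.r.t. } A] \;=\; \sum_{x \in S} \mu(x)\, \mu(B_x) \;<\; k\sephit.
\]
Summing the analogous bound for $B$ (the two events "unlucky w.r.t.\ $A$" and "unlucky w.r.t.\ $B$" are different events, but the lemma statement only needs the single bound $6\sephit$ for one of them, reading the parenthetical as "for either choice"), and being slightly generous with the constant to absorb the boundary cases where $x=y$ or where the shortest path has length $0$ or $1$, yields the claimed bound of $6\sephit$. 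I would set the constant count so that $k \le 6$ comfortably covers all degenerate sub-cases.

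\textbf{Main obstacle.} The delicate part is Step 1: a clean case analysis of how a shortest path in $G[A]$ between $N(x) \cap A$ and $N(y) \cap A$, together with the vertices $x,y \in S$, either yields the desired induced $P_4$ or collapses to a short configuration — and crucially, ruling out the "medium length" cases (path of length $2$ or $3$ in $A$) by invoking $P_7$-freeness, which requires exhibiting the forbidden long induced path using the second full component $B$ to extend on the $y$-side and (possibly) on the $x$-side. Getting the induced-ness right (no unwanted chords among the $A$-vertices, between $A$-vertices and $S$, and across the two sides) is where the care is needed; the measure-theoretic Step 2 is then a one-line union bound.
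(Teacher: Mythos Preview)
There is a genuine gap. First, note that the lemma is stated for an \emph{arbitrary} graph $G$ (the only standing assumption is $\mu(N(v)) < \sephit$ for all $v$); $P_7$-freeness is not a hypothesis here --- it is used only afterwards, in the proof of Theorem~\ref{thm:hit-sep}, to glue two lucky $P_4$'s into a $P_7$. So your plan in Step~1 to ``rule out the medium-length cases by invoking $P_7$-freeness'' and to ``use the second full component $B$ to extend'' is appealing to hypotheses that are simply not available. The lemma must be proved for any graph, and indeed the paper's proof never mentions $P_7$'s or the component $B$.

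Second, even setting that aside, your shortest-path sketch does not deliver the structural conclusion $B_x \subseteq N(v_1) \cup \cdots \cup N(v_k)$ with $k$ an absolute constant. If $x$ is not lucky, one can show every connected component of $G[A \setminus N(x)]$ is a module of $G[A]$; then for $r \in (N(y)\setminus N(x))\cap A$ in such a component $C$, any $v \in N(x)\cap A$ adjacent to $C$ yields a candidate $P_4$ $x\,v\,r\,y$ --- but this $v$ depends on \emph{which} component $C$ the vertex $r$ lies in, and a priori there can be many components, so the bad set $B_x$ is only contained in a union of unboundedly many neighbourhoods. The paper's proof closes exactly this gap with a modular-decomposition argument: it fixes in advance two vertices $p,q$ in distinct maximal proper strong modules of $G[A]$, excludes the event $x \in N(p)\cup N(q)$ (cost $2\sephit$), and then shows that under this exclusion $G[A\setminus N(x)]$ has a \emph{single} component, so one vertex $v$ suffices. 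That global pivot choice is the idea your BFS outline is missing.
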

\begin{proof}
If $|A| = 1$, then the single vertex $a$ of $A$ is adjacent to all vertices of $S$, as $A$ is a full component of $G \setminus S$. Hence, $\mu(N(a)) = 1 > \sephit$, a contradiction. Therefore, $|A| > 1$.

Consider the graph $G[A]$, and let $\mathcal{M}$ be the family of maximal proper strong modules of $G[A]$. Note that $\mathcal{M}$ is a modular partition and that the quotient graph of this partition is a clique or a prime graph
\ifshort{(see e.g.\ \cite[Theorem~2]{habibpaul})}{(Theorem~\ref{thm:modular})}, since $G[A]$ is connected and $|A| > 1$. 
Now pick two arbitrary vertices $p,q \in A$ in two distinct modules of $\mathcal{M}$ that are adjacent in the quotient graph. We can indeed pick such $p,q$, because the quotient graph is connected, as $G[A]$ is connected. Moreover, $|\mathcal{M}| > 1$, since the set of singleton modules (one for each vertex) is a family of at least two proper strong modules (recall that $|A| > 1$). 

Consider some $(x,y) \in S \times S$ that are chosen independently at random according to distribution $\mu$. In the following, we continuously use that $\mu(N(v)) < \sephit$ for every $v \in V(G)$ and $\mu(u) < \sephit$ for every $u \in S$. 
With probability less than $2\sephit$ we have $x \in N(p) \cup N(q)$, and with probability less than $2\sephit$ we have $x = y$ or $xy \in E(G)$.
Furthermore, with probability less than $\sephit$ we have $N(y) \cap A \subseteq N(x) \cap A$, since for a fixed choice of $y$ and $v \in N(y) \cap A$,
the probability that $x \in N(v)$ is at most $\sephit$. 
Now assume that none of the aforementioned events happen, and pick arbitrary $r \in (N(y) \setminus N(x)) \cap A$.

Let $\mathcal{C}$ be the family of connected components of $G[A \setminus N(x)]$. Consider any $C \in \mathcal{C}$ and any vertex $v \in N(x) \cap A$. If $v$ is neither fully adjacent nor fully anti-adjacent to $C$, then since $C$ is connected, there exist two neighbouring vertices $u,w \in C$ such that $u \in N(v)$ and $w \not\in N(v)$. Since $u,w \not\in N(x)$ by the definition of $C$ and $\mathcal{C}$, the vertices $x,v,u,w$ form a $P_4$ in $G$ with one endpoint in $x$; then, $x$ and by extension $(x,y)$ is lucky. Hence, we may assume that for every $C \in \mathcal{C}$ and every $v \in N(x) \cap A$, the vertex $v$ is either fully adjacent or fully anti-adjacent to $C$. In particular, every $C \in \mathcal{C}$ is a module of $G[A]$. 


Consider the component $C \in \mathcal{C}$ that contains the vertex $p$; note that $C$ exists, because  $x \not\in N(p)$ by assumption\onlyfull{ (see Fig.~\ref{fig:hitsep})}. Since $C$ is a module of $G[A]$ and $\mathcal{M}$ is the family of maximal proper strong modules of $G[A]$,
either there exists a module $M \in \mathcal{M}$ that contains $C$, or $C$ is a union of several modules of $\mathcal{M}$ and the quotient graph $G[A] / \mathcal{M}$ is a clique. 

If $C \subseteq M$ for some $M \in \mathcal{M}$, then consider the module $M' \in \mathcal{M}$ that contains $q$. By the choice of $M$ and $M'$, $M'$ is fully adjacent to $M$, and in particular, $M'$ is fully adjacent to $C$.
Since $C \in \mathcal{C}$ and $C \subseteq M$, we have that $M'$ cannot contain any vertices of any other component in $\mathcal{C}$. Hence, $M' \subseteq N(x)$. However, $q \notin N(x)$, a contradiction.

Therefore, $C$ is a union of several modules of $\mathcal{M}$ and the quotient graph $G[A] / \mathcal{M}$ is a clique. Then $C$ is fully adjacent to $A \setminus C$, and in particular to every $C' \in \mathcal{C} \setminus \{C\}$, which implies that $\mathcal{C} = \{C\}$. 
Therefore, there exists a vertex $v \in N(x) \cap A$ with $A \setminus N(x) \subseteq N(v)$, because $G[A]$ is connected and $C = A \setminus N(x)$ is a module. 
Observe that $y \in N(v)$ with probability less than $\sephit$, since $\mu(N(v)) < \sephit$. If this does not happen (i.e., $y \not\in N(v)$), then $x,v,r,y$ form a $P_4$, because $r \in (N(y) \setminus N(x)) \cap A$. Hence, $(x,y)$ is lucky.

By the union bound, the total probability that any of the aforementioned events happen is at most $6\sephit$. The lemma follows.
\end{proof}

\onlyfull{
\begin{figure}[tb]
\begin{center}
\includegraphics{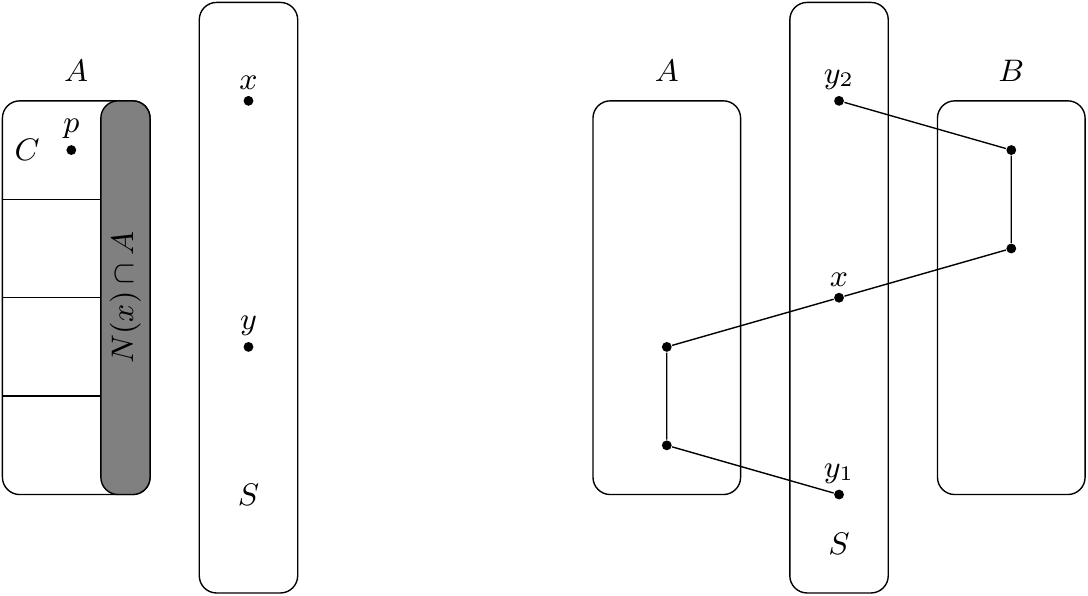}
\caption{The left panel shows part of the reasoning of Lemma~\ref{lem:lucky} with the choice of $C$ being the connected component of $G[A \setminus N(x)]$ that contains $p$.
  The right panel shows an archetypical $P_7$ constructed in the proof of Theorem~\ref{thm:hit-sep}.}
\label{fig:hitsep}
\end{center}
\end{figure}
}

We are now ready to conclude the proof of Theorem~\ref{thm:hit-sep}.
Pick three vertices $x,y_1,y_2 \in S$ independently at random, each with distribution $\mu$. The goal will be to find a $P_4$ in $A$ from $x$ (possibly to $y_1$) and a $P_4$ in $B$ from $x$ (possibly to $y_2$) that jointly form a $P_7$ in $G$. 
Consider the following set of ``bad'' events. In the below, we repeatedly rely on Lemma~\ref{lem:lucky} and the assumptions that $\mu(N(v)) < \sephit$ for every $v \in V(G)$ and $\mu(u) < \sephit$ for every $u \in S$.
\begin{itemize}
\item $(x,y_1)$ is not lucky with respect to $A$; this happens with probability less than $6\sephit$. Otherwise, let $P^1$ be the witnessing $P_4$.
\item $(x,y_2)$ is not lucky with respect to $B$; this happens with probability less than $6\sephit$. Otherwise, let $P^2$ be the witnessing $P_4$.
\item Some vertices from the set $\{x,y_1,y_2\}$ are equal or adjacent; this happens with probability less than $6\sephit$.
\item One of the (two or three) vertices from $V(P^1) \cap A$ is adjacent to $y_2$; since the choice of $x$ and $y_1$ is independent of the choice of $y_2$, and the path $P^1$ is a function of the pair $(x,y_1)$ only,
  this happens with probability less $3\sephit$ ($y_2$ needs to land outside the neighbourhoods of $V(P^1) \cap A$).
\item One of the (two or three) vertices from $V(P^2) \cap B$ is adjacent to $y_1$; since the choice of $x$ and $y_2$ is independent of the choice of $y_1$, and the path $P^2$ is a function of the pair $(x,y_2)$ only,
  this happens with probability less than $3\sephit$ ($y_1$ needs to land outside the neighbourhoods of $V(P^2) \cap B$).
\end{itemize}
By the union bound, the probability that none of the aforementioned ``bad'' events happen is greater than $1-24\sephit$. Hence, for sure when $\sephit = \frac{1}{24}$, there is a choice of $x,y_1,y_2 \in S$ for which the paths $P^{1}$ and $P^{2}$ exist and jointly form a $P_7$ in $G$. Hence, if $G$ is $P_7$-free, then there is a vertex $v \in V(G)$ satisfying $\mu(N(v)) \geq \sephit$ for some constant $\sephit > 0$ (in fact even $\sephit \geq \frac{1}{24}$).

\subsection{Proof of Theorem~\ref{thm:hit-pmc}}
The main tool is the following general lemma.

\begin{lemma}\label{lem:pmc-main}
Let $H$ be a graph on $n_H$ vertices and $m_H$ edges,
let $G$ be a graph, let $\pmc$ be a potential maximal clique in $G$, and let $\mu$ be a probability measure on $\pmc$. Then there exists either:
\begin{enumerate}
\item a vertex $v \in V(G)$ with $\mu(v) > \frac{1}{2n_H^2}$ or with $\mu(N(v)) > \frac{1}{2n_H^2}$;
\item a minimal separator $S \subseteq V(G)$ with $\mu(S) > \frac{1}{2n_Hm_H}$; or
\item an induced subgraph of $G$ isomorphic to a graph obtained from $H$ by replacing every edge by a path of length at least two (i.e., subdividing at least once).
\end{enumerate}
\end{lemma}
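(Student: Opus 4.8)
The plan is to establish the contrapositive. Assume that neither of the first two conclusions holds, so that $\mu(\{v\})\le\frac{1}{2n_H^2}$ and $\mu(N_G(v))\le\frac{1}{2n_H^2}$ for every $v\in V(G)$, and $\mu(S)\le\frac{1}{2n_Hm_H}$ for every minimal separator $S$ of $G$; under these assumptions I will construct an induced subgraph of $G$ isomorphic to a graph obtained from $H$ by subdividing every edge at least once, which is the third conclusion. First I would dispose of degenerate cases. Since $\mu$ is a probability measure on $\pmc$ we have $\pmc\neq\emptyset$, and $|\pmc|=1$ immediately gives the first conclusion; if $\pmc$ induces a clique in $G$ then $|\pmc|\ge 2$ and any $v\in\pmc$ has $\mu(N_G(v))\ge 1-\mu(\{v\})>\frac{1}{2n_H^2}$, again a contradiction; and $n_H\le 2$ or $m_H=0$ follow from the selection step below together with a single application of Theorem~\ref{thm:pmc}. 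So I may assume $\pmc$ is not a clique and $n_H\ge 3$, $m_H\ge 1$.

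The two structural facts I would use are Theorem~\ref{thm:pmc-seps}, which says that for every component $C$ of $G\setminus\pmc$ the set $N_G(C)$ is a minimal separator and hence $\mu(N_G(C))\le\frac{1}{2n_Hm_H}$, and the second part of Theorem~\ref{thm:pmc}, which says that every non-edge of $G[\pmc]$ is \emph{covered}: both its endpoints lie in $N_G(C)$ for some component $C$. I would then build the subdivided copy of $H$ incrementally, processing the edges of $H$ in a BFS-style order, one connected component of $H$ at a time. The partial object maintained is a branch vertex $w(i)\in\pmc$ for each vertex $i$ of $H$ seen so far, and for each edge $ij$ seen so far an induced path $P_{ij}$ from $w(i)$ to $w(j)$ of length at least two whose interior lies in a single component $C_{ij}$ of $G\setminus\pmc$. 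I also track a ``footprint'' $F\subseteq\pmc$: the union of the sets $N_G[w(i)]\cap\pmc$ over placed branch vertices and the sets $N_G(C_{ij})\cap\pmc$ over placed edges. By the two measure hypotheses, $\mu(F)\le n_H\cdot\frac{1}{n_H^2}+m_H\cdot\frac{1}{2n_Hm_H}<1$ at all times. When a new vertex of $H$ is encountered I place its branch vertex in the nonempty set $\pmc\setminus F$; this guarantees it is non-adjacent to every previously placed branch vertex and to every vertex of every previously used component of $G\setminus\pmc$, hence to every previously placed path interior, and in particular makes $w(i)w(j)$ a non-edge of $G$ for the edge $ij$ being processed. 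I then choose a component of $G\setminus\pmc$ covering $w(i)w(j)$ that is distinct from the already-used components and whose neighbourhood avoids every placed branch vertex other than $w(i),w(j)$; a shortest $w(i)$-$w(j)$ path through such a component is induced, has length at least two, and extends the partial subdivision, after which I update $F$. Finally I place the branch vertices of any isolated vertices of $H$ in $\pmc\setminus F$. The resulting vertices and paths induce a copy of the desired subdivision of $H$.

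The hard part will be the component-selection step: a priori, every component covering the non-edge $w(i)w(j)$ could already be in use, or could have some of the (at most $n_H-2$) other branch vertices in its neighbourhood, in which case routing a path through it would create an unwanted edge and spoil inducedness. I would handle this in the spirit of the proof of Theorem~\ref{thm:hit-sep}: instead of fixing branch vertices adversarially, choose each new branch vertex as a $\mu$-random vertex, interleaved with the construction, and bound the harmful events by a union bound — a previously placed branch vertex landing in $N_G(C)$ for a component $C$ we later need occurs with probability at most $\mu(N_G(C))\le\frac{1}{2n_Hm_H}$, and two chosen branch vertices being adjacent occurs with probability at most $\frac{1}{2n_H^2}$, so summing over the $O(n_H)$ branch-vertex choices and the $O(n_Hm_H)$ relevant branch-vertex/component pairs keeps the total failure probability below $1$. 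This is exactly what forces the thresholds $\frac{1}{2n_H^2}$ and $\frac{1}{2n_Hm_H}$ in the statement; phrased deterministically, the same accounting says that if the construction ever gets stuck then the footprint $F$ — a union of at most $n_H$ closed neighbourhoods and at most $m_H$ minimal separators — has measure at least $1$, so one of those $O(n_H+m_H)$ pieces is heavy and yields the first or second conclusion. A secondary technical point is that one needs enough distinct components of $G\setminus\pmc$ to carry the $m_H$ paths; since in the intended applications $H$ is a path (hence a tree), one can instead select each covering component before the new branch vertex it will receive, which removes the reuse concern, and this is the version actually needed for Theorems~\ref{thm:hit-pmc} and~\ref{thm:main}.
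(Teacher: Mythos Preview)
Your core idea—sample the branch vertices according to $\mu$ and control the bad events by a union bound—is exactly what the paper does, but the paper's execution is considerably simpler and sidesteps the difficulties you flag in your last paragraph.

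The paper does not build the subdivision incrementally. Instead it (i) fixes \emph{in advance}, for every non-adjacent pair $x,y\in\pmc$, one component $C(x,y)$ of $G\setminus\pmc$ covering that non-edge (this exists by Theorem~\ref{thm:pmc}); (ii) chooses all $n_H$ branch vertices $x_p$ simultaneously and independently from $\mu$; and (iii) for each $pq\in E(H)$ sets $C_{pq}:=C(x_p,x_q)$. The bad events are ``$x_p=x_q$ or $x_px_q\in E(G)$'' for some pair $p,q$ (probability at most $1/n_H^2$ each, $\binom{n_H}{2}$ pairs, total $<1/2$) and ``$x_r\in N(C_{pq})$'' for some $r\notin\{p,q\}$ and some $pq\in E(H)$; since $C_{pq}$ is a function of $(x_p,x_q)$ only and $x_r$ is independent of that pair, the latter has probability at most $\mu(N(C_{pq}))\le 1/(2n_Hm_H)$ by Theorem~\ref{thm:pmc-seps}, and there are fewer than $n_Hm_H$ such triples. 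With positive probability all bad events are avoided; then take shortest paths through the $C_{pq}$.

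Two things this buys over your incremental scheme. First, the ``distinct components'' worry vanishes automatically: if no bad event occurs then $N(C_{pq})\cap\{x_r:r\in V(H)\}=\{x_p,x_q\}$, so different edges of $H$ cannot share a component, and the interiors of the paths are pairwise non-adjacent and non-adjacent to all other branch vertices. No special pleading for $H$ a tree is needed. Second, there is no circularity between ``place a branch vertex outside the footprint'' and ``choose a component whose neighbourhood avoids already-placed branch vertices'': because the covering map $C(\cdot,\cdot)$ is fixed before any random choice, every event you must bound has the form ``a $\mu$-random point lands in a set determined by other, independent, $\mu$-random points'', which is precisely what the union bound handles.

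Your deterministic footprint variant, as you yourself note, does not control the event that an already-placed $w(k)$ lies in $N_G(C)$ for a component $C$ you only select later; once you patch this with randomness you are essentially running the paper's argument with an unnecessary BFS scaffolding on top. I would drop the incremental construction and present the all-at-once random argument directly.
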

\begin{proof}
Let $H$, $G$, $\pmc$, and $\mu$ be as in the statement, and assume for sake of contradiction that neither of the first two outcomes happen.
Consider the following random experiment: independently, for every $p \in V(H)$,
choose a vertex $x_p \in \pmc$ according to the distribution $\mu$.

For two distinct vertices $p,q \in V(H)$, we have $x_p = x_q$ with probability at most $\frac{1}{2n_H^2}$, and $x_px_q \in E(G)$ (i.e., $x_q \in N(x_p)$)
  with probability at most $\frac{1}{2n_H^2}$. Consequently, all vertices $X := \{x_p : p \in V(H)\}$ are pairwise distinct and nonadjacent with probability
  at least
  $$1 - \binom{n_H}{2} \cdot 2 \cdot \frac{1}{2n_H^2} > \frac{1}{2}.$$

Assume that the aforementioned event happens.
For every two distinct and nonadjacent vertices $x,y \in \pmc$, fix a component $C(x,y)$ of $G \setminus \pmc$ that covers the non-edge $xy$ (i.e., $x,y \in N(C)$).
For every edge $pq \in E(H)$, consider the component $C_{pq} := C(x_p,x_q)$. As the choices of $x_r$ for distinct vertices $r \in V(H)$ are independent,
the probability that $x_r \in N(C_{pq})$ for a fixed $r \in V(H) \setminus \{p,q\}$ is at most $\frac{1}{2n_Hm_H}$, since $N(C_{pq})$ is a minimal separator of $G$ by Theorem~\ref{thm:pmc-seps} and thus $\mu(N(C_{pq})) \leq \frac{1}{2n_Hm_H}$ by assumption. 
Consequently, the probability that $X$ is an independent set of size $h$ and for every $pq \in E(H)$ we have $N(C_{pq}) \cap X = \{x_p,x_q\}$ is strictly greater than
$$\frac{1}{2} - n_H m_H \cdot \frac{1}{2n_H m_H} = 0.$$

If this event happens, then for every $pq \in E(H)$ choose a shortest path between $x_p$ and $x_q$ with internal vertices in $C_{pq}$. 
The union of all aforementioned paths forms an induced subgraph of $G$ isomorphic to a graph obtained from $H$ by replacing each edge with a path of length at least two, obtaining the last outcome.
\end{proof}

\onlyfull{
\begin{figure}[tb]
\begin{center}
\includegraphics{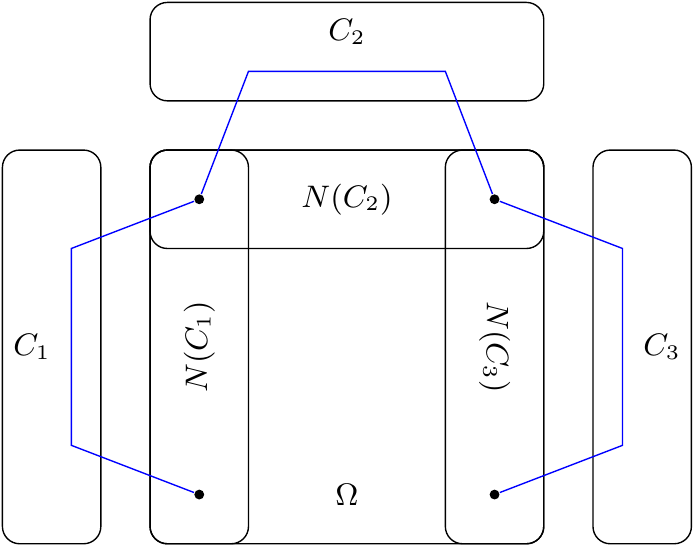}
\caption{Illustration of the proof of Lemma~\ref{lem:pmc-main} for $H=P_4$.}
\label{fig:hitpmc}
\end{center}
\end{figure}
}

We now prove Theorem~\ref{thm:hit-pmc} using Lemma~\ref{lem:pmc-main}. 
Let $G$ be a connected $P_7$-free graph on at least two vertices, let $\pmc$ be a potential maximal clique of $G$, let $\mu$ be any probability measure on $\pmc$, and let $\pmchit$ be some constant chosen later. Let $\alpha$ denote the constant of Theorem~\ref{thm:hit-sep}. For sure, if $\pmchit = \min\{\frac{\sephit}{24},\frac{1}{32}\}$, then the following happens. Apply Lemma~\ref{lem:pmc-main} with $H = P_4$ and consider its outcomes.
\begin{enumerate}
\item If $\mu(N(v)) > \frac{1}{32} \geq \pmchit$, then we are done. Otherwise, if $\mu(v) > \frac{1}{32} > \pmchit$, then by connectivity of $G$ there is a vertex $u \in N(v)$ with $\mu(N(u)) \geq \pmchit$.
\item Note that $\mu(S) > \frac{1}{24}$. Apply Theorem~\ref{thm:hit-sep} to $S$ and the restriction $\mu'$ of $\mu$ to $S$. It follows that there is a $v \in V(G)$ with $\mu'(N(v)) \geq \sephit$ and thus $\mu(N(v)) > \frac{\sephit}{24} \geq \pmchit$.
\item By the choice of $H$, this implies the existence of an induced $P_7$ in $G$, a contradiction.
\end{enumerate}
Therefore, there is a vertex $v \in V(G)$ satisfying $\mu(N(v)) \geq \pmchit$ for some contant $\pmchit > 0$ (in fact even $\pmchit \geq \frac{1}{576}$).


\section{Nuking a Graph}\label{sec:nukes}
In this section we study the following notion.

\begin{definition}[nuke, shelter]\label{def:nuke}
For a constant $0 < \nukeps \leq 0.1$ and a threshold $\nukth \geq 0$, a set of vertices $X$ is a \emph{$(\nukeps,\nukth)$-nuke} in a graph $G$
if the following holds:
\begin{itemize}
\item $(1-2\nukeps) |V(G)| \leq \nukth \leq (1-\nukeps)|V(G)|$
\item $|X| \leq \nukeps |V(G)|$;
\item for every connected component $C$ of $G-X$ we have $|C| + |X| \leq \nukth$.
\end{itemize}
Given a $(\nukeps,\nukth)$-nuke $X$ in $G$, any connected component of $G-X$ is called a \emph{shelter}.\footnote{The main motivation for introducing the notion of a shelter is to explicitly distinguish connected components of $G-X$ from connected components of $G-\pmc$ for some potential maximal clique $\pmc$; we will call the former \emph{shelters}, while the latter will be simply \emph{connected components}.}
\end{definition}
If the parameters $\nukeps$ and $\nukth$ are clear from the context, we will simply call the set $X$ \emph{a nuke in $G$}.

Intuitively, a nuke is a small set of vertices in $G$ whose removal breaks $G$ into connected components of multiplicatively smaller size.
Our algorithm keeps track of a nuke $X$ in the given input $P_6$-free graph $G$ and tries to branch on vertices of $G$ so that $X$ will be removed
from $G$ as quickly as possible. 
This motivation introduces two delicate issues that result in a slightly technical definition of a nuke.
First, during branching we need to keep the threshold $\nukth$ constant, while the size of $G$ drops a bit --- if we define 
the nuke so that, say, $|C| + |X| \leq (1-\nukeps) |V(G)|$, a set $X$ may stop to be a nuke due to a removal of a vertex from $G$ and consequent decrease of the bound $(1-\nukeps)|V(G)|$.
Second, we would like to argue about \emph{inclusion-wise minimal nukes}, which makes the measure $|C|+|X|$ (as opposed to simply $|C|)$ more natural, as we can then assume that every element of an inclusion-wise minimal nuke is adjacent to at least two shelters.

The rest of this section is devoted to a proof of the following structural statement.

\begin{theorem}\label{thm:hit-nuke}
There exists a constant $\nukhit > 0$ such that for every constant $0 < \nukeps \leq 0.1$, for every connected $P_6$-free graph $G$ on at least two vertices,
for every threshold $(1-2\nukeps)|V(G)| \leq \nukth \leq (1-\nukeps)|V(G)|$, for every inclusion-wise minimial $(\nukeps,\nukth)$-nuke $X$ in $G$, and for every probability measure $\meas$ on $X$, there exists a vertex $v \in V(G)$
with $\meas(N(v)) \geq \nukhit$.
\end{theorem}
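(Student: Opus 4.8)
## Proof plan for Theorem~\ref{thm:hit-nuke}

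The plan is to reduce the statement about nukes to the already-proven statement about potential maximal cliques (Theorem~\ref{thm:hit-pmc}), at the cost of losing one more constant factor and one more ``path length'' in the forbidden induced subgraph — which is exactly why the hypothesis here is $P_6$-freeness rather than $P_7$-freeness. Concretely, I would argue by contradiction: assume $\meas(N(v)) < \nukhit$ for every $v \in V(G)$, where $\nukhit$ is a small constant to be fixed (something like $\nukhit = \pmchit / c$ for an absolute constant $c$). As a first easy consequence, $\meas(x) < \nukhit$ for every $x \in X$, since by inclusion-wise minimality of $X$ every $x \in X$ has a neighbour (in fact is adjacent to at least two shelters), so $\meas(x) \le \meas(N(v)) < \nukhit$ for a suitable neighbour $v$ — this mirrors the opening move in the proof of Theorem~\ref{thm:hit-sep}.

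The core step is to manufacture, out of the minimal nuke $X$, some potential maximal clique $\pmc$ (or minimal separator) carrying a non-negligible $\meas$-mass, and then invoke Theorem~\ref{thm:hit-pmc} (or Theorem~\ref{thm:hit-sep}) on it. The natural candidate: since $X$ is inclusion-wise minimal, every $x \in X$ touches at least two shelters, and removing $X$ splits $G$ into shelters each of size $\le \nukth - |X| \le (1-\nukeps)|V(G)|$, i.e. none of them is ``large''. I would try to find a single shelter $D$, or a small union of shelters, whose neighbourhood $N_G(D)$ is a minimal separator capturing a constant fraction of $\meas$; alternatively, build a minimal triangulation of $G$ and locate a bag (a potential maximal clique) of the associated tree decomposition that a constant fraction of $X$'s mass is concentrated near. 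The point of using $P_6$-freeness instead of $P_7$-freeness is that the paths-through-shelters used to connect vertices of $X$ (as in Lemma~\ref{lem:pmc-main}, where each edge of $H$ becomes a subdivided path) are shorter here — a nuke is a ``global'' separating structure, so the connecting detours only need length roughly one more than in the separator/PMC case, and a would-be large-$\meas$ vertex set plus these detours would already produce a $P_6$. So the structural heart is: show that if no vertex hits a $\nukhit$-fraction of $X$, then either some minimal separator / PMC hits a $\Theta(\nukhit)$-fraction of $X$ (done, via Theorems~\ref{thm:hit-sep}/\ref{thm:hit-pmc}, contradiction), or one can stitch together short induced paths through the shelters to realize an induced $P_6$, again a contradiction.

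I expect the main obstacle to be precisely the passage ``from a minimal nuke to a potential maximal clique / minimal separator with constant $\meas$-mass'': unlike a minimal separator, a nuke need not itself be (contained in) a minimal separator, and its removal can leave many shelters rather than a clean two-full-component picture — this is exactly the gap the authors flag in the introduction (``not all nukes are minimal separators''). Handling it will likely require a careful case analysis on how many shelters are ``$\meas$-heavy'': if a single shelter $C$ carries most of $N_G$-mass of $X$, use $N_G(C)$; if the mass is spread over many shelters, one should be able to pick two heavy shelters $A,B$, note that all of $X$ lies in $N(A)\cup N(B)$ up to small loss, and run the lucky-$P_4$ argument of Lemma~\ref{lem:lucky} inside $A$ and inside $B$, splicing two $P_4$'s through $x \in X$ into an induced $P_6$ (not $P_7$, because the two paths now meet at a single shared endpoint in $X$ rather than being joined by an extra edge). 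The bookkeeping — ensuring the spliced path is induced, i.e. that the interior of the $A$-path is non-adjacent to the interior of the $B$-path, which holds because $A$ and $B$ are distinct components of $G-X$ and hence mutually non-adjacent — and the choice of constants to make all union bounds go through will be the technical crux, but no genuinely new idea beyond Theorems~\ref{thm:hit-sep}–\ref{thm:hit-pmc} and minimality of $X$ should be needed.
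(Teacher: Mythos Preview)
Your setup (proof by contradiction, $\meas(x)<\nukhit$ for individual $x\in X$, and the observation that interiors of distinct shelters are non-adjacent) is fine and matches the paper. However, the concrete plan that follows has a real gap.

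First, a counting slip: two induced $P_4$'s sharing an endpoint $x\in X$ give a $P_7$, not a $P_6$. So the ``splice two $P_4$'s through $x$'' step cannot yield a contradiction in a $P_6$-free graph; you would need a $P_3$ on one side and a $P_4$ on the other (or two $P_3$'s plus an extra edge), and the lucky-$P_4$ machinery of Lemma~\ref{lem:lucky} does not hand you that directly.

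Second, and more seriously, the reduction ``find a PMC or minimal separator carrying constant $\meas$-mass and invoke Theorem~\ref{thm:hit-sep}/\ref{thm:hit-pmc}'' is not how the argument actually goes, and your case split on ``heavy shelters'' does not cover the situation. The measure $\meas$ lives on $X$, not on shelters; neighbourhoods $N(C)$ of shelters need not be minimal separators; and there is no reason two shelters should absorb most of the $\meas$-mass of $X$. The paper in fact uses Theorem~\ref{thm:hit-pmc} only \emph{negatively}, to show that \emph{every} PMC has $\meas$-mass at most $0.1$ (Claim~\ref{n:small-pmc}), never to locate a heavy one.

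What you are missing is the pivot structure the paper builds. One takes a minimal triangulation in which $X$ is still a nuke with the same shelters (Claim~\ref{n:cmpl}), picks a $\meas$-balanced PMC $\pmc$, and observes that $\pmc\setminus X$ lies entirely in a \emph{single} shelter $D$ (Claim~\ref{n:pmc-shelter}). Minimality of $X$ then forces $D$ to be large and, crucially, $X=N(D)$ (Claims~\ref{n:large-D}--\ref{n:nei-D}). The rest of the proof analyses the \emph{nuked components} of $G-\pmc$ (not shelters): their $\pmc\setminus X$-neighbourhoods form a chain (Claim~\ref{n:diff-nei}), pairs of them are ``linked'' through other components, and a delicate sequence of claims (\ref{n:linked1}--\ref{n:see-half}) shows that two components carry almost all the mass and that a single vertex covers all of $X$ inside one of them. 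This PMC--shelter interplay, and the $X=N(D)$ step in particular, is the engine that produces the short induced $P_3$'s and $P_4$'s needed for $P_6$-freeness; your plan does not reach it.
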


Let $\nukeps$, $\nukth$, $G$, $X$, and $\meas$ be as in the statement of Theorem~\ref{thm:hit-nuke}.
We set $\nukhit = 0.1\pmchit \leq 0.1$, where the constant $\pmchit$ comes from Theorem~\ref{thm:hit-pmc}.
We will prove Theorem~\ref{thm:hit-nuke} by contradiction: assume that for every $v \in V(G)$ we have $\meas(N(v)) < \nukhit$.
We will unravel subsequent observations about the structure of $G$, leading to a final contradiction.

We start with the following observation.
\begin{step}{n:cmpl}
There exists a minimal triangulation $\cG$ of $G$, such that $X$ is a $(\nukeps,\nukth)$-nuke of $\cG$ as well and, moreover,
the shelters of $\cG-X$ are exactly the same as of $G-X$.
\end{step}
\begin{stepproof}
Consider the following completion $G_0$ of $G$: we first turn $X$ into a clique and then, for every shelter $C$ of $G-X$ we turn $C$ into a clique and make it completely adjacent to $X$.
Clearly, $G_0$ is a chordal graph, and the set of connected components of $G_0-X$ and $G-X$ are the same. Consequently, any minimal triangulation $\cG$ of $G$ that is a subgraph of $G_0$
has the required properties.
\end{stepproof}

We fix a minimal triangulation $\cG$ of $G$ satisfying the statement of~\stepref{n:cmpl}. Observe the following.

\begin{step}{n:pmc-shelter}
If two vertices $v,u \in V(G) \setminus X$ appear in the same maximal clique of $\cG$, then they are contained in the same shelter of $G-X$.
\end{step}
\begin{stepproof}
Recall that $X$ is a nuke of $\cG$ as well, with $\cG-X$ having the same set of shelters as $G-X$. Furthermore, $uv \in E(\cG)$.
\end{stepproof}

\begin{step}{n:small-pmc}
For any maximal clique $\pmc$ in $\cG$ we have $\meas(\pmc) \leq 0.1$.
\end{step}
\begin{stepproof}
If for some maximal clique $\pmc$ we have $\meas(\pmc) > 0.1$, then we are done by applying Theorem~\ref{thm:hit-pmc} to $\pmc$ and $\meas$ conditioned on $\pmc$.
\end{stepproof}

By standard arguments, there exists a maximal clique $\pmc$ in $\cG$ such that for every connected component $C$ of $G-\pmc$ we have $\meas(C) \leq 0.5$.
Fix one such maximal clique $\pmc$. We say that a component $C$ of $G-\pmc$ is \emph{nuked} if $C \cap X \neq \emptyset$.

\begin{step}{n:two-nuked}
There are at least two nuked components.
\end{step}
\begin{stepproof}
By the choice of $\pmc$, every nuked component contains at most half of the measure of $X$.
Furthermore, by~\stepref{n:small-pmc}, $\meas(\pmc) \leq 0.1$. Thus, there are at least two nuked components.
\end{stepproof}

By~\stepref{n:pmc-shelter}, all vertices of $\pmc \setminus X$ are contained in one shelter of $G-X$.
Let $D$ be this shelter; we set $D=\emptyset$ if $\pmc \subseteq X$.

\begin{step}{n:large-D}
$|D| \geq (0.5-3\nukeps)|V(G)| \geq 0.2|V(G)|$.
\end{step}
\begin{stepproof}
By~\stepref{n:two-nuked}, there exists a nuked component $C$ with $|C| \leq |V(G)|/2$.
Consider the set $X' = X \setminus C$. By the minimality of $X$, $X'$ is not a $(\nukeps,\nukth)$-nuke in $G$. As $|X'| < |X| \leq \nukeps |V(G)|$, the only reason for $X'$ to not be a nuke is that
there exists a shelter $C'$ of $G-X'$ that is too large, that is, $|C'| + |X'| > \nukth$.
By the construction of $X'$, the shelters of $G-X'$ and $G-X$ are the same, except for $C \cup D$, which is a shelter of $G-X'$, but may contain multiple shelters
of $G-X$. Therefore $C' = C \cup D$. Hence,
$$(1-2\nukeps)|V(G)| \leq \nukth < |C'| + |X'| \leq |C| + |D| + |X'| \leq |V(G)|/2 + |D| + |X| \leq (0.5 + \nukeps)|V(G)| + |D|.$$
\end{stepproof}
Note that~\stepref{n:large-D} in particular implies that $D \neq \emptyset$, that is, $\pmc$ is not completely contained in $X$.

\begin{step}{n:nei-D}
$X = N(D)$.
\end{step}
\begin{stepproof}
Clearly, $N(D) \subseteq X$. By the minimality of $X$, it suffices to show that $N(D)$ is a nuke in $G$.
Consider a shelter $D'$ of $G-N(D)$. If $D' = D$, then $|D'| + |N(D)| \leq |D| + |X| \leq \nukth$ by the assumption that $X$ is a nuke.
Otherwise, by~\stepref{n:large-D} and the assumption $\nukeps \leq 0.1$ we have
$$|D'| + |N(D)| \leq |V(G) \setminus D| \leq (0.5 + 3\nukeps)|V(G)| \leq (1-2\nukeps)|V(G)| \leq \nukth.$$
\end{stepproof}

\begin{step}{n:nuked-nei}
For every nuked component $C$ of $G-\pmc$ it holds that $N(C) \setminus X \neq \emptyset$, that is, there exists a non-nuked vertex in the neighbourhood of $C$.
\end{step}
\begin{stepproof}
A direct corollary from the facts that $X = N(D)$, $D$ is connected, and contains vertices of $\pmc$.
\end{stepproof}

\begin{step}{n:X-nei}
For every $x \in X$ there exists a shelter $D'$ of $G-X$ that is different than $D$ and contains a vertex adjacent to $x$.
\end{step}
\begin{stepproof}
If that is not the case, then $X \setminus \{x\}$ is a nuke as well, contradicting the minimality of $X$;
note that here we rely on the fact that we measure $|C| + |X|$ instead of just $|C|$ in the last property in the definition of a nuke.
\end{stepproof}

Our goal is now to exhibit a restricted structure of the nuked components of $G-\pmc$, using the fact that $G$ is $P_6$-free.
Intuitively, every nuked component gives rise to a potential $P_3$ or even $P_4$ sticking into such a component; by combining two such paths we should obtain a forbidden $P_6$.
The next four observations assert the existence of such sticking out $P_3$s and $P_4$s.

\begin{step}{n:P3}
For every nuked component $C$ of $G-\pmc$, and every $v \in N(C) \setminus X$, there exists a $P_3$ in $G$ with one endpoint in $v$ and the remaining two vertices in $C$.
\end{step}
\begin{stepproof}
See Fig.~\ref{fig:P3A} for an illustration of the proof.
Let $x \in C \cap X$, and let $D'$ be a shelter different from $D$ and adjacent to $x$, whose existence is asserted by~\stepref{n:nuked-nei}.
Since $N(C) \setminus X \subseteq D$, we have $D' \subseteq C$.
Consequently, $D' \cap N(v) = \emptyset$, in particular $C$ is not contained in $N(v)$. The existence of the asserted $P_3$ follows from the connectivity of $C$.
\end{stepproof}

\begin{figure}[tb]
\begin{center}
\includegraphics{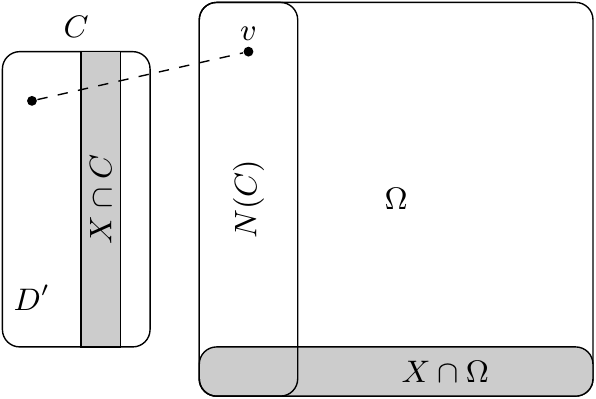}
\caption{Illustration of the proof of Claim~\ref{n:P3}. In this and all subsequent figures in this section the nuke is depicted gray.}
\label{fig:P3A}
\end{center}
\end{figure}

\begin{step}{n:P3-nuked}
For every nuked component $C$ of $G-\pmc$ with $\meas(C) \geq 0.1$ and for every $v \in N(C)$ there exists a $P_3$ in $G$ with one endpoint in $v$ and the remaining two vertices in $C$.
\end{step}
\begin{stepproof}
If such a $P_3$ does not exist, by the connectivity of $C$ we have $C \subseteq N(v)$. However, then $\meas(N(v)) \geq \meas(C) \geq 0.1$.
\end{stepproof}

\begin{step}{n:P4}
For every nuked component $C$ of $G- \pmc$, if there exists a vertex $x \in (C \cap X) \setminus N(\pmc \setminus X)$,
then there exists a nonempty set $Z \subseteq N(C) \setminus X$ such that for every $v \in Z$ there exists a $P_4$ in $G$
with one endpoint in $v$ and the remaining three vertices in $C \setminus N(N(C) \setminus (X \cup Z))$.
\end{step}
\begin{stepproof}
See Fig.~\ref{fig:P3B} for an illustration of the proof.
Define $Z$ to be the set of these vertices of $N(C) \setminus X$ that are reachable from $x$ via a path with all internal vertices
in $C \cap D$. The fact that $Z$ is nonempty follows from the facts that $D$ is connected, $\emptyset \neq N(C) \setminus X \subseteq D$, and $x \in X = N(D)$.

Consider any $v \in Z$. Let $P$ be a shortest path from $v$ to $x$ with all internal vertices in $C \cap D$. By the definition of $Z$, such a path exists. Since $P$ is a shortest path, it is an induced one. Furthermore, since $Z \subseteq \pmc$ while $x \notin N(\pmc \setminus X)$, the $P$ contains at least three vertices. Prolong $P$ with a neighbour of $x$ in $D'$, a shelter different than $D$ adjacent to $x$ (whose existence is asserted by~\stepref{n:X-nei}), obtaining a path on at least four vertices with one endpoint in $v$ and remaining vertices in $C$.

To finish the proof, it suffices to argue
that no vertex of $P$ except for $v$ may have a neighbour in $N(C) \setminus (X \cup Z)$.
This statement is true for the part of $P$ contained in $C \cap D$, by the definition of $Z$.
By assumptions, $x$ has no neighbour in $\pmc \setminus X$.
Finally, no vertex in $D'$ is adjacent to any vertex of $N(C) \setminus (X \cup Z) \subseteq D$.
\end{stepproof}

\begin{figure}[tb]
\begin{center}
\includegraphics{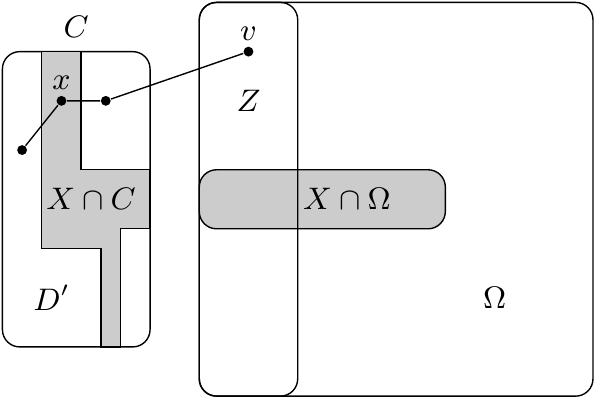}
\caption{Illustration of the proof of Claim~\ref{n:P4}.}
\label{fig:P3B}
\end{center}
\end{figure}

\begin{step}{n:P3-avoid}
For every nuked component $C$ of $G-\pmc$, for every two vertices $u,v \in N(C) \setminus X$, if there exists a vertex $x \in C \cap X \cap (N(v) \setminus N(u))$, then there exists
a $P_3$ in $G$ with one endpoint in $v$ and the remaining two vertices in $C \setminus N(u)$.
\end{step}
\begin{stepproof}
See Fig.~\ref{fig:P3C} for an illustration of the proof.
Let $D'$ be a shelter of $G-X$, different from $D$ and adjacent to $x$, whose existence is asserted by~\stepref{n:nuked-nei}. For the required $P_3$, take the vertices $v$, $x$, and any vertex of $N(x) \cap D'$.
\end{stepproof}

\begin{figure}[tb]
\begin{center}
\includegraphics{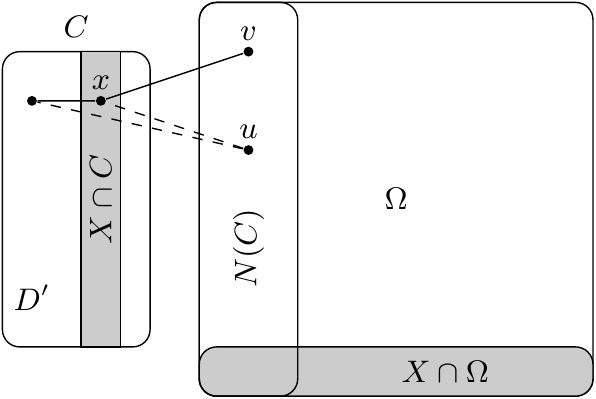}
\caption{Illustration of the proof of Claim~\ref{n:P3-avoid}.}
\label{fig:P3C}
\end{center}
\end{figure}

We now study the possible relations between the neighbourhoods of nuked components. The following observation serves as a starting point.

\begin{step}{n:diff-nei}
For every two nuked components $C_1,C_2$ of $G \setminus \pmc$ it holds that $N(C_1) \setminus X \subseteq N(C_2) \setminus X$ or $N(C_2) \setminus X \subseteq N(C_1) \setminus X$.
\end{step}
\begin{stepproof}
See Fig.~\ref{fig:symdiff} for an illustration of the proof.
By contradiction, assume that there exists $v_i \in N(C_i) \setminus (X \cup N(C_{3-i}))$ for $i=1,2$.
For $i=1,2$, let $P^i$ be a $P_3$ with endpoint in $v_i$ and other vertices in $C_i$, whose existence is asserted by~\stepref{n:P3}.
If $v_1v_2 \in E(G)$, then concatenated paths $P^1$ and $P^2$ form a $P_6$, a contradiction. Otherwise, by Theorem~\ref{thm:pmc}
there exists a component $C$ of $G \setminus \pmc$ with $v_1,v_2 \in N(C)$. Clearly, $C \neq C_i$ for $i=1,2$.
Hence, by concatenating $P^1$, a shortest path from $v_1$ to $v_2$ through $C$, and $P^2$, we obtain an induced path on at least $7$ vertices, a contradiction.
\end{stepproof}

\begin{figure}[tb]
\begin{center}
\includegraphics{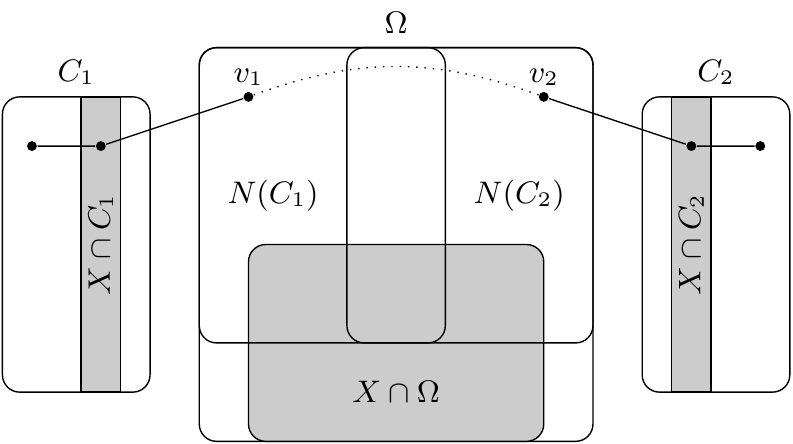}
\caption{Illustration of the proof of Claim~\ref{n:diff-nei}. The dotted connection between $v_1$ and $v_2$
may be realized through a third component.}
\label{fig:symdiff}
\end{center}
\end{figure}

\stepref{n:diff-nei} allows us to order the nuked components of $G-\pmc$ as $C_1,C_2,\ldots,C_r$, such that
$$N(C_1) \setminus X \supseteq N(C_2) \setminus X \supseteq \ldots \supseteq N(C_r) \setminus X.$$
By~\stepref{n:two-nuked}, $r \geq 2$.

We say that two nuked components $C_i$ and $C_j$, $1 \leq i,j \leq r$, $i \neq j$ are \emph{linked}
if for every choice of $u \in N(C_i) \setminus X$ and $v \in N(C_j) \setminus X$ there exists an induced path in $G$
with endpoints $u$ and $v$ and all internal vertices in $V(G) \setminus N[C_i \cup C_j]$.
We remark that if $u=v$ or $uv \in E(G)$, then the last assertion is true, as we can take an one- or two-vertex path, respectively.

In the next few observations we investigate the properties of linked components.
\begin{step}{n:linked1}
If $C_i$ and $C_j$ are linked, then for every two vertices $u,v \in \pmc \setminus X$, one of the following holds:
\begin{enumerate}
\item $N(u) \cap X \cap C_i = N(v) \cap X \cap C_i$,
\item $N(u) \cap X \cap C_j = N(v) \cap X \cap C_j$,
\item $N(u) \cap X \cap (C_i \cup C_j) \subsetneq N(v) \cap X \cap (C_i \cup C_j)$, or
\item $N(v) \cap X \cap (C_i \cup C_j) \subsetneq N(u) \cap X \cap (C_i \cup C_j)$.
\end{enumerate}
\end{step}
\begin{stepproof}
See Fig.~\ref{fig:linked1} for an illustration of the proof.
Assume the contrary. By symmetry, we can consider only the case where $(N(v) \cap C_i \cap X) \setminus N(u) \neq \emptyset$
and $(N(u) \cap C_j \cap X) \setminus N(v) \neq \emptyset$. Clearly, $v \in N(C_i) \setminus X$, $u \in N(C_j) \setminus X$, and $u \neq v$.
By applying~\stepref{n:P3-avoid} twice, we obtain a $P_3$ $P^v$ with
endpoint in $v$ and the remaining two vertices in $C_i \setminus N(u)$, and a $P_3$ $P^u$ with endpoint in $u$ and the remaining two
vertices in $C_j \setminus N(v)$. These two paths, together with the induced path between $u$ and $v$ promised by the fact that $C_i$ and $C_j$
are linked, yield an induced path on at least six vertices, a contradiction.
\end{stepproof}

\begin{figure}[tb]
\begin{center}
\includegraphics{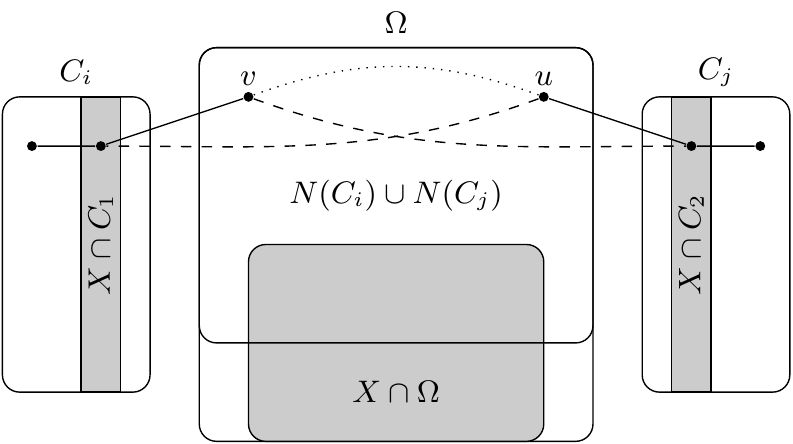}
\caption{Illustration of the proof of Claim~\ref{n:linked1}. The existence dotted connection between $v$ and $u$
  is implied by the linkedness between $C_i$ and $C_j$.}
\label{fig:linked1}
\end{center}
\end{figure}

With every nuked component $C_i$ we associate the family $\FF_i := \{N(v) \cap C_i \cap X: v \in \pmc \setminus X\}$.
\begin{step}{n:linked2}
If $C_i$ and $C_j$ are linked, then either $\FF_i$ or $\FF_j$ has unique maximal element with respect to inclusion.
\end{step}
\begin{stepproof}
Assume otherwise. Let $u,v \in \pmc \setminus X$ be such that $A_u := N(u) \cap C_i \cap X$ and $A_v := N(v) \cap C_i \cap X$ are two different
maximal elements of $\FF_i$,
and $p,q \in \pmc \setminus X$ be such that $B_p := N(p) \cap C_j \cap X$ and $B_q := N(q) \cap C_j \cap X$ are two different maximal elements of $\FF_j$.
By~\stepref{n:linked1} we have $N(u) \cap C_j \cap X = N(v) \cap C_j \cap X$; let us denote this set $B$.
Similarly, $N(p) \cap C_i \cap X = N(q) \cap C_j \cap X$, and we denote this set $A$.
If $B$ and $B_p$ are incomparable with respect to inclusion, then~\stepref{n:linked1} asserts that $A = A_u$ (for pair $u$ and $p$) and $A = A_v$ (for pair $v$ and $p$), a contradiction. By maximality of $B_p$, we have $B \subseteq B_p$.
Similarly we infer that $B \subseteq B_q$. Hence, $B \subseteq B_p \cap B_q$; by the incomparability of $B_p$ and $B_q$, we infer that
$B \subsetneq B_p$. However,~\stepref{n:linked1} asserts then that
$A_u \subseteq A$ (for the pair $p,u$) and $A_v \subseteq A$ (for the pair $p,v$).
This is a contradiction with the maximality and incomparability of $A_u$ and $A_v$.
\end{stepproof}
\begin{step}{n:linked3}
Let $I \subseteq \{1,2,\ldots,r\}$ be the set of indices such that for any $i,j \in I$, $i \neq j$, $C_i$ and $C_j$ are linked.
Then there exists a vertex $v \in \pmc \setminus X$ and an index $i_0$ such that
$$X \cap N(\pmc \setminus X) \cap \bigcup_{i \in I \setminus \{i_0\}} C_i \subseteq N(v).$$
\end{step}
\begin{stepproof}
If $|I| \leq 1$, the claim is straightforward, so assume otherwise.
By~\stepref{n:linked2}, there exists at most one index $i_0$ such that $\FF_{i_0}$ does not admit a unique maximal element.
(If no such index exists, we set $i_0 \in I$ arbitrarily). 

For $u \in \pmc \setminus X$, we define $I_u \subseteq I \setminus \{i_0\}$ to be the set of these indices $i$ for which $N(u) \cap C_i \cap X$
is the maximal element of $\FF_i$.
Let $v$ be such a vertex that $|I_v|$ is maximized. To finish the proof it suffices to show that $I_v = I \setminus \{i_0\}$.
Assume the contrary: there exists $j \in I \setminus \{i_0\}$ such that $N(v) \cap C_j \cap X$ is not the maximal element
of $\FF_j$. Let $w \in \pmc \setminus X$ be such that $N(w) \cap C_j \cap X$ is the maximal element of $\FF_j$.
We have $N(v) \cap C_j \cap X \subsetneq N(w) \cap C_j \cap X$.
By~\stepref{n:linked1}, for every $i \in I_v$ we have $N(v) \cap C_i \cap X \subseteq N(w) \cap C_i \cap X$. However, $N(v) \cap C_i \cap X$
is the unique maximal element of $\FF_i$. Consequently, $I_v \subseteq I_w$. However, $j \in I_w \setminus I_v$, which contradicts the
choice of $v$.
\end{stepproof}

Consider now the following corollary of~\stepref{n:P4}.
\begin{step}{n:no-P4}
For every $2 \leq i \leq r$ we have $X \cap C_i \subseteq N(\pmc \setminus X)$. Furthermore,
if $X \cap C_1 \not\subseteq N(\pmc \setminus X)$, then the set $Z$ whose existence is asserted in~\stepref{n:P4} for the component $C_1$
is completely contained in $N(C_1) \setminus (X \cup N(C_2))$.
\end{step}
\begin{stepproof}
See Fig.~\ref{fig:no-P4} for an illustration of the proof.
Assume the contrary. By~\stepref{n:P4} there exists an index $1 \leq i \leq r$ and a $P_4$ in $G$ with one endpoint $v \in N(C_2) \setminus X$
and the remaining three vertices in $C_i$; denote this path $P$. Let $j \in \{1,2\} \setminus \{i\}$. By~\stepref{n:P3}, there exists
a $P_3$ $Q$ with endpoint $v$ and the remaining vertices in $C_j$. However, the concatenation of $P$ and $Q$ is a $P_6$ in $G$, a contradiction.
\end{stepproof}

\begin{figure}[tb]
\begin{center}
\includegraphics{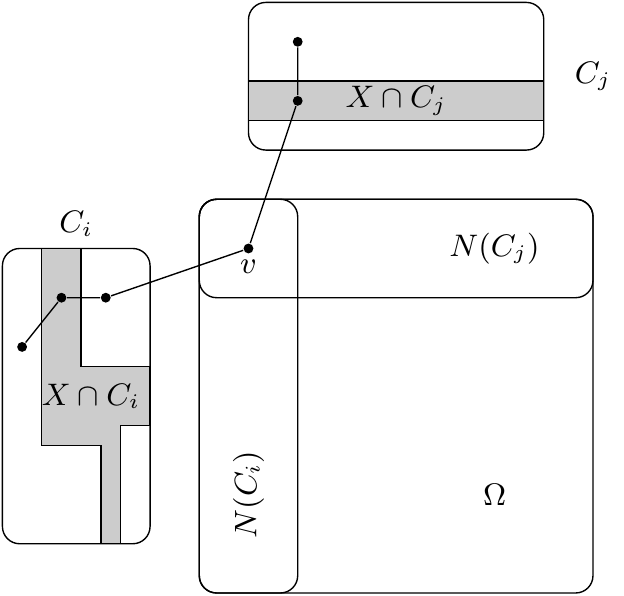}
\caption{Illustration of the proof of Claim~\ref{n:no-P4}. The set $X \cap \Omega$ is omitted in order to keep the picture readable.}
\label{fig:no-P4}
\end{center}
\end{figure}

Observe now the following.
\begin{step}{n:linked4}
For every $2 \leq i,j \leq r$, $i \neq j$, the components $C_i$ and $C_j$ are linked.
\end{step}
\begin{stepproof}
Two vertices $u \in N(C_i) \setminus X$ and $v \in N(C_j) \setminus X$ can be linked either via a direct edge if it exists in $G$,
or via a shortest path with internal vertices in $C_1$.
\end{stepproof}
Combining now~\stepref{n:no-P4} with~\stepref{n:linked3} applied to $I = \{2,3,\ldots,r\}$ we obtain that
\begin{step}{n:two-left}
There exists an index $2 \leq i_0 \leq r$ such that $\meas(C_1 \cup C_{i_0}) \geq 0.8$.
In particular, $\meas(C_1),\meas(C_{i_0}) \geq 0.3$.
\end{step}
\begin{stepproof}
By~\stepref{n:linked3}, applied to $I=\{2,3,\ldots,r\}$, we have an index $i_0$ and a vertex $v$ adjacent to all vertices
of $X \cap C_j \cap N(\pmc \setminus X)$ for $j \notin \{1,i_0\}$. However, by~\stepref{n:no-P4}, these are actually all vertices of
$X \cap C_j$. Since $\meas(N(v)) \leq \nukhit \leq 0.1$ and $\meas(\pmc) \leq 0.1$, the first claim follows.
The second claim follows from the choice of $\pmc$: $\meas(C) \leq 0.5$ for every connected component $C$ of $G-\pmc$.
\end{stepproof}
Fix the index $i_0$ from~\stepref{n:two-left}.
\begin{step}{n:free-w}
$N(C_1) \cup N(C_{i_0}) \neq \pmc$.
\end{step}
\begin{stepproof}
See Fig.~\ref{fig:free-w} for an illustration of the proof.
By contradiction, assume that $N(C_1) \cup N(C_{i_0}) = \pmc$.
By Theorem~\ref{thm:pmc}, neither $N(C_1)$ nor $N(C_{i_0})$ equals the whole $\pmc$, thus 
there exists $v \in N(C_1) \setminus N(C_{i_0})$ and $u \in N(C_{i_0}) \setminus N(C_1)$.
By~\stepref{n:P3-nuked}, there exist a $P_3$ $P^v$ with endpoint in $v$ and remaining two vertices in $C_1$,
and a $P_3$ $P^u$ with endpoint in $u$ and remaining two vertices in $C_{i_0}$.
If $uv \in E(G)$, then these two paths together give a $P_6$ in $G$, a contradiction.
Otherwise, by Theorem~\ref{thm:pmc}, there exists a component $C$ of $G-\pmc$ such that $u,v \in N(C)$.
Clearly, $C \notin \{C_1,C_{i_0}\}$. However, then a concatenation of $P^v$, a shortest path from $v$ to $u$
with internal vertices in $C$, and $P^u$, yields an induced path in $G$ on at least $7$ vertices, a contradiction.
\end{stepproof}

\begin{figure}[tb]
\begin{center}
\includegraphics{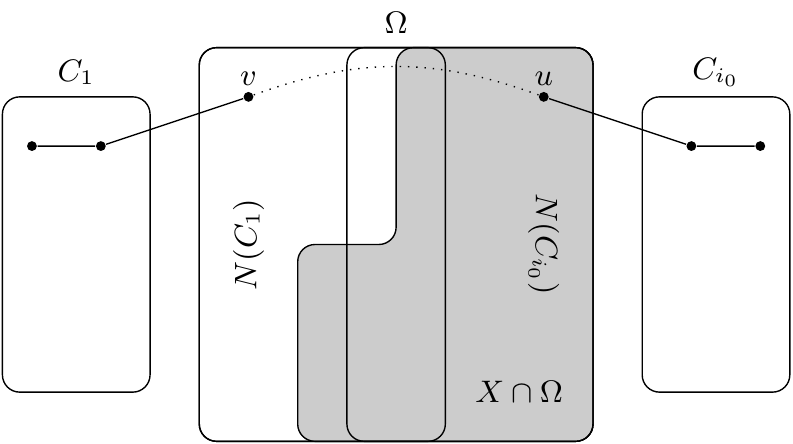}
\caption{Illustration of the proof of Claim~\ref{n:free-w}. The dotted connection between $v$ and $u$
may be realized through a third component.}
\label{fig:free-w}
\end{center}
\end{figure}
Let $w$ be an arbitrary vertex of $\pmc \setminus N(C_1 \cup C_{i_0})$.
\begin{step}{n:two-linked}
$C_1$ and $C_{i_0}$ are linked.
\end{step}
\begin{stepproof}
Consider any $v \in N(C_1) \setminus X$ and $u \in N(C_{i_0}) \setminus X$; we are going to exhibit an induced path from $v$ to $u$
with internal vertices in $V(G) \setminus N[C_1 \cup C_{i_0}]$.
If $v = u$ or $vu \in E(G)$, then we are done with the one- or two-vertex path.
If there exists a connected component $C$ of $G-\pmc$ different than $C_1$ or $C_{i_0}$ such that $u,v \in N(C)$, then
we can choose a shortest path from $v$ to $u$ with all internal vertices in $C$.

Otherwise, we route the path through the vertex $w$.
Let $P^v = vw$ if $vw \in E(G)$, and otherwise let $P^v$ be a shortest path from $v$ to $w$ with internal vertices in a
connected component $C$ covering the nonedge $vw$; note that $C \notin \{C_1,C_{i_0}\}$ as $w \notin N(C_1 \cup C_{i_0})$.
Similarly define the path $P^u$ from $u$ to $v$. Since no component different than $C_1$ or $C_{i_0}$ has both $u$ and $v$
in their neighbourhood, the concatenation of $P^v$ and $P^u$ forms the desired path.
\end{stepproof}

In the next two claims we exhibit the final contradiction.
\begin{step}{n:no-Z}
$C_1 \cap X \subseteq N(\pmc \setminus X)$.
\end{step}
\begin{stepproof}
See Fig.~\ref{fig:no-Z} for an illustration of the proof.
Assume the contrary. By~\stepref{n:no-P4}, the set $Z$ whose existence is asserted by~\stepref{n:P4} for the component $C_1$
is completely contained in $N(C_1) \setminus (X \cup N(C_2)) \subseteq N(C_1) \setminus (X \cup N(C_{i_0}))$.
Consider any $v \in Z$ and $u \in N(C_{i_0}) \setminus X \subseteq N(C_1) \setminus (X \cup Z)$.
By~\stepref{n:P4}, there exists a $P_4$ $P^v$ with endpoint in $v$ and internal vertices in $C_1 \setminus N(u)$. Furthermore, by~\stepref{n:P3-nuked},
there exists a $P_3$ $P^u$ with endpoint $u$ and internal vertices in $C_{i_0}$. Recall that $v \notin N(C_{i_0})$,
thus $P^u$ does not contain any neighbour of $v$, except for possibly $u$.
Hence, the paths $P^v$ and $P^u$, together with the path between $v$ and $u$ whose existence is asserted by the fact that $C_1$
and $C_{i_0}$ are linked, form an induced path in $G$ on at least seven vertices, a contradiction.
\end{stepproof}

\begin{figure}[tb]
\begin{center}
\includegraphics{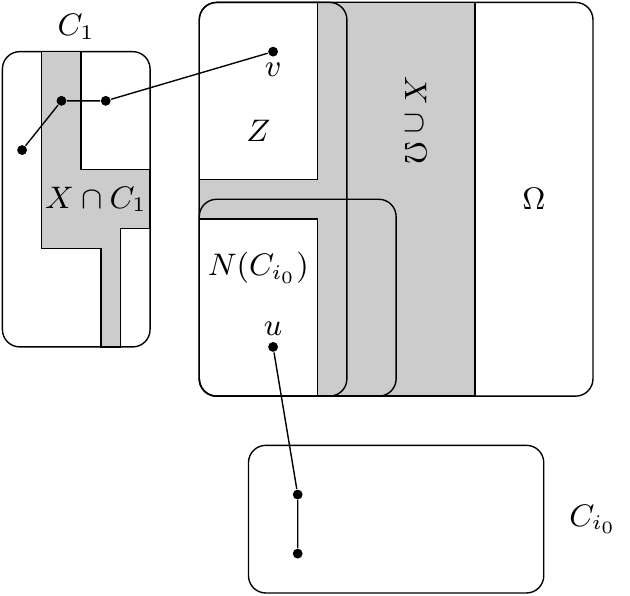}
\caption{Illustration of the proof of Claim~\ref{n:no-Z}.
The vertex $u$ is not adjacent to any of the three vertices in $C_1$ since $u \notin Z$ (\stepref{n:no-P4}).
The existence of a connection between $u$ and $v$ is guaranteed by the linkedness of $C_1$ and $C_{i_0}$.
}  
\label{fig:no-Z}
\end{center}
\end{figure}

\begin{step}{n:see-half}
There exists a vertex $v$ such that $C_1 \cap X \subseteq N(v)$ or $C_{i_0} \cap X \subseteq N(v)$.
\end{step}
\begin{stepproof}
By~\stepref{n:linked3} applied to $I = \{1,i_0\}$, we obtain a vertex $v$ such that
$C_1 \cap X \cap N(\pmc \setminus X) \subseteq N(v)$ or $C_{i_0} \cap X \cap N(\pmc \setminus X) \subseteq N(v)$.
However, $C_{i_0} \cap X \subseteq N(\pmc \setminus X)$ due to~\stepref{n:no-P4} and $C_1 \cap X \subseteq N(\pmc \setminus X)$ due to~\stepref{n:no-Z}.
\end{stepproof}

The last claim is in contradiction with~\stepref{n:two-left}, asserting that $\meas(C_1),\meas(C_{i_0}) \geq 0.3$.
This finishes the proof of Theorem~\ref{thm:hit-nuke}.

\section{The Algorithm for \ISname}\label{sec:ISalg}

We now make use of Theorems~\ref{thm:hit-pmc} and~\ref{thm:hit-nuke} to design an algorithm
that solves \ISname{} in $n$-vertex $P_6$-free graphs in $n^{\Oh(\log^2 n)}$ time.

\subsection{Description of the algorithm}

The algorithm consists of two recursive procedures, \texttt{FindIS} and \texttt{FindISNuke}, which both aim to find an independent set of maximum weight in a given connected vertex-weighted $P_6$-free graph~$G$. 
The procedure \texttt{FindIS} is the `base' procedure, which we call on the graph $G$. Both procedures make recursive calls to themselves and to each other. We describe each procedure, and then analyze their running time.

\subsubsection{Procedure \texttt{FindIS}}

The input for the procedure \texttt{FindIS} is just a connected $P_6$-free graph $G$.
As a base case, if the input graph consists of one vertex, \texttt{FindIS} returns the weight of this vertex.
Otherwise, it checks if there exists a vertex of degree at least $0.05\pmchit |V(G)|$, where the constant $\pmchit$ comes
from Theorem~\ref{thm:hit-pmc}.

If such a vertex $v$ exists, then the procedure branches on the vertex $v$.
In one branch, we seek a solution not containing $v$, and we call \texttt{FindIS} independently on every connected component
of $G-v$. In the second branch, we seek a solution containing $v$, and we call \texttt{FindIS} independently on every
connected component of $G-N(v)$.

Otherwise, that is if all vertices are of degree less than $0.05\pmchit |V(G)|$, 
the algorithm takes an arbitrary minimal triangulation $\cG$ of $G$ (see e.g.~\cite{pinar-survey} for algorithms that find such a triangulation), constructs
its clique tree, and finds a maximal clique $\pmc$ in $\cG$ such that every connected component of $G-\pmc$ has at most $|V(G)|/2$
vertices (such a maximal clique exists by standard arguments). 
We observe the following:

\begin{step}{a:small-pmc}
$|\pmc| < 0.05|V(G)|$
\end{step}
\begin{stepproof}
If $|\pmc| \geq 0.05|V(G)|$, then Theorem~\ref{thm:hit-pmc} applied to $\pmc$ with the uniform measure, implies that there exists
a vertex $v$ with $|N(v)| \geq |N(v) \cap \pmc| \geq 0.05\pmchit |V(G)|$, a contradiction.
\end{stepproof}
By the choice of $\pmc$, if we set $\nukth = 0.8 |V(G)|$, then $\pmc$ is a $(0.1,\nukth)$-nuke in $G$ (with a lot of slack
in the inequalities in the second and third point in the definition of a nuke).
The algorithm passes the graph $G$, the threshold $\nukth$, and the nuke $\pmc$ to the procedure \texttt{FindISNuke}.

\subsubsection{Procedure \texttt{FindISNuke}}

The input for the procedure \texttt{FindISNuke} is a connected $P_6$-free graph $G$, a threshold $\nukth$,
and a set $X \subseteq V(G)$ with the promise that for every connected component $C$ of $G-X$ it holds that $|C|+|X| \leq \nukth$.

The algorithm first checks if $G$ contains at least two vertices and $X$ is a $(0.1,\nukth)$-nuke of $G$ (note that $X$ is such a nuke when \texttt{FindISNuke} is invoked by \texttt{FindIS}).
If this is not the case, then the algorithm invokes \texttt{FindIS} on the graph $G$, forgetting about $\nukth$ and $X$.
Otherwise, it finds any inclusion-wise minimal $(0.1,\nukth)$-nuke $Y \subseteq X$, and finds a vertex $v$ with
$|N(v) \cap Y| \geq \nukhit |Y|$; the existence of such vertex is guaranteed by applying Theorem~\ref{thm:hit-nuke} to $Y$ with the uniform measure.
The algorithm branches on vertex $v$ as usual.
That is, in one branch, we seek a solution not containing $v$,
and we call \texttt{FindISNuke} independently on every connected component
of $G-v$. In the second branch, we seek a solution containing $v$, and we call \texttt{FindISNuke} independently on every
connected component of $G-N(v)$.
In every subcall, we pass the same threshold $\nukth$, and the set $Y$ restricted to the vertex set of the connected component in question.
Clearly, since we delete only vertices from $G$ or reduce $X$ to a minimal sub-nuke, in the subcalls we maintain
the promise that for every connected component $C$ of $G-X$ it holds that $|C|+|X| \leq \nukth$.

\subsection{Analysis}

As the algorithm performs exhaustive branching, it clearly returns an optimum solution.
Also, the polynomial space bound is immediate.
It remains to argue about the running time.

Consider the recursion tree $\mathcal{T}_0$ of the algorithm, and focus on one call $\mathfrak{c}$ to $\mathtt{FindIS}(G)$
that resulted in a subcall $\mathtt{FindISNuke}(G,\nukth,X)$; here $\nukth = 0.8|V(G)|$ and $X$ is a potential maximal clique in $G$
of size at most $0.05|V(G)|$ (by~\stepref{a:small-pmc}). 
Every call to \texttt{FindISNuke} results either in branching and multiple calls to the same procedure (call it a \emph{branching call}), or
a single call to \texttt{FindIS} (call it a \emph{fallback call}). 
Let $\mathcal{T}$ be a maximal subtree at $\mathcal{T}_0$, rooted at the chosen call $\mathfrak{c}$ to $\mathtt{FindIS}$, that contains
(apart from the root) only calls to \texttt{FindISNuke}. That is, we put into $\mathcal{T}$ all recursive calls that originated
from $\mathfrak{c}$, and stop whenever we encounter a fallback call; in particular, all leaves of $\mathcal{T}$ are fallback calls.

First, observe that $\mathcal{T}$ has $|V(G)|^{\Oh(\log |V(G)|)}$ leaves by standard analysis: in every branch either we delete
one vertex from $G$, or delete a constant fraction of the minimal sub-nuke of $X$, while independently considering every connected component
only helps in the process.

Let $\mathtt{FindISNuke}(G',\nukth',X')$ be a leaf of $\mathcal{T}$. We claim the following.
\begin{step}{a:nuke-decrease}
$|V(G')| < \frac{8}{9} |V(G)|$.
\end{step}
\begin{stepproof}
Since we are considering a fallback call, either $|V(G')|=1$
or $X'$ is not a $(0.1,\nukth')$-nuke of $G'$.
In the first case, since $|V(G)| > 1$, the claim is obvious.
In the second case, consider the reasons why
$X'$ may not be a $(0.1,\nukth')$-nuke of $G'$. Clearly, $\nukth' = \nukth = 0.8|V(G)|$ and, by the promise maintained
in the course of algorithm, for every connected component $C$ of $G'-X'$ it holds that $|C| + |X'| \leq \nukth$.
Furthermore, $(1-2\cdot 0.1)|V(G')| \leq (1-2\cdot 0.1)|V(G)| = \nukth$.

Hence, either $(1-0.1)|V(G')| < \nukth = 0.8|V(G)|$ or $|X'| > 0.1|V(G')|$.
In the first case $|V(G')| < \frac{8}{9}|V(G)|$, while in the second case $|V(G')| \leq |V(G)|/2$,
because $|X| \leq 0.05|V(G)|$ and $X' \subseteq X$.
\end{stepproof}
By~\stepref{a:nuke-decrease}, if we contract every such subtree $\mathcal{T}$ to a single super-node of the recursion tree $\mathcal{T}_0$,
then at each such super-node we branch into $|V(G)|^{\Oh(\log |V(G)|)}$ subcases, and in each subcase decrease the number of vertices by a multiplicative factor.

Now focus on a call $\mathfrak{c}$ to $\mathtt{FindIS}$ that branches on a vertex $v \in V(G)$ of degree
at least $0.05\pmchit |V(G)|$.
Observe that at most one recursive subcall of $\mathfrak{c}$ is invoked on a graph with at least $(1-0.05\pmchit)|V(G)|$
vertices: the one for the largest connected component of $G-v$.
Mark the edges of the recursion tree that correspond to such subcalls.
The marked edges form vertex-disjoint top-bottom paths in the recursion tree. 
If we contract them (along with the aforementioned subtrees $\mathcal{T}$), we obtain a recursion tree where every node has
$n^{\Oh(\log n)}$ subcases and where in each subcase the number of vertices decreases by a constant factor. 
Consequently, the size of the recursion tree is $n^{\Oh(\log^2 n)}$.
This finishes the analysis of the algorithm, and concludes the proof of Theorem~\ref{thm:main}.

\section{The Algorithm for \EDname}\label{sec:EDalg}

In this section we prove Theorem~\ref{thm:effdom}.
The overall approach is as follows: we take any minimal triangulation of the input graph $G$, and perform
the standard dynamic programming algorithm on the clique tree of this completion (which is a tree decomposition of $G$).
In this standard dynamic programming algorithm, every state at bag $B$ keeps information about which vertices of $B$ are
contained in the constructed efficient dominating set, and which vertices of $B$ has been already dominated by the forgotten
parts of the graph.

The main insight is that we can use Theorem~\ref{thm:hit-pmc}, together with technical insight from Section~\ref{sec:nukes},
to show that in $P_6$-free graphs there are only polynomially many reasonable states for the aforementioned dynamic programming algorithm,
yielding the claimed polynomial running time.

\subsection{Bounding the Number of States}

Before we state this main result formally, we need the following definition. Let
$\pmc$ be a potential maximal clique in $G$, and let $\mathcal{C}$ be the set of connected components of $G-\pmc$.
A \emph{state} is a function $f:\pmc \to \mathcal{C} \cup \{\pmc, \bot\}$. A state $f$ is \emph{consistent} with an efficient
dominating set $X$ if $X \cap \pmc = f^{-1}(\bot)$ and furthermore, for every $v \in \pmc \setminus X$, the unique vertex
of $N(v) \cap X$ belongs to the vertex set of $f(v)$. 

\begin{theorem}\label{thm:effdom-states}
Given a $P_6$-free graph $G$ and a potential maximal clique $\pmc$ in $G$, one can in polynomial time 
compute a family $\mathcal{S}$ of states of polynomial size, such that for every efficient dominating set $X$ in $G$,
there exists a state $f \in \mathcal{S}$ consistent with $X$.
\end{theorem}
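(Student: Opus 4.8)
The plan is to bound the number of ``reasonable'' states by exploiting Theorem~\ref{thm:hit-pmc} and, more importantly, the combinatorial structure uncovered in Section~\ref{sec:nukes}. First I would observe that Theorem~\ref{thm:hit-pmc} already gives a crude bound: for every efficient dominating set $X$ and every potential maximal clique $\pmc$, we have $|X \cap \pmc| \le 1/\pmchit$ (this is the consequence flagged just after Theorem~\ref{thm:hit-pmc}, since an efficient dominating set is independent and each of its vertices dominates a disjoint neighbourhood, so if $|X \cap \pmc|$ were large then putting the uniform measure on $X \cap \pmc$ and pushing it forward would contradict Theorem~\ref{thm:hit-pmc}). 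Hence there are only polynomially many possibilities for the set $f^{-1}(\bot) = X \cap \pmc$ --- we can enumerate all subsets of $\pmc$ of size at most $1/\pmchit$. So the real work is: having fixed which vertices of $\pmc$ lie in $X$, bound the number of possibilities for the map $v \mapsto f(v)$ telling, for each $v \in \pmc \setminus X$, in which component $f(v)$ of $G - \pmc$ the (unique) dominator of $v$ lives.

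The key reduction is that, for a fixed $v \in \pmc \setminus X$, the dominator of $v$ is either $v$ itself (if $v$'s dominator is in $\pmc$, but that dominator would then lie in $X \cap \pmc$, a case we have already fixed), or it lies in some component $C$ with $v \in N(C)$. The components $C$ of $G - \pmc$ that actually contain a vertex of $X$ --- call them the \emph{nuked} components, reusing the Section~\ref{sec:nukes} terminology --- play the role of the nuked components there: each such $C$ satisfies $N(C) \subsetneq \pmc$ and contains part of the efficient dominating set. I would now run essentially the same argument as in the proof of Theorem~\ref{thm:hit-nuke}: the components of $G - \pmc$ meeting $X$ can be linearly ordered by Claim~\ref{n:diff-nei}-style reasoning so that their outer neighbourhoods $N(C_i) \setminus (X\cap\pmc)$ are nested, only a bounded number of them can carry a large fraction of anything, and the ``sticking-out $P_3$/$P_4$'' Claims (\ref{n:P3}--\ref{n:P3-avoid}) combined with $P_6$-freeness force the families $\FF_i = \{N(v) \cap C_i \cap X : v \in \pmc \setminus X\}$ to be highly structured --- in particular each admits a (near-)unique maximal element, as in Claims~\ref{n:linked1}--\ref{n:linked3}. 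This lets one show that for all but $O(1)$ of the nuked components $C_i$, the set of vertices $v \in \pmc \setminus X$ with $f(v) = C_i$ is determined (up to polynomially many choices) by a single ``canonical'' vertex whose neighbourhood in $C_i \cap X$ is maximal; and for the $O(1)$ exceptional components one can afford to guess the partition of their relevant $\pmc$-neighbourhood by brute force over a polynomial-size family, because the \emph{number} of distinct values $N(v)\cap C_i\cap X$ appearing across $v$ is small.

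Concretely, the procedure computing $\mathcal{S}$ does the following, all in polynomial time: enumerate all candidate sets $B = X \cap \pmc \subseteq \pmc$ with $|B| \le 1/\pmchit$; for each, compute $\mathcal{C}$; identify which components could be nuked (those $C$ with $N(C) \subsetneq \pmc$, since a full component cannot hide part of an independent dominating set without being seen, but in any case we may simply try all of them); order them as $C_1, \dots, C_r$ by nested outer neighbourhoods; for each of the constantly many ``special'' indices (the one or two components carrying most of the mass, plus the at most one component whose family $\FF_i$ lacks a unique maximal element) enumerate all ways of assigning the vertices of $\pmc \setminus B$ that see it to that component, which is polynomial because only polynomially many distinct neighbourhood-traces occur; and for every remaining component assign $f(v) = C_i$ exactly to the vertices $v$ realising the unique maximal element of $\FF_i$. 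Closing under ``guess each $v$ not yet assigned goes to $\pmc$ or to the unique legal component or stays $\bot$'' then yields a polynomial family $\mathcal{S}$. To see correctness: given any efficient dominating set $X$, the true state $f$ has $f^{-1}(\bot) = X\cap\pmc$, which is one of the enumerated $B$'s; and the Section~\ref{sec:nukes} structure theorem guarantees that, with $B$ fixed, the restriction of $f$ to the non-special components coincides with the canonical assignment, while its restriction to the special components is one of the polynomially many guessed options --- hence $f \in \mathcal{S}$.

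The main obstacle, I expect, is making the transfer of the Section~\ref{sec:nukes} machinery rigorous in this slightly different setting: there, $X$ is an inclusion-wise minimal nuke chosen freely, whereas here $X$ is an efficient dominating set and $\pmc$ is an \emph{arbitrary} potential maximal clique not tailored to $X$, so we do not get the minimality-of-$X$ properties (Claims~\ref{n:X-nei}, \ref{n:nei-D}, \ref{n:large-D}) for free. One must check that the only properties actually used to derive the nested-neighbourhoods order and the unique-maximal-element structure are (i) $P_6$-freeness, (ii) that each relevant component has $N(C)\subsetneq\pmc$ and contains a vertex of $X$ with its ``private'' dominated region inside $C$ (which holds because $X$ is efficient: a dominator $x\in C\cap X$ has all its neighbours dominated only by $x$, giving the sticking-out $P_3$), and (iii) the bound $|X\cap\pmc|\le 1/\pmchit$ from Theorem~\ref{thm:hit-pmc}; and that these survive. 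A secondary technical point is carefully counting: one must verify that ``number of distinct traces $N(v)\cap C_i\cap X$ over $v\in\pmc\setminus X$'' is polynomially (indeed, I would expect $O(|\pmc|)$) bounded even for the special components, so that the brute-force guessing there stays polynomial. I would isolate these as separate lemmas mirroring Section~\ref{sec:nukes} before assembling the final enumeration algorithm.
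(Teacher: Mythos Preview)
Your opening step---enumerating the at most $n^{1/\pmchit}$ candidates for $X\cap\pmc$ via Theorem~\ref{thm:hit-pmc}---is correct and is exactly how the paper begins. The rest of your plan, however, does not go through.

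The central gap is in the claim that ``for all but $O(1)$ of the nuked components $C_i$, the set $\{v:f(v)=C_i\}$ is determined (up to polynomially many choices) by a single canonical vertex.'' First, the number $r$ of components meeting $X$ can be $\Theta(n)$: an efficient dominating set is not small, and nothing prevents its vertices from being scattered across linearly many components of $G-\pmc$. Your enumeration must therefore produce, in polynomial time and without knowing $X$, the full partition of $\pmc\setminus X$ into $r$ pieces; a choice of one ``canonical vertex'' per component is already $n^{r}$ choices. Second, the rule ``assign $f(v)=C_i$ exactly to the vertices $v$ realising the unique maximal element of $\FF_i$'' is both uncomputable and incorrect: the families $\FF_i=\{N(v)\cap C_i\cap X\}$ depend on the unknown $X$, and even if $\FF_i$ had a unique maximal element, the set of $v$'s \emph{realising} it has no reason to equal $\{v:f(v)=C_i\}$---a vertex $v$ can attain the maximum trace in $C_i$ yet be dominated from $C_j$. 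Third, the Section~\ref{sec:nukes} machinery you invoke (Claims~\ref{n:linked1}--\ref{n:linked3}) is measure-theoretic: it produces a \emph{single} vertex whose neighbourhood covers a constant fraction of the nuke, which is precisely what Theorem~\ref{thm:hit-nuke} needs, but it does not and cannot determine the full map $v\mapsto f(v)$. You acknowledge this transfer as ``the main obstacle,'' but the obstacle is fatal, not merely technical: the minimality properties of the nuke (Claims~\ref{n:nei-D}, \ref{n:X-nei}, \ref{n:large-D}) are what drive those claims, and an efficient dominating set has none of them.

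For comparison, the paper's proof is entirely different in architecture. It is a branching algorithm maintaining a pair $(X_0,Y)$ of committed and still-possible solution vertices, together with a Reduction Rule that prunes $Y$ without branching. The progress measure is $|B|$, where $B\subseteq\pmc$ is the set of not-yet-dominated vertices whose potential dominator could still lie in two or more components; the algorithm terminates when $B=\emptyset$. After establishing a nested structure on the active components (an analogue of your nested-neighbourhood order, but with respect to $B$ rather than $X$) and forcing linkedness of all pairs via one polynomial branch, the paper shows that carefully chosen binary branches each shrink $|B|$ by a constant factor (Lemmas~\ref{lem:bad1}--\ref{lem:final1}). This gives a branching tree with $O(\log n)$ levels of constant fan-out plus $O(1)$ levels of polynomial fan-out, hence polynomially many leaves, each producing one state. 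The key ideas---the Reduction Rule, the set $B$ as progress measure, and the $Y_1^\ast/Y_1^\circ$ dichotomy---have no counterpart in your proposal.
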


This section is devoted to the proof of Theorem~\ref{thm:effdom-states}.
We describe the algorithm as a branching algorithm that outputs a state at every leaf of the branching tree, 
and every leaf-to-root path of the branching tree contains $\Oh(\log n)$ nodes of constant degree and $\Oh(1)$ nodes
of degree polynomial in $n$. Furthermore, it will be straightforward to perform the computation required 
at every node of the branching tree in polynomial time. These properties give the promised polynomial bounds on the size of the output
and the total running time.

Every node of the branching tree is labeled with two vertex sets $X_0$ and $Y$, and the goal of the subtree rooted at the
node labeled $(X_0,Y)$ is to output a family of states such that for every efficient dominating set $X$ with $X_0 \subseteq X$
and $(X \setminus X_0) \subseteq Y$ (henceforth called an efficient dominating set \emph{consistent} with $(X_0,Y)$) there exists an output consistent state.
In every branching step, in every subcase, the algorithm puts some vertices into $X_0$ and/or removes some vertices from $Y$.
Since every two elements of an efficient dominating set are within distance at least three, we implicitly assume that if the algorithm
puts a vertex $v$ into $X_0$, it at the same time removes from $Y$ all vertices within distance at most two from $v$.
Furthermore, we immediately terminate a branch if two vertices of $X_0$ are within distance less than three,
or if there exists $v \in V(G)$ with $N[v] \cap (X_0 \cup Y) = \emptyset$.

The algorithm terminates branching at nodes labeled $(X_0,Y)$ where for every $v \in \pmc$ either $N[v]$ contains a vertex of $X_0$, or $N[v] \cap Y$ is contained
in a single component of $\mathcal{C}$.
For such a label $(X_0,Y)$, we define a state $f$ as follows:
$f(v) = \bot$ for $v \in X_0 \cap \pmc$, $f(v) = \pmc$ for $v \in N(X_0) \cap \pmc$,
and otherwise $f(v)$ is the unique component of $\mathcal{C}$ that contains vertices of $N[v] \cap (X_0 \cup Y)$.
It is straightforward to verify that if $X$ is consistent with $(X_0,Y)$, then $f$ is well-defined and it is also consistent with $f$. 
Consequently, the algorithm outputs the function $f$ in this leaf node of the branching tree.

At the root of the branching tree we have $X_0 = \emptyset$ and $Y = V(G)$.

\subsubsection{Guessing Vertices from the Solution Inside the PMC}

We start with the following observation.
\begin{lemma}\label{lem:ed:in-pmc}
For every $P_7$-free graph $G$, every potential maximal clique $\pmc$ in $G$, and every efficient dominating set $X$ in $G$,
    we have $|\pmc \cap X| \leq 1/\pmchit$, where the constant $\pmchit$ comes from Theorem~\ref{thm:hit-pmc}.
\end{lemma}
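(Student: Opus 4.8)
The plan is to use Theorem~\ref{thm:hit-pmc} essentially as a black box together with a simple counting/averaging argument. Suppose for contradiction that $|\pmc \cap X| > 1/\pmchit$, and let $\meas$ be the uniform probability measure on the set $\pmc \cap X$; this is a well-defined probability measure on $\pmc$ since $\pmc \cap X$ is nonempty. Theorem~\ref{thm:hit-pmc} applies because $G$ is $P_7$-free, $\pmc$ is a potential maximal clique of $G$, and (one should note) $G$ must have at least two vertices and $\pmc$ lies inside a connected component of $G$; if $\pmc$ itself were contained in a single-vertex component the statement is vacuous or trivial, so we may assume the hypotheses of Theorem~\ref{thm:hit-pmc} are met on the relevant component. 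The theorem then yields a vertex $v \in V(G)$ with $\meas(N(v)) \geq \pmchit$, i.e.\ $v$ has at least $\pmchit \cdot |\pmc \cap X|$ neighbours in $\pmc \cap X$, which by the assumption $|\pmc \cap X| > 1/\pmchit$ means $|N(v) \cap \pmc \cap X| > 1$.

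The final step is to observe that this contradicts the defining property of an efficient dominating set. Indeed, $v$ now has at least two distinct neighbours $x_1, x_2 \in X$, so $v$ is dominated by both $x_1$ and $x_2$ (recall a vertex dominates itself and all its neighbours), contradicting the requirement that every vertex of $G$ is dominated by exactly one vertex of $X$. Hence $|\pmc \cap X| \leq 1/\pmchit$.

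I do not expect any serious obstacle here; the lemma is described in the excerpt as having ``a simple proof'', and the argument is just: uniform measure on $\pmc \cap X$, apply Theorem~\ref{thm:hit-pmc}, deduce a vertex with two neighbours in $X$, contradiction with the efficient domination property. The only minor subtlety worth a sentence is checking that Theorem~\ref{thm:hit-pmc}'s hypotheses (connectivity, at least two vertices) can be arranged, or alternatively that the edge cases where they fail make the bound trivial; one might also want to be slightly careful that $v$ is allowed to lie anywhere in $V(G)$ and in particular could itself be in $\pmc$, but this does not affect the counting argument at all. A second, even more minor point: the bound can be stated as $\lfloor 1/\pmchit \rfloor$ but stating it as $1/\pmchit$ is harmless and matches the lemma.
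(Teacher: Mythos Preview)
Your proof is correct and essentially identical to the paper's: both put the uniform measure on $\pmc \cap X$, apply Theorem~\ref{thm:hit-pmc}, and use that any vertex has at most one neighbour in an efficient dominating set. The only cosmetic difference is that the paper argues directly (from $\meas(N(u)) \geq \pmchit$ and $\meas(N(u)) \leq 1/|\pmc \cap X|$ conclude $|\pmc \cap X| \leq 1/\pmchit$) rather than by contradiction, and handles the connectivity hypothesis with the same ``restrict to the relevant component'' remark you make.
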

\begin{proof}
Without loss of generality, we can assume that $G$ is connected (we can consider every component independently)
  and contains at least two vertices (for one-vertex graphs the statement is trivial).

Let $\ell = |\pmc \cap X|$. Consider a measure $\meas$ on $\pmc$ such that $\meas(v) = 1/\ell$ for every $v \in \pmc \cap X$
and $\meas(v) = 0$ otherwise. By Theorem~\ref{thm:hit-pmc}, there exists a vertex $u$ with $\meas(N(u)) \geq \pmchit$.
However, by the definition of an efficient dominating set, we have $|N(u) \cap X| \leq 1$. Consequently, 
$\meas(N(u)) \leq 1/\ell$, hence $\ell \leq 1/\pmchit$.
\end{proof}

By Lemma~\ref{lem:ed:in-pmc}, our algorithm can, as a first step, guess all vertices from the solution that lie in $\pmc$.
More formally, the algorithm branches into a subcase for every subset $X_\pmc \subseteq \pmc$ of size at most $1/\pmchit$;
we label the subcase corresponding to $X_\pmc$ by $(X_\pmc, V(G) \setminus (N^2[X_\pmc] \cup \pmc))$. 
We emphasize here that we not only removed from $Y$ all vertices within distance at most two from $X_\pmc$, but also all vertices from $\pmc$.
Thus, from this point, we have that $Y \cap \pmc = \emptyset$.

\subsubsection{Reduction Rule}

Fix a node of the branching tree labeled $(X_0,Y)$ with $Y \cap \pmc = \emptyset$.
We say that a component $C \in \mathcal{C}$ is \emph{active} if $C \cap Y \neq \emptyset$. 
Let $A = \pmc \setminus N[X_0]$ be the set of vertices that are not yet dominated by the vertices from $X_0$.
Let $B \subseteq A$ be the set of these vertices $v$ such that the vertices of $N(v) \cap Y$ appear in at least two connected components
of $\mathcal{C}$. Note that the algorithm terminates branching and outputs a state if $B = \emptyset$; the main goal in the branching step
is to shrink the set $B$ as much as possible.

We start by introducing a reduction rule, aimed at shrinking the set $Y$ without performing any branching.
For a vertex $u \in V(G) \setminus \pmc$, let $C(u)$ be the component of $\mathcal{C}$ that contains $u$.
Assume that for some vertex $v \in B$ there exists $u \in N(v) \cap Y$ such that $N[u] \cap Y \subseteq N(v)$. 
Let $X$ be an efficient dominating set consistent with $(X_0,Y)$. Since $Y \cap N[X_0] = \emptyset$, the vertex $u$
needs to be dominated by some vertex $w \in N[u] \cap (X \setminus X_0) \subseteq N[u] \cap Y$. By our assumption,
$w$ also dominates $v$. Consequently, in every efficient dominating set consistent with $(X_0,Y)$, the vertex $v$
is dominated by an element $C(u)$, and we can introduce the following reduction rule.

\medskip
\noindent\textbf{Reduction Rule.} If there exist vertices $v \in B$ and $u \in N(v) \cap Y$ such that $N[u] \cap Y \subseteq N(v)$,
  then remove from $Y$ all vertices of $N(v) \setminus C(u)$.

\medskip

Note that, in particular, the aforementioned Reduction Rule triggers if some vertex of $B$ is fully adjacent to an active component
(recall that $\pmc \cap Y = \emptyset$).

In what follows we assume that at every node of the recursion tree, the Reduction Rule is applied exhaustively.
Observe that if this rule is not applicable, then for every $v \in B$ and $u \in N(v) \cap Y$, there exists a vertex $w \in (Y \cap N(u)) \setminus N(v)$;
note that $w \in C(u)$ and $\{v,u,w\}$ induce a $P_3$ in $G$. The main intuition of the remaining proof is that the graph needs to be highly structured in order to not to allow two such $P_3$'s to ``glue'' together into a $P_6$ in $G$.

\subsubsection{Structure of $B$-Neighbourhoods}

As a first application of this principle, observe the following.
\begin{lemma}\label{lem:Cuniv}
If $C^1,C^2$ are two different components of $\mathcal{C}$, then $N(C^1) \setminus N(C^2)$ is fully adjacent to 
$C^1$, or $N(C^2) \setminus N(C^1)$ is fully adjacent to $C^2$.
\end{lemma}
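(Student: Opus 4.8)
\textbf{Proof plan for Lemma~\ref{lem:Cuniv}.}

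The plan is to argue by contradiction, assuming that $N(C^1)\setminus N(C^2)$ is not fully adjacent to $C^1$ and symmetrically $N(C^2)\setminus N(C^1)$ is not fully adjacent to $C^2$, and then build a forbidden induced $P_6$. Let $\pmc$ be the potential maximal clique under consideration. First I would fix vertices $v_1\in N(C^1)\setminus N(C^2)$ and $a_1\in C^1$ with $v_1a_1\notin E(G)$ witnessing the failure of full adjacency on the $C^1$ side; since $C^1$ is connected and $v_1\in N(C^1)$, I can find a $P_3$ of the form $v_1\,b_1\,a_1$ with $b_1,a_1\in C^1$ (walk from a neighbour of $v_1$ in $C^1$ towards $a_1$ and stop at the first vertex outside $N(v_1)$). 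Symmetrically I get $v_2\in N(C^2)\setminus N(C^1)$ and a $P_3$ $v_2\,b_2\,a_2$ with $b_2,a_2\in C^2$.

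The two $P_3$'s live in different components, so $b_i,a_i$ are nonadjacent and nonequal to anything on the other side, and $v_1$ is nonadjacent to $b_2,a_2$ (as $v_1\notin N(C^2)$) and $v_2$ is nonadjacent to $b_1,a_1$ (as $v_2\notin N(C^1)$). Hence the only possible chords among $\{v_1,b_1,a_1,v_2,b_2,a_2\}$ beyond the two paths are between $v_1$ and $v_2$. If $v_1v_2\in E(G)$, then $a_1\,b_1\,v_1\,v_2\,b_2\,a_2$ is an induced $P_6$, contradiction. If $v_1v_2\notin E(G)$, then $v_1,v_2\in\pmc$ are two distinct nonadjacent vertices of the potential maximal clique, so by Theorem~\ref{thm:pmc} there is a component $C$ of $G-\pmc$ covering the non-edge $v_1v_2$, i.e.\ $v_1,v_2\in N(C)$; this $C$ is distinct from $C^1$ and $C^2$ since $v_2\notin N(C^1)$ and $v_1\notin N(C^2)$. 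Take a shortest (hence induced) path from $v_1$ to $v_2$ with all internal vertices in $C$; concatenating $a_1\,b_1\,v_1$, this path, and $v_2\,b_2\,a_2$ yields an induced path on at least $7$ vertices (the internal vertices of the middle path lie in $C$, hence are nonadjacent to $b_i,a_i$ and to $v_1,v_2$ except at the endpoints), again contradicting $P_6$-freeness (indeed $P_7$-freeness).

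The only slightly delicate point, and the one I would write out most carefully, is checking that the concatenations really are \emph{induced}: this uses that $C^1$, $C^2$ and $C$ are pairwise distinct components of $G-\pmc$ (so there are no edges between them), that $b_i,a_i\notin\pmc$ while $v_1,v_2\in\pmc$ (so $b_i,a_i$ have no neighbours outside $C^i\cup\pmc$, and within $\pmc$ only $v_i$ could be adjacent, which is the edge of the path), and the stopping rule in the construction of the $P_3$'s guarantees $v_i a_i\notin E(G)$. Everything else is immediate, so this is a short argument once the case split on $v_1v_2\in E(G)$ is set up.
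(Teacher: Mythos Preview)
Your proof is correct and follows essentially the same route as the paper's: assume the contrary, build an induced $P_3$ sticking into each of $C^1$ and $C^2$ from a vertex $v^i\in N(C^i)\setminus N(C^{3-i})$, then concatenate them either directly via the edge $v_1v_2$ or via a third component $C$ covering the non-edge (Theorem~\ref{thm:pmc}) to obtain an induced path on at least six vertices. Your write-up is, if anything, slightly more explicit than the paper's about why the concatenation is induced.
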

\begin{proof}
See Fig.~\ref{fig:ed:symdiff} for an illustration of the proof.
Assume the contrary. Let $v^i \in N(C^i) \setminus N(C^{3-i})$ be a vertex that is not fully adjacent to $C^i$ for $i=1,2$.
Since $v^i$ is not fully adjacent to $C^i$, but $v^i \in N(C^i)$ and $C^i$ is connected, there exists an induced $P_3$ with 
one endpoint $v^i$ and other vertices in $C^i$; denote this $P_3$ as $P^i$. Furthermore, by Theorem~\ref{thm:pmc},
either $v^1v^2 \in E(G)$ or there exists a component $C \in \mathcal{C}$ such that $v^1,v^2 \in N(C)$. Clearly, $C \notin \{C^1,C^2\}$.
Consequently, by concatenating $P^1$, $P^2$, and the edge $v^1v^2$ or a shortest path between $v^1$ and $v^2$ with internal vertices
in $C$, we obtain an induced path on at least $6$ vertices, a contradiction.
\end{proof}

\begin{figure}[tb]
\begin{center}
\includegraphics{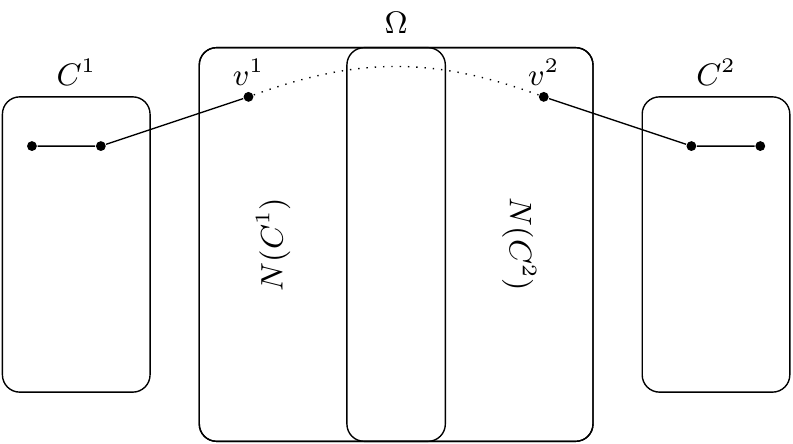}
\caption{Illustration of the proof of Lemma~\ref{lem:Cuniv}. The dotted connection between $v^1$ and $v^2$
may be realized through a third component.}
\label{fig:ed:symdiff}
\end{center}
\end{figure}

Since our Reduction Rule removes from $B$ vertices that are fully adjacent to some active component,
we infer that we can enumerate active components as $C_1,C_2,\ldots,C_r$ such that $N_B(C_i) \supseteq N_B(C_j)$ for every $i \leq j$.
Furthermore, since every element in $B$ has neighbours in $Y$ in at least two components by definition,
we have that $N_B(C_1) = N_B(C_2) = B$. Summing up,
\begin{equation}\label{eq:EDchain}
B = N_B(C_1) = N_B(C_2) \supseteq N_B(C_3) \supseteq N_B(C_4) \supseteq \ldots \supseteq N_B(C_r).
\end{equation}

\subsubsection{Obtaining Linkedness}

In order to ``glue'' two $P_3$'s, we use the following notion. We say that two active components $C^1$ and $C^2$ are \emph{linked} if for every two vertices $v^1 \in B \cap N(C^1)$, $v^2 \in B \cap N(C^2)$ there exists an induced path in $G$ with endpoints $v^1$ and $v^2$
and all internal vertices in $V(G) \setminus (N[C^1] \cup N[C^2])$. We explicitly allow 1-vertex and 2-vertex paths here (if $v^1 = v^2$ or $v^1v^2 \in E(G)$).

We start by observing the following:
\begin{lemma}\label{lem:linked1}
Every pair of active components is linked, except for possibly the pair $\{C_1,C_2\}$.
\end{lemma}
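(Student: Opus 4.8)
\textbf{Proof plan for Lemma~\ref{lem:linked1}.} The plan is to show that any pair of active components $C^a, C^b$ with $\{a,b\} \neq \{1,2\}$ is linked by exhibiting, for arbitrary $v^1 \in B \cap N(C^a)$ and $v^2 \in B \cap N(C^b)$, an induced path from $v^1$ to $v^2$ avoiding $N[C^a] \cup N[C^b]$. First I would dispose of the trivial cases: if $v^1 = v^2$ or $v^1 v^2 \in E(G)$ we are done with a $1$- or $2$-vertex path. So assume $v^1 \neq v^2$ and $v^1 v^2 \notin E(G)$; since $v^1, v^2 \in \pmc$, Theorem~\ref{thm:pmc} gives a component $C \in \mathcal{C}$ covering the non-edge $v^1 v^2$, i.e. $v^1, v^2 \in N(C)$, and a shortest path $P$ from $v^1$ to $v^2$ with all internal vertices in $C$ would be the desired induced path, provided $C \notin \{C^a, C^b\}$ and no internal vertex of $P$ lies in $N(C^a) \cup N(C^b)$. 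The internal vertices of $P$ lie in $C$, so if $C \notin \{C^a, C^b\}$ then no internal vertex belongs to $C^a$ or $C^b$; but an internal vertex could still belong to $N(C^a) \cup N(C^b) \subseteq \pmc$ — except internal vertices of $P$ lie in $C \subseteq V(G) \setminus \pmc$, so this cannot happen. Hence the only obstruction is $C \in \{C^a, C^b\}$.

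So the heart of the argument is: \textbf{why can we pick the covering component $C$ outside $\{C^a, C^b\}$?} Here is where the hypothesis $\{a,b\} \neq \{1,2\}$, the chain~\eqref{eq:EDchain}, and the $P_6$-freeness must enter. Since $\{a,b\} \neq \{1,2\}$, at least one of $a,b$ — say $b$ — satisfies $b \geq 3$. Use the chain: $N_B(C^b) \subseteq N_B(C_2) = B$, and more usefully $N_B(C_1) = N_B(C_2) = B \supseteq N_B(C^b)$, so both $C_1$ and $C_2$ dominate (in $B$) everything that $C^b$ does. The idea is to reroute through $C_1$ or $C_2$: since $v^1 \in B = N_B(C_1) = N_B(C_2)$ and $v^2 \in B = N_B(C_1) = N_B(C_2)$, both $v^1$ and $v^2$ have neighbours in $Y$ inside $C_1$ and inside $C_2$. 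Pick the index $t \in \{1,2\}$ with $C_t \notin \{C^a, C^b\}$ — this is possible precisely because $\{a,b\} \neq \{1,2\}$ guarantees at least one of $C_1, C_2$ is distinct from both $C^a$ and $C^b$ (if $\{a,b\} = \{1, j\}$ with $j \geq 3$, take $t = 2$; if $a, b \geq 3$, either of $t=1,2$ works). Now I would build the path: take $u^1 \in N(v^1) \cap Y \cap C_t$ and $u^2 \in N(v^2) \cap Y \cap C_t$, and connect $u^1$ to $u^2$ by a shortest path inside $C_t$; prepending $v^1$ and appending $v^2$ yields a walk $v^1, u^1, \ldots, u^2, v^2$ with internal vertices in $C_t$.

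The remaining issue — and the step I expect to be the main obstacle — is \textbf{ensuring this rerouted path is induced and avoids $N[C^a] \cup N[C^b]$}. Its internal vertices lie in $C_t \subseteq V(G) \setminus \pmc$, hence are not in $\pmc \supseteq N(C^a) \cup N(C^b)$; and $C_t \notin \{C^a, C^b\}$ means they are not in $C^a \cup C^b$ either, so they avoid $N[C^a] \cup N[C^b]$. For inducedness, the internal segment is a shortest path in $C_t$, hence induced; the only possible chords are between $\{v^1, v^2\}$ and internal vertices, or the $v^1 v^2$ non-edge (which we assumed). A chord from $v^1$ to an internal vertex $w$ of the $u^1$–$u^2$ segment would mean $v^1 \in N(C_t)$ witnessed deep inside, but this need not be forbidden directly — this is exactly where I would invoke $P_6$-freeness combined with the $P_3$'s guaranteed by non-applicability of the Reduction Rule: if $v^1$ had a neighbour far along the $C_t$-path we could instead shorten, and any genuinely problematic configuration (a long induced path) would contradict $P_6$-freeness, so after taking $u^1, u^2$ to be chosen to make the $v^1$-to-$v^2$ path through $C_t$ as short as possible (an induced path of length $\geq 3$ would be a forbidden $P_4$ in $C_t$ plus endpoints, or we argue directly it has at most, say, $4$ vertices), we get the induced path we need. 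I would handle this by picking $u^1, u^2$ so that the whole path $v^1 u^1 \cdots u^2 v^2$ is a shortest $v^1$–$v^2$ path among those with internal vertices in $C_t$; shortest paths are induced, and the length bound is not even needed — inducedness plus the vertex-avoidance established above finishes the proof.
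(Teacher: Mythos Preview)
Your core idea is correct and matches the paper's one-line proof: since $\{a,b\} \neq \{1,2\}$, one of $C_1,C_2$---say $C_t$---is distinct from both $C^a$ and $C^b$, and because $v^1,v^2 \in B = N_B(C_t)$ you can take a shortest $v^1$--$v^2$ path with internal vertices in $C_t$; such a path is induced and its internal vertices lie in $C_t \subseteq V(G)\setminus\pmc$, hence outside $N[C^a]\cup N[C^b]$.

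Two minor points: your first paragraph (via Theorem~\ref{thm:pmc}) and your worries about chords and $P_6$-freeness are unnecessary---this lemma uses only the chain~\eqref{eq:EDchain}, not $P_6$-freeness, and a shortest path through $C_t$ is automatically induced with no further argument. Also, you should write $u^i \in N(v^i)\cap C_t$, not $u^i \in N(v^i)\cap Y\cap C_t$: the chain gives $v^i \in N(C_t)$ but does not guarantee a $Y$-neighbour of $v^i$ in $C_t$, and you don't need one.
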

\begin{proof}
By~\eqref{eq:EDchain}, for every other pair $\{C_i,C_j\}$, we can use either $C_1$ or $C_2$ to route the desired path.
\end{proof}

Our goal now is to ensure that also $\{C_1,C_2\}$ are linked.
The following lemma uses essentially the same arguments as Claim~\ref{n:two-linked} of Section~\ref{sec:nukes}.

\begin{lemma}\label{lem:linked2}
If two active components $C^1$ and $C^2$ satisfy $N(C^1) \cup N(C^2) \neq \pmc$, then they are linked.
\end{lemma}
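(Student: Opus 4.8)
The plan is to mimic the proof of Claim~\ref{n:two-linked} (and its companion Claim~\ref{n:free-w}) in the nuke setting, adapting the ``route through a free vertex'' idea to the efficient-domination bookkeeping. First I would fix a vertex $w \in \pmc \setminus (N(C^1) \cup N(C^2))$, which exists by the hypothesis $N(C^1) \cup N(C^2) \neq \pmc$. Then, given $v^1 \in B \cap N(C^1)$ and $v^2 \in B \cap N(C^2)$, I want to exhibit an induced path from $v^1$ to $v^2$ with all internal vertices outside $N[C^1] \cup N[C^2]$. The trivial cases $v^1 = v^2$ and $v^1 v^2 \in E(G)$ are handled by a $1$- or $2$-vertex path, so assume $v^1 \neq v^2$ and $v^1 v^2 \notin E(G)$.

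Next I would do a case split exactly as in Claim~\ref{n:two-linked}. If there is a component $C \in \mathcal{C}$ with $C \notin \{C^1, C^2\}$ such that $v^1, v^2 \in N(C)$, then a shortest $v^1$--$v^2$ path through $C$ is induced and its internal vertices lie in $C$, hence outside $N[C^1] \cup N[C^2]$ (here I would note $C$ being a different component means $C \cap N[C^i] = \emptyset$ since distinct components of $G - \pmc$ are non-adjacent, and $v^1, v^2$ themselves lie in $\pmc$, so only the endpoints touch $N[C^i]$, which is fine). Otherwise, I route through $w$: since $w \in \pmc$ and $v^1 \in \pmc$, either $v^1 w \in E(G)$, giving the edge $P^{v^1} = v^1 w$, or by Theorem~\ref{thm:pmc} the non-edge $v^1 w$ is covered by some component $C$, and since $w \notin N(C^1) \cup N(C^2)$ that component $C$ is neither $C^1$ nor $C^2$; take $P^{v^1}$ to be a shortest $v^1$--$w$ path with internal vertices in that $C$. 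Define $P^{v^2}$ symmetrically. Concatenating $P^{v^1}$ and $P^{v^2}$ at $w$ yields a walk from $v^1$ to $v^2$; I must check it is an induced path with the right internal vertices.

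The main obstacle I anticipate is precisely this last verification: ensuring the concatenation is \emph{induced} and that no internal vertex lands in $N[C^1] \cup N[C^2]$. The internal vertices of $P^{v^1}$ (besides $w$) lie in a component $C \neq C^1, C^2$, likewise for $P^{v^2}$ in a component $C' \neq C^1, C^2$; and $w \in \pmc \setminus (N(C^1) \cup N(C^2))$ so $w \notin N[C^1] \cup N[C^2]$ and $w$ is not adjacent to any vertex of $C^1$ or $C^2$. For the induced property, the only possible chord is between an internal vertex of $P^{v^1}$ and an internal vertex of $P^{v^2}$; but internal vertices of $P^{v^1}$ sit in component $C$ and those of $P^{v^2}$ in component $C'$, and — as in Claim~\ref{n:two-linked} — I would argue $C \neq C'$ because no component other than $C^1, C^2$ can have both $v^1$ and $v^2$ in its neighbourhood (that was exactly the case we assumed does not occur), so $C$ and $C'$ are distinct components of $G - \pmc$ and hence non-adjacent, killing any chord. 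A chord from $v^1$ to an internal vertex of $P^{v^2} \subseteq C' \cup \{w\}$ is also impossible since $v^1 \notin N(C')$ (again by the case assumption) and $v^1 w$ being a non-edge was the branch we are in. This closes the argument; the proof is then essentially a transcription of Claim~\ref{n:two-linked} with $\pmc \setminus N(C_1 \cup C_{i_0})$ replaced by $\pmc \setminus (N(C^1) \cup N(C^2))$ and ``nuked components'' replaced by ``active components''.
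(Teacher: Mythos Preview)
Your proposal is correct and follows essentially the same approach as the paper's own proof: fix a free vertex $w \in \pmc \setminus (N(C^1)\cup N(C^2))$, dispose of the trivial cases and the case where a third component covers both $v^1$ and $v^2$, and otherwise route each $v^i$ to $w$ either by a direct edge or through a covering component $D^i \notin \{C^1,C^2\}$, observing that $D^1 \neq D^2$ because no third component sees both endpoints. Your verification of the induced property is in fact more explicit than the paper's; the only minor imprecision is the phrase ``$v^1 w$ being a non-edge was the branch we are in,'' which does not quite cover the mixed case $v^1 w \in E(G)$, $v^2 w \notin E(G)$ --- but there the edge $v^1 w$ is already part of the concatenated path and hence not a chord, so nothing breaks.
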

\begin{proof}
Let $w \in \pmc \setminus (N(C^1) \cup N(C^2))$ and consider two vertices $v^1 \in B \cap N(C^1)$, $v^2 \in B \cap N(C^2)$.
If $v^1 = v^2$ or $v^1v^2 \in E(G)$, then we are trivially done.
Furthermore, if there exists a component $C \in \mathcal{C} \setminus \{C^1,C^2\}$ with $v^1,v^2 \in N(C)$, then we are done as well
by taking a shortest path from $v^1$ to $v^2$
with all internal vertices in $C$.

In the remaining case, we start with connecting for $i=1,2$ the vertex $v^i$ with $w$ by an induced path $P^i$ as follows:
if $v^iw \in E(G)$, then we take $P^i$ to be this edge only, while otherwise we take a component $D^i \in \mathcal{C}$ with
$v^i,w \in N(D^i)$ (whose existence is promised by Theorem~\ref{thm:pmc}) and take as $P^i$ a shortest path from $v^i$ to $w$
with internal vertices in $C^i$. Note that $D^i \notin \{C^1,C^2\}$, since $w \in N(D^i)$. Furthermore, $v^{3-i} \notin D^i$,
as no component other than $C^1$ and $C^2$ can neighbour both $v^1$ and $v^2$. Consequently, $D^1 \neq D^2$, and the concatenation
of $P^1$ and $P^2$ gives the desired path from $v^1$ to $v^2$.
\end{proof}

By Lemma~\ref{lem:linked2}, the pair $\{C_1,C_2\}$ is linked unless $N(C_1) \cup N(C_2) = \pmc$.
However, if this is the case, by Theorem~\ref{thm:pmc} we have that both $N(C_1) \setminus N(C_2)$ 
and $N(C_2) \setminus N(C_1)$ are nonempty. By Lemma~\ref{lem:Cuniv}, there exists $i \in \{1,2\}$
and a vertex $v^i \in \pmc$ that is fully adjacent to $C_i$. Consequently, every efficient dominating
set consistent with $(X_0,Y)$ contains exactly one vertex of $C_i$: it needs to contain at least one
to dominate $Y \cap C_i$, but at most one since every vertex of $C_i$ dominates $v^i$.

We branch into $|Y \cap C_i|$ directions, guessing the vertex from $Y \cap C_i$ that belongs to the solution, and putting it
into $X_0$. Furthermore, in every branch we remove from $Y$ all vertices of $Y \cap C_i$.
In every subcase, $C_i$ is no longer an active component, but witnesses that every two other components that remain active
are linked: since $B \subseteq N(C_i)$,  we can always route a path between the desired endpoints through $C_i$.

By the above analysis and branching step, we can assume henceforth that any pair of active components is linked.

\subsubsection{Branching on Bad Vertices}

Partition $Y$ into $Y_1 = \{y \in Y: |N_B(y)| \geq |B|/16\}$ and $Y_2 = Y \setminus Y_1$.
Let $Y_1^\ast$ be the set of vertices $y \in Y_1$ for which the addition of $y$ to the solution (i.e., to $X_{0}$) and the subsequent exhaustive application of the Reduction Rule reduces $B$ to an empty set. Let $Y_1^\circ = Y_1 \setminus Y_1^\ast$.

If we knew that some vertex of $Y_1^\ast$ belongs to the solution, we could just guess it and the Reduction Rule
would reduce the set $B$ completely. In this section we focus on the analysis of the set of ``bad'' vertices $Y_1^\circ$,
showing that any such vertex also gives ground to a good branching --- but in a completely different fashion.

Let $y \in Y_1^\circ$. Assume that if we add $y$ to $X_0$ and exhaustively apply the Reduction Rule,
we shrink the set $Y$ to $Y^\circ$ and $B$ to $B^\circ \neq \emptyset$.
We claim the following:
\begin{lemma}\label{lem:bad1}
For every $z \in N[y] \cap Y$, it holds that $N_B(y) \subseteq N(z)$ or $B^\circ \subseteq N(z)$.
\end{lemma}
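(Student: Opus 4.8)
The statement is a dichotomy: for each neighbour $z$ of $y$ that still lies in $Y$, either $z$ sees all of $N_B(y)$, or $z$ sees all of $B^\circ$. I will prove this by contradiction, extracting a $P_6$ in $G$ whenever $z$ misses a vertex $a\in N_B(y)$ \emph{and} misses a vertex $b\in B^\circ$. The idea is to build two ``sticking-out'' $P_3$'s — one reaching into a component on the ``$a$ side'' and one reaching into a component on the ``$b$ side'' — and glue them together through $y$ and $z$, using that $a$ and $b$ are in $B$ (resp.\ $B^\circ$) and hence have neighbours in at least two active components, together with the non-applicability of the Reduction Rule.

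\textbf{Key steps.} First I would set up the situation: suppose $z\in N[y]\cap Y$, $a\in N_B(y)\setminus N(z)$, and $b\in B^\circ\setminus N(z)$. Note $b$ still belongs to $B$ after adding $y$, so in particular $b\notin N[y]$ and $b$ is not dominated by $y$; moreover $b\in A\setminus N(y)$ and $b\notin N(z)$ since $z\in N[y]$ would otherwise place... — more carefully, I need $b\notin N(y)$ because $b\in B^\circ\subseteq A\setminus N[y^{\mathrm{new}}]$, where $y$ has just been added to $X_0$; so $y$ does not dominate $b$, giving $yb\notin E(G)$. Second, since the Reduction Rule is applied exhaustively at the current node, for the vertex $a\in B$ and its neighbour (I will pick a neighbour of $a$ in $Y$ lying in an active component not containing... — here I use $a\in B$, so $a$ has neighbours in $Y$ in at least two active components; pick one, say $u\in C(u)\cap Y\cap N(a)$ with $u\ne$ whatever we route through), the Reduction Rule not firing yields $w_a\in (Y\cap N(u))\setminus N(a)$ with $\{a,u,w_a\}$ an induced $P_3$, $w_a\in C(u)$. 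I do the symmetric thing for $b\in B^\circ$ inside the shrunk instance $(X_0\cup\{y\},Y^\circ)$, where the Reduction Rule is again exhausted, obtaining an induced $P_3$ $\{b,u',w_b\}$ with $u',w_b$ in some active component $C(u')$. Third, I pick the two components $C(u)$ and $C(u')$ to be \emph{distinct} — possible because $a$ and $b$ each have neighbours in at least two active components, so I have freedom to avoid a clash; here I may also need them distinct from any component used for the linking path. Fourth, I invoke \emph{linkedness} of the active components (established just before this lemma: every pair of active components is linked) to connect a vertex of $B\cap N(C(u))$ to a vertex of $B\cap N(C(u'))$ by an induced path avoiding $N[C(u)]\cup N[C(u')]$; I want that linking path to pass through $y$ and $z$. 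The cleanest route: $a$ itself lies in $B\cap N(C(u))$ and $b$ in $B\cap N(C(u'))$ — wait, $b\in B^\circ$ which need not equal $B$; I should instead connect via linkedness directly and splice in $y,z$. Concretely, the path $a$–$w_a$–$u$–$\dots$ I will orient as $w_a$–$u$–$a$–$y$–$z$–$b$–$u'$–$w_b$, checking it is induced: $w_a,u\in C(u)$; $a\in\pmc$ adjacent to $u$ and $y$, non-adjacent to $z$ (by choice of $a$), non-adjacent to $b$ ($a\in\pmc$, $b\in\pmc$, and $ab\in E$ would... actually $a,b\in\pmc$ are in the same clique of the triangulation so $ab$ could be an edge) — \emph{this is the subtle point}. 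If $ab\in E(G)$ I shortcut to $w_a$–$u$–$a$–$b$–$u'$–$w_b$, still a $P_6$ provided $w_a,u\notin N(b)$ and $w_b,u'\notin N(a)$; the latter hold because $w_a,u\in C(u)$, $w_b,u'\in C(u')$, these components are distinct and $a,b\in\pmc$ only... hmm, a vertex of $\pmc$ can be adjacent to vertices of several components. So I must instead use that $b\notin N(C(u))$ and $a\notin N(C(u'))$, which I can \emph{arrange} when selecting $C(u),C(u')$ via the chain \eqref{eq:EDchain} and the fact that $a,b\in B$ have small-index witnesses; alternatively route the glue entirely through $y$: since $ay,by\in E$ or $b$ is connected to $y$ via $z$... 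Actually $yb\notin E$ and $yz\in E$ — but is $zb\in E$? We assumed $b\notin N(z)$, so $zb\notin E$; then $w_a$–$u$–$a$–$y$–$z$ has $z$ a dead end. Correct glue: use $z\in N[y]$. If $z=y$, then the claim is about $N[y]\ni y$ itself, and $N_B(y)\subseteq N(y)$ trivially, so WLOG $z\ne y$, $zy\in E(G)$, and we need $z$ adjacent into exactly one of the two $P_3$-chains. Here's the resolution: we only need \emph{one} long induced path, and we get to choose whether to attach $z$ on the $a$-side or $b$-side; since $z$ misses $a$ and $b$, attach $z$ as a pendant beyond $y$ only if it does not create chords — if $z$ has a neighbour among $\{w_a,u,a\}$ put it there, else among $\{b,u',w_b\}$; in either case trim to get an induced $P_6$ using $z,y$ and one of the two $P_3$'s (of length $3+2+$ the $P_3 = 6$).

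\textbf{Main obstacle.} The real difficulty, as the scratch work above shows, is \emph{bookkeeping the chords}: ensuring the two chosen active components are distinct and can be picked so that $a\notin N(C(u'))$ and $b\notin N(C(u))$ (exploiting the nested chain \eqref{eq:EDchain} and that $a,b$ lie in $N_B$ of the two dominant components $C_1,C_2$), handling the possible edge $ab$ between two $\pmc$-vertices, and correctly placing $z$ — which lies in $N[y]$ and misses both $a$ and $b$ — so that $\{z,y\}$ together with exactly one of the two $P_3$'s forms a genuine induced $P_6$. Everything else (existence of the $P_3$'s from exhausted Reduction Rule, existence of the linking path from linkedness, $yb\notin E$ from $b\in B^\circ$) is routine given the machinery already developed in the section. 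I expect the write-up to proceed: (1) reduce to $z\ne y$ and set up $a,b$; (2) produce the two $P_3$'s; (3) choose distinct components via \eqref{eq:EDchain}; (4) use linkedness to route through $y$; (5) argue the resulting walk is an induced path on $\ge 6$ vertices, contradicting $P_6$-freeness.
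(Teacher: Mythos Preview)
Your contradiction setup is right, and you correctly see that one $P_3$ should come from the shrunk instance $(X_0\cup\{y\},Y^\circ)$ at the vertex $b\in B^\circ$. The gap is on the $a$-side: you try to build a second $P_3$ $\{a,u,w_a\}$ reaching into an \emph{arbitrary} active component, and then you cannot splice $y$ and $z$ in without creating unchecked chords or broken edges (as you yourself note, $zb\notin E$ kills your eight-vertex path, and the ``attach $z$ as a pendant'' idea is not an argument). None of the case analysis you sketch for choosing $C(u),C(u')$ via \eqref{eq:EDchain} is actually carried out, and in fact there is no obvious reason why $b$ should miss your $u,w_a\in Y$.

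The missing idea is that the $a$-side $P_3$ is simply $z$--$y$--$a$. Indeed $a\in N_B(y)$ gives $ay\in E$; $z\in N[y]\cap Y$ with $z\neq y$ (the case $z=y$ is trivial since then $N_B(y)\subseteq N(z)$) gives $yz\in E$; and $a\notin N(z)$ by assumption. Crucially, $y$ and $z$ both lie in the component $C(y)$, so they already play the role of your $u,w_a$. Now pick $C_q\neq C(y)$ among the components witnessing $b\in B^\circ$, take $s\in N(b)\cap Y^\circ\cap C_q$ and $t\in (N(s)\cap Y^\circ)\setminus N(b)$ from the Reduction Rule in the shrunk instance, and invoke linkedness of $C(y)$ and $C_q$ to connect $a\in B\cap N(C(y))$ and $b\in B\cap N(C_q)$ by an induced path whose internal vertices avoid $N[C(y)]\cup N[C_q]$. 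The concatenation $z$--$y$--$a$--$[\text{link}]$--$b$--$s$--$t$ is then induced: $s,t\in Y^\circ\subseteq Y\setminus N^2[y]$ forces $s,t\notin N(a)$ (since $a\in N(y)$); $y,z\in C(y)$ and $s,t\in C_q$ are in different components of $G-\Omega$; $b\notin N(y)$ because $b\in B^\circ$ survives after $y$ is added to $X_0$; $b\notin N(z)$ by assumption; and the link's internal vertices miss $y,z,s,t$ by construction. This gives an induced path on at least six vertices.

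So your framework is sound, but the specific construction you attempt does not close; the point you are missing is to take the $a$-side component to be $C(y)$ itself and use $y,z$ directly rather than hunting for a fresh $u,w_a$.
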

\begin{proof}
See the left panel of Fig.~\ref{fig:ed:bad1} for an illustration of the proof.
Fix a vertex $z$ as in the statement, and assume the contrary: there exist $p \in N_B(y) \setminus N(z)$
and $q \in B^\circ \setminus N(z)$. Note that $z \neq y$, as $p \in N_B(y)$.
Since $q \in B^\circ$, there exists at least two components of $\mathcal{C}$ that contain vertices of $N(q) \cap Y^\circ$.
Let $C_q$ be one of these components that is different from $C(y)$, and let $s \in N(z) \cap C_q \cap Y^\circ$.
Since Reduction Rule does not trigger on $q$ and $s$ after $y$ has been put into $X_0$, there exists $t \in (N(s) \cap Y^\circ) \setminus N(q)$; clearly, $t$ lies also in $C_q$.

Observe that $q,s,t$ induce a $P_3$, while $s$ and $t$ are not adjacent to $p \in N_B(y)$ as $s,t \in Y^\circ$. 
Furthermore, $p,y,z$ induce a $P_3$, while $y$ and $z$ are not adjacent to $q$.
Since $C_q$ and $C(y)$ are linked, we can connect $p$ and $q$ by an induced path avoiding $N[C_q] \cup N[C(y)]$, giving together
with the aforementioned $P_3$'s an induced path on at least six vertices, a contradiction.
\end{proof}

\begin{figure}[tb]
\begin{center}
\includegraphics{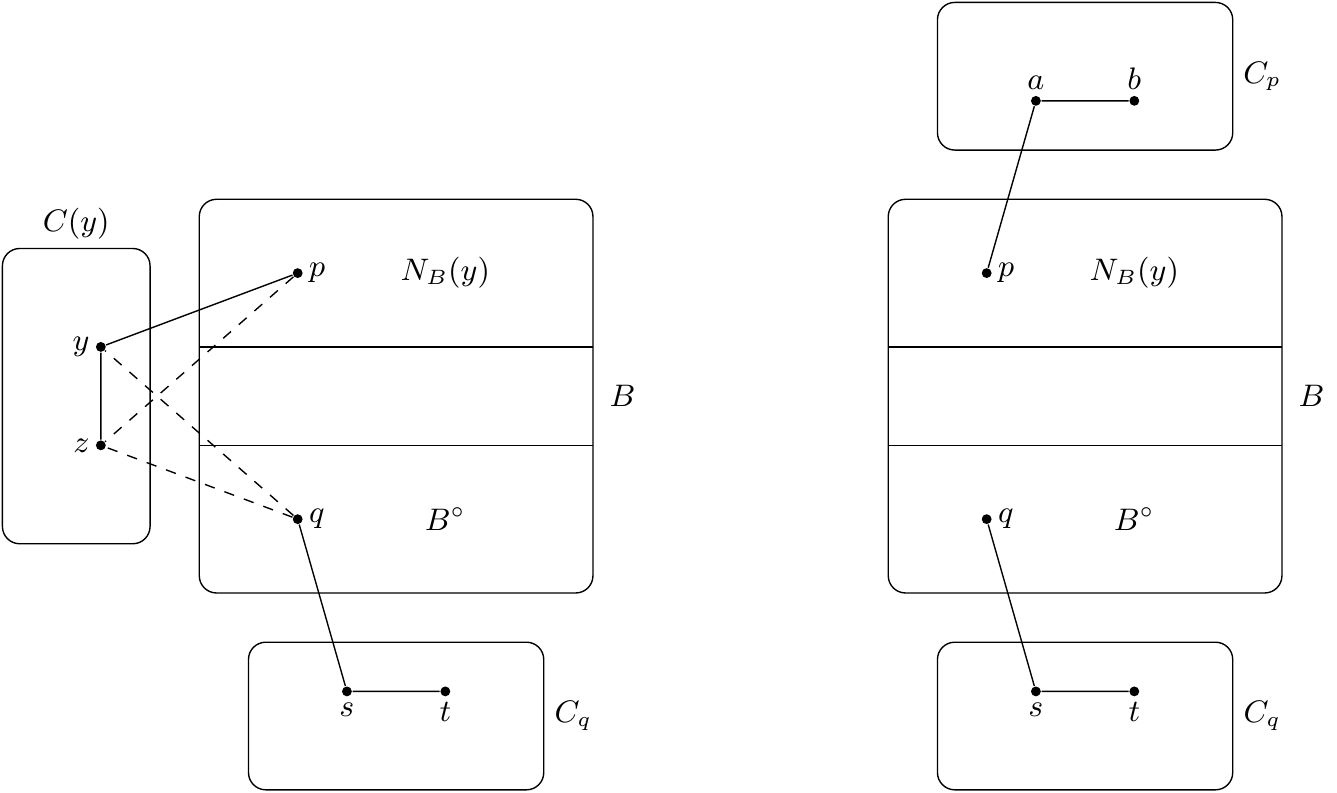}
\caption{Illustration of the proof of Lemma~\ref{lem:bad1} (left) and of Lemma~\ref{lem:bad2} (right).}
\label{fig:ed:bad1}
\end{center}
\end{figure}

Lemma~\ref{lem:bad1} allows us to branch into two directions, deciding whether the element of the sought efficient dominating set
that dominates the vertex $y$ also dominates the set $N_B(y)$ or the set $B^\circ$. 
That is, in the first subcase, we delete from $Y$ all vertices of $N(N_B(y)) \setminus C(y)$, while
in the second subcase, we delete from $Y$ all vertices of $N(B^\circ) \setminus C(y)$. 
We claim that in both subcases, after applying exhaustively the Reduction Rule, the size of $B$ decreased
at least by a multiplicative factor of $1-1/16$.

Clearly this is the case in the first subcase, as then $N_B(y)$ is removed from $B$ and $|N_B(y)| \geq |B|/16$ since
$y \in Y_1$. We claim the following:

\begin{lemma}\label{lem:bad2}
In the second subcase, the Reduction Rule also removes the entire set $N_B(y)$ from $B$.
\end{lemma}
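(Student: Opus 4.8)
The plan is to argue by contradiction, mimicking the proof of Lemma~\ref{lem:bad1} but now exploiting that in the second subcase every vertex of $B^\circ$ becomes fully adjacent to the active component $C(y)$, so that the Reduction Rule must fire on all of $N_B(y)$ through $C(y)$. Concretely, suppose that after deleting $N(B^\circ) \setminus C(y)$ from $Y$ and applying the Reduction Rule exhaustively, some vertex $p \in N_B(y)$ survives in $B$. First I would record what the deletion achieves: for every $q \in B^\circ$, all of $N(q) \cap Y$ now lies in $C(y)$ (the vertices of $N(q)$ in other components have been removed), and $q$ still has a neighbour in $Y$, so in particular $N(q) \cap Y \subseteq C(y)$ and this set is nonempty. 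Also $p$ still has neighbours in $Y$ in at least two components, one of which is $C(y)$ (since $p \in N_B(y)$ and $y \in C(y)\cap Y$ — here I should double-check that $y$ itself is still in $Y$, or more safely, that $p$ has \emph{some} neighbour in $Y \cap C(y)$; this follows because the only vertices removed from $C(y)$-side are none, as we only ever delete from outside $C(y)$). Let $C_p \neq C(y)$ be another active component containing a neighbour $z$ of $p$ in $Y$.

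Next I would extract the two $P_3$'s. On the $C_p$ side: since $p$ survived the Reduction Rule, the rule did not trigger on $p$ and $z$, so there is $t \in (N(z)\cap Y)\setminus N(p)$ with $t \in C_p$, giving an induced $P_3$ namely $p\,z\,t$ whose last two vertices lie in $C_p$ and are non-adjacent to anything in $C(y)$. On the $C(y)$ side I want a $P_3$ anchored at some vertex of $B^\circ$ reaching into $C(y)$ and avoiding $N[C_p]$. Pick any $q \in B^\circ$. Since $q \in B^\circ$, before the current branching $q$ had neighbours in $Y$ in $\geq 2$ components; but now, as noted, $N(q)\cap Y \subseteq C(y)$. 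I claim $q$ is not fully adjacent to the current $C(y)\cap Y$: otherwise the Reduction Rule (the remark after its statement: it triggers when a vertex of $B$ is fully adjacent to an active component) would have removed $q$ from $B$, contradicting $q \in B^\circ$. Hence there is an induced $P_3$ with one endpoint $q$ and the other two vertices in $C(y)$. Finally, $C_p$ and $C(y)$ are linked (every pair of active components is linked at this stage), so there is an induced path from $p$ to $q$ with internal vertices outside $N[C_p]\cup N[C(y)]$. Concatenating the $q$-side $P_3$, this linking path, and the $p$-side $P_3$ $p\,z\,t$ yields an induced path on at least six vertices (one checks the non-adjacencies exactly as in Lemma~\ref{lem:bad1}: $z,t \in C_p$ are non-adjacent to $q$ and to the $C(y)$-vertices; the $C(y)$-vertices are non-adjacent to $p$ since $p \notin N(C_q)$-type reasoning — actually $p$ may be adjacent to $q$, which is fine, that just shortens the path). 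This contradicts $P_6$-freeness, so no such $p$ exists, i.e. all of $N_B(y)$ is removed from $B$.

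The main obstacle I anticipate is pinning down the exact adjacency bookkeeping at the splice points of the concatenated path, in particular verifying that the endpoint $p$ of the $p$-side $P_3$ is non-adjacent to the interior two vertices of the $q$-side $P_3$ (these lie in $C(y)$) — this should follow because $p \in N_B(y)$ was not removed, hence $p$ is still in $B$, and if $p$ were adjacent into $C(y)$ at a vertex that also sees $q$'s $P_3$ we would need that that vertex is the anchor, which it is not; more cleanly, choosing $q$ and its $P_3$ so that the two $C(y)$-vertices are non-neighbours of $p$ can be arranged because $p$ is not fully adjacent to $C(y)$ either (same Reduction-Rule argument applied to $p$). The other delicate point is handling the degenerate cases where the linking path has one or two vertices (i.e. $p = q$ or $pq \in E(G)$); when $p = q$ we are immediately done since then $q = p$ is removed from $B$ by definition of the subcase, and when $pq \in E(G)$ the concatenation still gives a $P_6$ after trimming, so these cases are benign.
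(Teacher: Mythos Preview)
Your overall contradiction strategy is the right one, but the construction of the $q$-side $P_3$ has a genuine gap, and it stems from conflating the two hypothetical scenarios $Y^\circ$ (the reduction after adding $y$ to $X_0$) and $Y'$ (the reduction in the second subcase).

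Concretely, you force the $q$-side $P_3$ to go into $C(y)$, arguing that after the second-subcase deletion $N(q)\cap Y'\subseteq C(y)$. But membership $q\in B^\circ$ is a statement about $Y^\circ$, not about $Y'$: it says $q$ has $Y^\circ$-neighbours in at least two components, none of which need be $C(y)$. So it is perfectly possible that $q$ has no neighbour in $C(y)$ at all, in which case $N(q)\cap Y'=\emptyset$ and your $P_3$ does not exist. Your justification ``otherwise the Reduction Rule would have removed $q$ from $B$, contradicting $q\in B^\circ$'' mixes the two scenarios: the Reduction Rule in the second subcase acts on $B'$, not on $B^\circ$, and there is no reason to expect $q\in B'$ (indeed, typically $q\notin B'$ since all its $Y'$-neighbours lie in a single component). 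Even when $q$ does have a $P_3$ into $C(y)$, you still need its two $C(y)$-vertices to be nonadjacent to $p$; your fix (``$p$ is not fully adjacent to $C(y)$'') gives no control over the particular $P_3$ from $q$, and since $p\in N(y)\subseteq N(C(y))$, $p$ may well be adjacent to both of those vertices.

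The paper's proof sidesteps both issues by building the $q$-side $P_3$ in $Y^\circ$ rather than in $Y'$: since $q\in B^\circ$ by definition, a $P_3$ $q,s,t$ with $s,t\in Y^\circ$ in some component $C_q\neq C_p$ exists automatically, and since adding $y$ to $X_0$ removes $N^2[y]$ from $Y$, we get $s,t\notin N^2[y]\supseteq N(p)$ for free (as $p\in N(y)$). The $p$-side $P_3$ is built in $Y'$ exactly as you do, and the fact that its $C_p$-vertices lie in $Y'\setminus C(y)$ forces them outside $N(B^\circ)$, hence nonadjacent to $q$. So the key idea you are missing is to mix the two scenarios: one $P_3$ from each.
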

\begin{proof}
See the right panel of Fig.~\ref{fig:ed:bad1} for an illustration of the proof.
Assume that this is not the case. Let $B',Y'$ be the reduced sets $B'$ and $Y'$ in the second subcase, and let $p \in N_B(y) \cap B'$.
Since $p \in B'$, there exist at least two components of $\mathcal{C}$ that contain vertices of $N(p) \cap Y'$; let $C_p$
be such component different than $C(y)$, and let $a \in C_p \cap N(p) \cap Y'$. Since the Reduction Rule does not trigger
on $p$ and $a$, given sets $B'$ and $Y'$, there exists $b \in (N(a) \cap Y') \setminus N(p)$; clearly also $b \in C_p$.

Consider now any $q \in B^\circ$. Since $a,b \in Y'$, we have that $a$ and $b$ are not adjacent to $q$.
Furthermore, since $q \in B^\circ$, there exist at least two components of $\mathcal{C}$ that contain vertices of $N(q) \cap Y^\circ$;
let $C_q$ be such a component different than $C_p$, and let $s \in C_p \cap N(q) \cap Y^\circ$. 
Since the Reduction Rule does not trigger on $q$ and $s$ given sets $B^\circ$ and $Y^\circ$, there exists
$t \in (N(s) \cap Y^\circ) \setminus N(q)$; clearly also $t \in C_q$. Furthermore,
$p$ is not adjacent to neither $s$ nor $t$, as $s,t \in Y^\circ$ and $p \in N_B(y)$.

Consequently, the vertices $p,a,b,q,s,t$, together with a path between $p$ and $q$ promised by the fact that $C_p$ and $C_q$
are linked, induce a path on at least six vertices, a contradiction.
\end{proof}

We infer that in both subcases at least a constant fraction of the set $B$ has been reduced.
Consequently, in the branching tree, every leaf-to-root path contains only $\Oh(\log n)$ nodes with a branching
described in this section.

\subsubsection{Final Branch}

We are left with cases $(X_0,Y)$ when $Y_1^\circ = \emptyset$.
Consider the following natural branch: we guess whether there exists an element of the solution in $Y_1^\ast$ or not.
That is, in one branch we remove $Y_1^\ast = Y_1$ from $Y$. In the second branch, we immediately branch
again into $|Y_1^\ast|$ directions, picking a vertex $y \in Y_1^\ast$ and putting it into $X_0$.
By the definition of $Y_1^\ast$, in the latter subcases $B$ is reduced to an empty set, and branching terminates.
Our main claim is that in the first branch, the size of $B$ shrinks by at least a half.

\begin{lemma}\label{lem:final1}
In the first branch, if $B'$ and $Y'$ are the sets $B$ and $Y$ after exhaustive application of the Reduction Rule,
then $|B'| \leq |B|/2$.
\end{lemma}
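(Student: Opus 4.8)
The plan is to argue by contradiction: suppose $|B'| > |B|/2$, so in particular $B' \neq \emptyset$ and $|B| < 2|B'|$. The reason it helps to have deleted \emph{all} of $Y_1 = \{y \in Y : |N_B(y)| \geq |B|/16\}$ is the following bound. Every $y \in Y'$ survived the removal of $Y_1$ and the subsequent exhaustive Reduction Rule, so $y \in Y_2$ and hence $|N_B(y)| < |B|/16$; since $B' \subseteq B$ this gives
$$|N_{B'}(y)| \;\leq\; |N_B(y)| \;<\; \frac{|B|}{16} \;<\; \frac{|B'|}{8} \qquad\text{for every } y \in Y'.$$
In words, no vertex of $Y'$ has many neighbours in $B'$.

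Next I would attach to every $v \in B'$ two induced $P_3$'s sticking out into two distinct components. Since $v \in B'$, the set $N(v) \cap Y'$ meets at least two components of $\mathcal{C}$, say $A_v \neq B_v$; pick $u_v^A \in N(v) \cap Y' \cap A_v$ and, using that the Reduction Rule is no longer applicable to the current instance, pick $w_v^A \in (N(u_v^A) \cap Y') \setminus N(v)$. Because $Y \cap \pmc = \emptyset$, the vertex $w_v^A$ lies in the component $A_v$, and $\{v, u_v^A, w_v^A\}$ induces a $P_3$; do the same for $B_v$ to obtain $u_v^B, w_v^B$. This attaches to each $v \in B'$ four vertices of $Y'$ --- the endpoints $\neq v$ of its two $P_3$'s --- and, crucially, these depend only on $v$, not on any other vertex. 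Now call an ordered pair $(v, v') \in B' \times B'$ \emph{nice} if $v'$ is non-adjacent to all four vertices attached to $v$ and, symmetrically, $v$ is non-adjacent to all four vertices attached to $v'$.

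I claim that if $G$ is $P_6$-free then no ordered pair is nice. Given a nice pair (note $v \neq v'$, since $v$ is adjacent to $u_v^A$), one can always choose $D \in \{A_v, B_v\}$ and $D' \in \{A_{v'}, B_{v'}\}$ with $D \neq D'$. Both $D$ and $D'$ are active, hence linked --- linkedness of every pair of active components was arranged earlier, before $Y$ was split into $Y_1 \cup Y_2$, and it only weakens as $B$ shrinks to $B'$ --- so there is an induced path $Q$ from $v$ to $v'$ whose internal vertices all lie outside $N[D] \cup N[D']$. Concatenating $Q$ with the $P_3$ of $v$ into $D$ and the $P_3$ of $v'$ into $D'$ produces an induced path on at least six vertices: the two $P_3$'s sit in the distinct components $D, D'$, so there are no edges between them; niceness makes $v'$ non-adjacent to the two $D$-vertices and $v$ non-adjacent to the two $D'$-vertices; and the internal vertices of $Q$ are anti-adjacent to all of $D \cup D'$. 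This contradicts $P_6$-freeness, proving the claim. For the count, fix $v \in B'$: by the displayed bound, fewer than $4 \cdot |B'|/8 = |B'|/2$ vertices $v' \in B'$ are adjacent to one of the four vertices attached to $v$, so strictly fewer than $|B'|^2/2$ ordered pairs violate the first half of niceness, and symmetrically strictly fewer than $|B'|^2/2$ violate the second half. Hence strictly fewer than $|B'|^2$ ordered pairs fail to be nice --- yet, by the claim, all $|B'|^2$ of them fail to be nice, a contradiction. Therefore $|B'| \leq |B|/2$.

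The step I expect to require the most care is ``nice pair $\Rightarrow$ induced $P_6$''. One must commit to the two components and the two witnessing $P_3$'s \emph{per vertex} before running the union bound, so that the four attached vertices are a function of $v$ alone; one must check that a valid choice of distinct components $D \neq D'$ among the fixed ones always exists (it does, since $\{A_v, B_v\}$ and $\{A_{v'}, B_{v'}\}$ are $2$-element sets); and one must verify that the glued path is genuinely induced, which uses that distinct components of $G - \pmc$ have no edges between them, that the internal vertices of $Q$ avoid $N[D] \cup N[D']$, and that $Y \cap \pmc = \emptyset$ so that the far endpoints of the $P_3$'s really sit inside their components.
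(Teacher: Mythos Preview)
Your proof is correct and essentially identical to the paper's: the paper phrases the union bound as a random experiment (pick $p,q\in B'$ uniformly and independently, then each of the eight adjacency events has probability $<1/8$), whereas you phrase the same inequality as a direct count of non-nice ordered pairs. The attachment of two $P_3$'s per vertex of $B'$, the choice of distinct components, and the gluing via linkedness are all the same.
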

\begin{proof}
Assume the contrary. Since $Y' \subseteq Y_2$, we have that for every $y \in Y'$
it holds that
$$|N_{B'}(y)| \leq |N_B(y)| \leq |B|/16 < |B'|/8.$$
For every $p \in B'$, pick two components $C^1_p$ and $C^2_p$ that contain a vertex of $N(p) \cap Y'$.
Furthermore, for every $i=1,2$, pick a vertex $v_p^i \in N(p) \cap Y' \cap C^i_p$ and a vertex
$w_p^i \in (N(v_p^i) \cap Y') \setminus N(p)$; the existence of the latter is guaranteed by the fact
that the Reduction Rule does not trigger on $p$ and $v_p^i$, given the sets $B'$ and $Y'$.
Clearly, $p,v_p^i,w_p^i$ induce a $P_3$ in $G$.

Consider the following random experiment: choose two vertices $p,q \in B'$ uniformly independently at random.
Since the choice of $p$ and $q$ is independent, while all vertices $v^i_p,w^i_p,v^i_q,w^i_q$ belong to $Y'$,
the probability that $q$ is adjacent to $v_p^i$ is less than $1/8$. Consequently, with positive probability
$q$ is fully anti-adjacent to $\{v^1_p,w^1_p,v^2_p,w^2_p\}$, while
$p$ is fully anti-adjacent to $\{v^1_q,w^1_q,v^2_q,w^2_q\}$.

Let $p,q$ be a pair for which the aforementioned event happens. By potentially swapping the top indices,
we may assume $C^1_p \neq C^1_q$. However, then $p,v^1_p,w^1_p, q, v^1_q,w^1_q$, together
with an induced path between $p$ and $q$ whose existence is promised by the fact that $C^1_p$ and $C^1_q$ are linked,
gives an induced path in $G$ on at least six vertices, a contradiction.
\end{proof}

By Lemma~\ref{lem:final1}, on every leaf-to-root path a branching node described in this section
may appear only $\Oh(\log n)$ times. This finishes the description of the algorithm, and concludes
the proof of Theorem~\ref{thm:effdom-states}.

\subsection{The Actual Algorithm}

As described in the beginning of the section, the actual algorithm for \EDname{} is a standard dynamic programming algorithm,
using Theorem~\ref{thm:effdom-states} as the source of its state space.
For sake of analysis, we fix $X_0$ to be a maximum weight efficient dominating set in $G$ (if such a set exists).

We first pick any minimal triangulation $\cG$ of $G$ (see~\cite{pinar-survey} for algorithms finding such a triangulation), and compute
its clique tree, which is at the same
time a tree decomposition of $G$. In other words, we compute a tree decomposition $(T,\beta)$ of $G$, where for
every node $t \in V(T)$ the bag $\beta(t)$ is a potential maximal clique of $G$. 

We root $T$ at an arbitrary vertex $r$, and for a node $t$ we denote by $\gamma(t)$ the union of all bags $\beta(s)$, where $s$ ranges over all descendants of $t$ in the tree $T$.
Note that the properties of a tree decomposition ensure that every connected component of $G-\beta(t)$ is either completely
contained in or completely disjoint from $\gamma(t)$.

For every $t \in V(T)$, we invoke Theorem~\ref{thm:effdom-states}, obtaining a family $\mathcal{S}_t$.

For a node $t$, a set $Y \subseteq \gamma(t)$ is called a \emph{partial solution}
if $N[u] \cap N[v] = \emptyset$ for every distinct $u,v \in X$ and $N[X]$ contains $\gamma(t) \setminus \beta(t)$.
Clearly, if $X$ is an efficient dominating set in $G$, then $X \cap \gamma(t)$ is a partial solution.
A partial solution $Y$ is \emph{consistent} with a state $f \in \mathcal{S}_t$ if 
$Y \cap \beta(t) = f^{-1}(\bot)$, every vertex $v \in \beta(t) \cap N(Y)$ is dominated by an element of $Y$
in $f(v)$, and for every vertex $v \in \beta(t) \setminus N[Y]$ the component $f(v)$ is disjoint from $\gamma(t)$.

Our goal is to compute, in bottom-up fashion, for every node $t \in V(T)$ and every state $f \in \mathcal{S}_t$
a partial solution $Y(t,f)$ consistent with $f$ (or $Y(t,f) = \bot$, meaning that no such set has been found),
  with the following property: if $X_0$ exists and
$f$ is consistent with $X_0 \cap \gamma(t)$,
  then $Y(t,f)$ exists and has weight at least the weight of $X_0 \cap \gamma(t)$.
  Note that Theorem~\ref{thm:effdom-states} ensures that if $X_0$ exists then 
  for every node $t$ there exists a state $f^t_0$ consistent with $X_0$,
  and thus also consistent with partial solution $X_0 \cap \gamma(t)$.
Consequently, if $X_0$ exists, then $Y(r,f_0^r)$ is a maximum weight efficient dominating set in $G$.

It remains to describe the computation for fixed values $t$ and $f$ and prove the aforementioned property.
For every child $t'$ of $t$, and every $f' \in \mathcal{S}_{t'}$, we say that the set $Y' := Y(t',f')$ is
\emph{partially consistent} with $f$ if $Y' \cap \beta(t) \cap \beta(t') = f^{-1}(\bot) \cap \beta(t')$,
every vertex $v \in \beta(t) \cap N(Y')$ is dominated by an element of $Y'$ in $f(v)$,
and for every vertex $v \in \beta(t) \setminus N[Y']$ the component $f(v)$ is disjoint from $\gamma(t')$ or equals $\pmc$.
For every child $t'$ of $t$ we compute a maximum weight set $Y_{t'}$
among all sets $Y(t',f')$ for $f' \in \mathcal{S}_{t'}$ that are partially consistent with $f$;
we terminate the computation and set $Y(t,f) = \bot$ if for some child $t'$ the set $Y_{t'}$ does not exist
(i.e., we picked the maximum over an empty set).
A direct check shows that if all sets $Y_{t'}$ has been computed, then
the union of all sets $Y_{t'}$ is a partial solution consistent with $f$,
and we pick it as $Y(t,f)$.

Consider now the state $f_0^t$, and assume that for every child $t'$ of $t$,
the set $Y(t',f_0^{t'})$ exists and has weight at least at the weight of $X_0 \cap \gamma(t')$.
Observe that $Y(t',f_0^{t'})$ is also partially consistent with $f_0^t$.
A direct check from the definition of consistency shows that 
for any $Y(t',f')$ partially consistent with $f_0^t$, the set
$X_0' := (X_0 \setminus \gamma(t')) \cup Y(t',f')$ is also an efficient dominating set.
Since $Y_{t'}$ is chosen to be a set $Y(t',f')$ of maximum weight that is partially consistent with $f_0^t$, and $Y(t',f_0^{t'})$ is one of the candidates,
$X_0'$ is a maximum weight efficient dominating set.
By repeating this replacement argument for every child $t'$ of $t$,
we infer that the computed value $Y(t,f_0^t)$ has weight at least the
weight of $X_0 \cap \gamma(t)$.

Since the computations are polynomial in the size of $G$ and the sizes of the families $\mathcal{S}_t$, using Theorem~\ref{thm:effdom-states} we conclude the proof of Theorem~\ref{thm:effdom}.

\section{Conclusions}\label{sec:conclusion}
We have shown a quasipolynomial-time algorithm for \ISname{} and a polynomial-time algorithm for \EDname{} in $P_6$-free graphs.
Our algorithms rely on a detailed analysis of the interactions between minimal separators, potential maximal cliques, and vertex neighborhoods in $P_6$-free graphs.

In light of these developments, a few open questions seem natural for the \ISname{} problem. First, can \ISname{} on $P_6$-free graphs be solved in polynomial time? 
Second, can the quasipolynomial-time algorithm be generalized to $P_7$-free graphs? Theorems~\ref{thm:hit-sep} and~\ref{thm:hit-pmc} work for $P_7$-free graphs, but
Theorem~\ref{thm:hit-nuke} does not, as can be seen on the following example. Consider the graph $G$ consisting of $k+1$ cliques on $k$ vertices each, denoted $A,C_1,C_2,\ldots,C_k$,
with a vertex $c_i$ distinguished in every clique $C_i$ and made adjacent to a private vertex $a_i \in A$ (see the left panel of Fig.~\ref{fig:counter}). 
$G$ contains many $P_6$'s with middle two vertices in $A$, but no $P_7$. The set $X = \{c_1,c_2,\ldots,c_k\}$ is a nuke in $G$, but no vertex of $G$ is adjacent to more than one vertex of $X$.
Furthermore, if one adds a new vertex $y$ to $G$ that is adjacent to $X$, then $X \cup \{y\}$ becomes a potential maximal clique. Recall that the algorithm for \ISname{} works by picking a central PMC
as a pivot nuke, and branching on vertices adjacent to a constant fraction of the pivot nuke. Hence, with a similar approach on $P_7$-free graphs the algorithm may end up with such a seemingly useless nuke as $X$ in $G$.

We remark that Theorems~\ref{thm:hit-sep} and~\ref{thm:hit-pmc} also do not seem to generalize to less restrictive graph classes. 
Consider a graph $G$ consisting of $k+2$ cliques on $k$ vertices each, denoted $A,B,S_1,S_2,\ldots,S_k$, with every $S_i$ adjacent to a private vertex $a_i \in A$ and $b_i \in B$.
The set $S := \bigcup_{i=1}^k S_i$ is a minimal separator in $G$ of size $k^2$ with $A$ and $B$ as full components, yet no vertex of $G$ contains more than $k$ vertices of $S$ in its neighborhood.
Furthermore, although $G$ contains many $P_7$'s with endpoints and middle vertex in $S$, it does not contain a $P_8$ nor an $E$-graph (a $P_5$ with an additional degree-$1$ vertex attached to the middle vertex of the path).

\begin{figure}[tb]
\begin{center}
\includegraphics{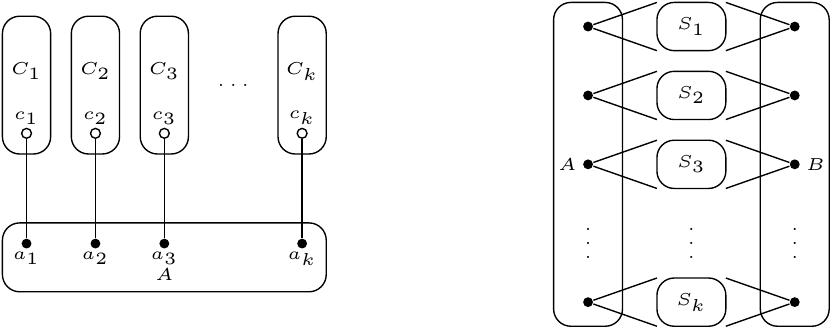}
\caption{Counterexamples to generalizations of Theorem~\ref{thm:hit-nuke} to $P_7$-free graphs (left panel, the nuke vertices are white)
and of Theorem~\ref{thm:hit-sep} to $P_8$-free graphs (right panel). Every rectangle denotes a clique on $k$ vertices.}
\label{fig:counter}
\end{center}
\end{figure}

Finally, we remark that although the polynomial-time algorithm for \EDname{} on $P_6$-free graphs seems to close a research direction (as the problem is NP-hard on $P_7$-free chordal graphs),
it would be interesting to see if one can obtain the same end result using the approach of~\cite{ab1,ab2}, that is, by either obtaining a polynomial-time algorithm for \ISname{} in hole-free graphs or showing that the square of a $P_6$-free graph having an efficient dominating set is perfect.

\section*{Acknowledgements}

The second author acknowledges discussions with Krzysztof Choroma\'{n}ski, Dvir Falik, Anita Liebenau, and Viresh Patel
on the usage of minimal separators and potential maximal cliques in study of the Erd\H{o}s-Hajnal conjecture; 
in particular, the current proofs of Theorems~\ref{thm:hit-sep} and~\ref{thm:hit-pmc},
(that replaced our previous proofs more heavy on case analysis) were partially inspired
by some quantitative proof attempts in the study of subcases of the Erd\H{o}s-Hajnal conjecture.

\bibliographystyle{abbrv}
\bibliography{../p6}

\end{document}